\definecolor{ForestGreen}{rgb}{0.1333,0.5451,0.1333}
\definecolor{DarkRed}{rgb}{0.65,0,0}
\definecolor{Red}{rgb}{1,0,0}
\definecolor{DarkRed}{rgb}{0.5,0.1,0.1}
\definecolor{DarkBlue}{rgb}{0.1,0.1,0.5}
\declaretheorem[numberwithin=section]{lemma}
\declaretheorem[numberlike=lemma]{fact}
\declaretheorem[numberlike=lemma]{proposition}
\theoremstyle{definition}
\declaretheorem[numberlike=lemma]{definition}
\crefname{theorem}{Theorem}{Theorems}
\crefname{section}{Section}{Sections}
\crefname{lemma}{Lemma}{Lemmas}
\crefname{observation}{Observation}{Observations}
\crefname{algorithm}{Algorithm}{Algorithms}
\crefname{step}{Step}{Steps}
\crefname{fact}{Fact}{Facts}
\crefname{claim}{Claim}{Claims}
\crefname{part}{Property}{Properties}
\crefname{issue}{Challenge}{Challenges}
\crefname{tech-issue}{Technical Challenge}{Technical Challenges}
\newcommand{\owncref}[2]{\hyperref[#2]{#1~\ref*{#2}}}
\newcommand{\owntocref}[3]{\hyperref[#2]{#1~\ref*{#2}} to~\ref{#3}}
\newcommand{\maxime}[2][]{\todo[#1]{MF: #2}}
\newcommand{\alex}[2][]{\todo[#1]{AN: #2}}
\newcommand{\magnus}[2][]{\todo[#1]{MH: #2}}
\newcommand{\maxime}[2][]{}
\newcommand{\alex}[2][]{}
\newcommand{\magnus}[2][]{}
\renewcommand{\paragraph}[1]{\medskip\noindent{\bf #1}\xspace}
\newcommand{\Qed}[1]{\ensuremath{\qedsymbol_{\,\,\textnormal{\cref{#1}}}}}
\renewcommand{\epsilon}{\ensuremath{\varepsilon}}
\newcommand{\eps}{\ensuremath{\varepsilon}}
\DeclarePairedDelimiter{\card}{\lvert}{\rvert}
\let\set\relax
\DeclarePairedDelimiter{\set}{\lbrace}{\rbrace}
\DeclarePairedDelimiter{\parens}{\lparen}{\rparen}
\DeclarePairedDelimiter{\floor}{\lfloor}{\rfloor}
\DeclarePairedDelimiter{\ceil}{\lceil}{\rceil}
\let\Pr\relax
\DeclareMathOperator*{\Pr}{\ensuremath{\mathsf{Pr}}}
\DeclareMathOperator*{\Exp}{\ensuremath{{\mathsf{E}}}}
\DeclareMathOperator*{\poly}{\operatorname{poly}}
\newcommand{\calH}{\mathcal H}
\newcommand{\calU}{\mathcal U}
\newcommand{\alg}[1]{\ensuremath{\mathsf{#1}}\xspace}
\newcommand{\trycolor}{\alg{TryColor}}
\newcommand{\slackgeneration}{\alg{GenerateSlack}}
\newcommand{\multitrial}[1][]{\alg{MultiTrial}}
\newcommand{\slackcolor}[1][]{\alg{SlackColor}}
\newcommand{\matching}{\alg{Matching}}
\newcommand{\slicecolor}{\alg{SliceColor}}
\newcommand{\sct}{\alg{SynchColorTrial}}
\newcommand{\model}[1]{\ensuremath{\mathsf{#1}}\xspace}
\newcommand{\CONGEST}{\model{CONGEST}}
\newcommand{\congest}{\model{CONGEST}}
\newcommand{\local}{\model{LOCAL}}
\newcommand{\hK}{\uncolored{K}}
\newcommand{\Vsparse}{\ensuremath{V_{\mathsf{sparse}}}}
\newcommand{\pg}{\ensuremath{p_{\mathsf{g}}}}
\newcommand{\ce}{\ensuremath{C_{\mathsf{ext}}}}
\newcommand{\ca}{\ensuremath{C_{\mathsf{anti}}}}
\newcommand{\cslack}{\ensuremath{\gamma_{\mathsf{slack}}}}
\newcommand{\ov}[1]{\overline{#1}}
\newcommand{\avganti}{\ensuremath{\ov{a}}}
\newcommand{\avgext}{\ensuremath{\ov{e}}}
\newcommand{\spar}{\zeta}
\newcommand{\slack}{\spar}
\newcommand{\err}{\theta}
\newcommand{\errext}{\theta^{\mathsf{ext}}}
\newcommand{\erranti}{\theta^{\mathsf{anti}}}
\newcommand{\Ntwo}{N^{2}}
\newcommand{\td}{\widetilde{d}}
\newcommand{\pud}{\widetilde{\uncolored{d}}}
\newcommand{\te}{\widetilde{e}}
\newcommand{\ta}{\widetilde{a}}
\newcommand{\tA}{\widetilde{A}}
\newcommand{\rC}{\mathsf{C}}
\newcommand{\uncolored}[1]{#1^{\scriptscriptstyle\circ}} 
\newcommand{\colored}[1]{#1^{\scriptscriptstyle\bullet}}
\newcommand{\col}{\ensuremath{\mathcal{C}}}
\newcommand{\pal}[1]{\ensuremath{\Psi(#1)}}
\newcommand{\hatd}{\uncolored{d}}
\newcommand{\ID}{\ensuremath{\mathsf{ID}}\xspace}
\newcommand{\IDs}{\ensuremath{\mathsf{IDs}}\xspace}
\newcommand{\eqdef}{\stackrel{\text{\tiny\rm def}}{=}}
\newcommand{\evt}{\mathcal{E}}
\newcommand{\hashrep}{\calH^{\mathsf{rep}}}
\newcommand{\hashpwi}{\calH^{\mathsf{pwi}}}
\newcommand{\Kmod}{\mathcal{K}_{\mathsf{mod}}}
\newcommand{\Kvery}{\mathcal{K}_{\mathsf{very}}}
\title{Fast Coloring Despite Congested Relays}
\author{
Maxime Flin\\
\small Reykjavik University\\
\small \texttt{maximef@ru.is} \and
Magn\'us M. Halld\'orsson \\
\small Reykjavik University \\
\small \texttt{mmh@ru.is} \and
Alexandre Nolin\\
\small CISPA Helmholtz Center for Information Security\\
\small \texttt{alexandre.nolin@cispa.de}
}
\date{}
\begin{document}

\maketitle

\begin{abstract}
We provide a $O(\log^6 \log n)$-round randomized algorithm for distance-2 coloring in \congest with $\Delta^2+1$ colors.
For $\Delta\gg\poly\log n$, this improves exponentially on the $O(\log\Delta+\poly\log\log n)$ algorithm of [Halld\'orsson, Kuhn, Maus, Nolin, DISC'20].

Our study is motivated by the ubiquity and hardness of local reductions in \congest.
For instance, algorithms for the Local Lov\'asz Lemma [Moser, Tardos, JACM'10; Fischer, Ghaffari, DISC'17; Davies, SODA'23] usually assume communication on the conflict graph, which can be simulated in \local with only constant overhead, while this may be prohibitively expensive in \congest.
We hope our techniques help tackle in \congest other coloring problems defined by local relations.
\end{abstract}

\newpage
\tableofcontents

\newpage


\section{Introduction}

In the \local model of distributed computing, we are given a communication network in the form of an $n$-node graph $G=(V,E)$, where each node has a unique $O(\log n)$-bit identifier.
Time is divided in discrete intervals called rounds, during which nodes send/receive one message to/from each of their neighbors in $G$.
In the \congest model, the same holds, and additionally each message is restricted to $O(\log n)$ bits.

The usual assumption in distributed graph coloring is that the communication graph $G=(V,E)$, through which messages are sent, is the same as the conflict graph $H$, the graph to be colored. When this assumption is loosened, $H$ is typically locally embedded into $G$,
in the sense that conflicting nodes (i.e., neighbors in $H$)
are simulated by nodes of $G$ within distance $c$ for some small constant $c > 0$.
The unlimited bandwidth of \local allows one
to simulate communication on $H$ with merely constant overhead in $G$.
%
In \congest however, bandwidth constraints preclude such local reductions.
This is a major challenge toward understanding the complexity landscape in \congest as such local transformations are ubiquitous in the distributed graph literature.
Notable examples include reductions to MIS \cite[Section 3.9]{barenboimelkin_book}, coloring algorithms based on the Lovász Local Lemma \cite{PS15,CPS17,CHLPU18}, or subroutines working with cluster graphs \cite{GGR20,MU21,FGGKR23}.

We make a step in that direction by studying the \emph{distance-2} $\Delta^2+1$-coloring problem, where $\Delta$ is the maximum degree of $G$.
Namely, we provide a fast algorithm for coloring $G^2$ while communicating on $G$ with $O(\log n)$-bit messages.
That is, each node $v\in V$, select a color in $\set{1, 2, \ldots, \Delta^2+1}$ different from the ones chosen by nodes at distance at most 2 from $v$ in $G$.
The main high-level challenges of distance-2 coloring (not knowing exact degrees nor colors used by neighbors) are somewhat similar to issues occurring naturally in harder coloring problems, such as estimating $c$-degrees (the number of neighbors with color $c$ in their palette) for all colors $c$ in parallel.

Coloring problems are amongst the most intensively studied problems in the distributed graph literature for they capture the main challenges of symmetry breaking (see, e.g., \cite{barenboimelkin_book}).
Symmetry breaking on power graphs appears naturally in numerous settings \cite{KMR01,BEPS,G16,G19,EM19,FGGKR23}.
(See, e.g.,~\cite[Section 1.2]{MPU23} for a recent treatment.)
For instance, it arises naturally when assigning frequencies to antennas in wireless networks.

\paragraph{Our Contribution.}
We provide a $\poly\log\log n$-round randomized algorithm to find a distance-2 coloring of $G$.
Our algorithm uses $\Delta^2+1$ colors, which is a natural analog to $\Delta+1$ at distance-1.

\begin{restatable}{theorem}{ddtheorem}\label{thm:d2}
There is a randomized algorithm for distance-2 coloring any $n$-node graph $G$ with maximum degree $\Delta$, using $\Delta^2+1$ colors, and running in $O(\log^6 \log n)$ rounds of \congest.
\end{restatable}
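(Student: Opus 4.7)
The plan is to follow the shattering paradigm from recent state-of-the-art randomized $\Delta+1$-coloring algorithms, adapted to distance-2 coloring, where the main added difficulty is that each ``logical'' round on the conflict graph $G^2$ must be simulated in \congest on $G$ with $O(\log n)$-bit messages through potentially congested relay nodes. The algorithm has three stages: a preprocessing stage that classifies each node as sparse or dense with respect to $G^2$ and groups dense nodes into almost-cliques; a randomized pre-shattering stage of $O(\log^6\log n)$ rounds that reduces the uncolored subgraph of $G^2$ to components of $\poly\log n$ size w.h.p.; and a deterministic post-shattering stage that finishes each component in $\poly\log\log n$ additional rounds via a network-decomposition style algorithm simulated on $G$.

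For the pre-shattering stage I would handle the two classes separately. Sparse nodes naturally gain slack in their $G^2$-palette from collisions among their distance-$2$ neighbors during a synchronized color trial, and once a constant fraction of slack is available a standard slack-based color trial colors each sparse node with probability $1-O(1/\poly\log n)$ per iteration. Dense nodes are handled inside each almost-clique $K$ of $G^2$: since the diameter of $K$ in $G$ is $O(1)$, a leader in $K$ can cheaply coordinate by precomputing structured ``put'' and ``anti'' color assignments and then running a batched synchronized color trial over the members of $K$. Iterating these two subroutines for $O(\log^6\log n)$ rounds and applying a standard shattering argument yields the desired component-size bound in $G^2$.

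The main obstacle, which I expect to dominate the technical work, is implementing one synchronized color trial on $G^2$ efficiently in \congest through congested relays. When node $v$ tries color $c$, every $u\in N_{G^2}(v)$ must detect a conflict, yet a relay node $w$ between them has only $O(\log n)$ bits per round per edge while potentially mediating $\Theta(\Delta)$ such pairs, and the palette $\{1,\ldots,\Delta^2+1\}$ itself is far too large to transmit. The natural resolution is to have each node sample only $O(\log n)$ random colors from its palette per round, amortizing the cost of learning which palette colors are still available across many trials, and to encode the samples via pairwise-independent hashing into $O(\log n)$-bit digests that relays can aggregate and forward in $O(1)$ rounds. A Chernoff-style concentration argument then shows that the sample contains a usable color with constant probability whenever constant slack is present, which combined with the outer iteration count $O(\log^6\log n)$ yields the overall round bound. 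Finally, the post-shattering stage fits within $\poly\log\log n$ rounds because each residual component of $G^2$ has diameter $\poly\log n$ in $G$ (at most twice that in $G^2$), so a deterministic network decomposition applied on top of the \congest simulation completes the coloring in the required time.
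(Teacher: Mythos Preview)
Your proposal has the right high-level shape (sparse/dense split via an almost-clique decomposition, synchronized trials inside cliques, hashing to cope with relay congestion), but it misidentifies both where the $O(\log^6\log n)$ bound comes from and what the real technical obstacle is.

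First, the round budget: you allocate $O(\log^6\log n)$ iterations to a pre-shattering loop and plan to reduce the uncolored graph to $\poly\log n$-\emph{size} components. The paper does neither. Its randomized phase (slack generation, sparse nodes, colorful matching, outlier selection, SCT, and the degree-reduction routine \slicecolor) runs in only $O(\log\log n)$ rounds total and outputs $O(\log\log n)$ layers each of \emph{maximum uncolored degree} $O(\log n)$ (not small components). The $O(\log^6\log n)$ bound is $O(\log\log n)$ layers times $O(\log^5\log n)$ for the deterministic low-degree $\deg{+}1$-list-coloring of each layer (\cref{lem:small-degree}). A scheme that simply iterates random trials $O(\log^6\log n)$ times has no obvious reason to shatter: dense nodes in a near-clique of $G^2$ have essentially no slack until you do something structural, so the ``probability $1-O(1/\poly\log n)$ per iteration'' claim does not hold for them.

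Second, and more seriously, you treat relay congestion as the main difficulty and assume that once a node can sample from its palette the rest is routine. At distance~2 a node cannot learn its palette, its degree, its external degree, or its anti-degree at all; even \emph{approximating} these to within a constant factor is impossible in general. This breaks the two steps you gloss over: (i) selecting outliers (you need to drop nodes with atypically large $e_v$ or $a_v$, but $v$ does not know these), and (ii) slack-based coloring after SCT (you need $|\pal{v}|$, which $v$ cannot compute). The paper's substance lies exactly here: it introduces \emph{pseudo-degrees} $\td(v),\te_v,\ta_v$ that are computable in $O(1)$ rounds, develops a filtering procedure to select outliers using only these proxies, computes a \emph{colorful matching} so that the \emph{clique palette} $\pal{K}$ becomes a usable surrogate for $\pal{v}$, and proves the key structural lemma that $|\pal{v}\cap\pal{K}|\ge \hatd(v)+\Omega(\te_v+\avgext_K+\avganti_K)$ by balancing four distinct sources of slack. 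Only with this lemma in hand can one sample near-uniformly from $\pal{v}$ via $\pal{K}$ (using representative hash functions for the $O(\log n)$-bandwidth version) and run \slicecolor. Your proposal does not address any of these ingredients, and without them the dense-node phase as you describe it (``a leader precomputes structured assignments'') does not yield linear slack for the remaining uncolored nodes.
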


This is an exponential improvement over the previous best known bound $O(\log n)$ \cite{HKMN20}, as a function of $n$ alone.
Interestingly, for more general power graphs $G^k$ with $k\ge 3$, it is provably hard to verify an arbitrary coloring \cite{FHN20}.
Thus, any $\poly\log\log n$ algorithm coloring $G^k$ when $k \ge 3$ and $\Delta \gg \poly\log\log n$ would need a different approach.

\cref{thm:d2} requires non-constructive pseudorandom compression techniques, so can be viewed as either existential or requiring exponential local computation. However, we give an explicit and efficient algorithm that achieves such a coloring with $O(\log^2 n)$ bandwidth.
We emphasize that even with $O(\log^2 n)$ bandwidth, it is not clear that fast coloring algorithms can be implemented at distance-2.
Our $O(\log^2 n)$-bandwidth algorithm preserves the intuition behind our techniques.
In fact, reducing the bandwidth to $O(\log n)$ is a technical issue which was almost entirely solved by previous work \cite{HNT22}.

\subsection{Related Work}
Coloring has been extensively studied in the distributed literature \cite{SW10,BEPS,barenboimelkin_book,HSS18,CLP20,HKNT22}, and it was the topic of the paper of Linial \cite{linial92} that defined the \local model.
The best round complexity of randomized $(\Delta+1)$-coloring in \local (as a function of $n$ alone) progressed from $O(\log n)$ in the 80's \cite{luby86,alon86,johansson99}, through $O(\log^3 \log n)$ \cite{BEPS,HSS18,CLP20}, to the very recent $\widetilde{O}(\log^2\log n)$ \cite{GG23}.
These algorithms made heavy use of both the large bandwidth and the multiple-message transmission feature of the \local model.

In \congest, Halld\'orsson, Kuhn, Maus, and Tonoyan \cite{HKMT21} gave a $O(\log^5\log n)$-round \congest algorithm, later improved to $O(\log^3 \log n)$ in \cite{HNT22,GK21}.
Very recently, Flin, Ghaffari, Kuhn, and Nolin \cite{FGHKN23} provided a $O(\log^3\log n)$-round algorithm in \emph{broadcast} \congest, in which nodes are restricted to broadcast one $O(\log n)$-bit message per round.
While these algorithms drastically reduced the bandwidth requirements compared to their earlier \local and \congest counterparts, they still use more bandwidth than what distance-2 coloring allows.
Indeed, at distance-2 a node cannot receive a distinct message from each neighbor.

Recent years have seen several results for problems on power graphs in \congest \cite{GP19,HKM20,HKMN20,MPU23,BG23}.
Ghaffari and Portmann \cite{GP19} gave the first sublogarithmic network decomposition algorithm with a mild dependency on the size of identifiers.
Keeping a mild dependency on the size of identifiers is crucial in \congest as a common technique, called shattering, is to reduce the problem to small $\poly\log n$-size instances on which we run a deterministic algorithm, typically a network decomposition algorithm. While the instance size decreases exponentially, identifiers remain of size $O(\log n)$ bit. Hence, deterministic algorithms with linear dependency on the size of identifiers, such as \cite{RG20}, yield no sub-logarithmic algorithms.
The later $O(\log^5 n)$ \congest algorithm by \cite{GGR20} with mild dependency on the ID space was extended by \cite{MU21} to work on power graphs with exponentially large IDs space in time $O(\log^7 n)$.
Very recently, \cite{MPU23} gave a
$O(k^2\log\Delta\log\log n + k^4\log^5\log n)$ randomized \congest algorithm to compute a maximal independent set in $G^k$.
Along the way, \cite{MPU23} extended the faster $\widetilde{O}(\log^3 n)$ network decomposition of \cite{GGHIR23} to power graphs in \congest with a mild dependency on the ID space.

For vertex-coloring at distance two, when $(1+\epsilon)\Delta^2$ colors are available, the problem is much easier, and is known to be solvable in $O(\log^4\log n)$ rounds \cite{HN23}.
The first $\poly(\log n)$-round \congest algorithm for distance-2 $(\Delta^2+1)$-coloring was given in \cite{HKM20}, while a $O(\log \Delta)+\poly(\log\log n)$-round algorithm was given in \cite{HKMN20}.
The original publication of this last result had a higher dependence in $n$, later reduced by improved network decomposition results of \cite{GGHIR23,MPU23} and a faster deterministic algorithm by \cite{GK21}. We state this for later use (see \cref{sec:appendix-small-degree} for more details):
\begin{restatable}
[{\cite[Lemma 3.12+3.15]{HKMN20}} + {\cite[Appendix A]{MPU23}} + \cite{GK21}]
{proposition}{lowDegColoring}
\label{lem:small-degree}
Let $H$ be a subgraph of $G^2$ where $G$ is the communication network, and suppose $\Delta(H)\le \poly\log n$.
Suppose each node $v$, of degree $d_H(v)$ in $H$, knows a list $L(v)$ of $d_H(v)+1$ colors from some color space $\card{\mathcal{U}} \le \poly(n)$.
There is a randomized algorithm coloring $H$ in $O(\log^5 \log n)$ rounds of \congest such that each node receives a color from its list.
\end{restatable}
The best bound known for a deterministic \CONGEST algorithm using $\Delta^2+1$ colors is $O(\Delta^2 + \log^* n)$ rounds \cite{HKM20}. Very recently, \cite{BG23} gave a handful of deterministic coloring algorithms on power graphs, including a $O(\Delta^4)$-coloring algorithm in $O(\log\Delta\cdot\log^* n)$ rounds (which is an adaptation of Linial's algorithm) and a $O(\Delta^2)$-coloring algorithm in $O(\Delta\log\Delta + \log\Delta \cdot\log^* n)$ rounds (which is an adaptation of the $O(\sqrt{\Delta}\poly\log\Delta + \log^* n)$ algorithm of \cite{FHK16,BEG_jacm22,MT22}).

\paragraph{Open Problems.}
Along the way we introduce various tools whose range of application extends to more general embedded coloring problems. In particular, almost all steps of our algorithm work if each $v\in V$ uses colors $\set{1, 2, \ldots, \td(v)+1}$, for $\td(v)$ a locally computable upper bound on degrees.
It would be interesting to know if $\td(v)+1$-coloring could be solved.
The more difficult list variants of this problem, where nodes must adopt colors from lists of size $\Delta^2+1$ or $\td(v)+1$, are also open.
Key aspects of our approach fail for list coloring and, in fact, it is not even known if $\Delta+1$-list-coloring of $G$ is achievable in $\poly\log\log n$ rounds of broadcast \congest.
It would also be interesting to push the complexity of $\Delta^2+1$-coloring of $G^2$ down to $O(\log^* n)$ when $\Delta \ge \poly\log n$, to fully match the state of the art at distance one.
While we conjecture that minor modifications to our algorithm\footnote{Such as reducing the nodes' uncolored degrees to $O(\log n/\log \log n)$ instead of $O(\log n)$, or slightly reducing the number of layers produced by one of the subroutines (\slicecolor).} might be able to reduce its complexity by one or more $\log \log n$ factors, achieving $O(\log^* n)$ should require a quite new approach, as many steps of our algorithm use $\Omega(\log \log \Delta)$ rounds.
From a wider perspective, it would be very interesting to find precise characterizations of local embeddings that can be handled in \congest.

\subsection{Our Techniques in a Nutshell}

In this section, we highlight the main challenges and sketch the main ideas from our work.
Precise definitions are in \cref{sec:prelim} and a more detailed but still high-level overview of our algorithm can be found in \cref{sec:overview}.

\paragraph{Fast Distributed Coloring.}
All sublogarithmic distributed coloring algorithms \cite{HSS18,CLP20,HKMT21,HKNT22,HNT22,FGHKN23} follow the overall structure displayed in \cref{fig:flowchart}.
\begin{wrapfigure}{r}{0.3\textwidth}
    \centering
    \includegraphics[width=0.295\textwidth]{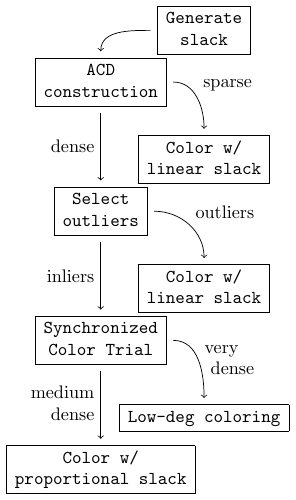}
    \caption{The structure of ultrafast coloring algorithms}
    \label{fig:flowchart}
\end{wrapfigure}
The key concept is the one of \emph{slack}: the slack of a node is the difference between the number of colors available to that node (i.e., not used by a colored neighbor) and its number of uncolored neighbors (see \cref{def:slack}).
Nodes with slack proportional to their uncolored degree can be colored fast. The algorithm uses a combination of creating excess colors (by coloring two neighbors with the same color) and reducing the uncolored degree in order for all nodes to get a slack linear in their uncolored degree.

We first generate \emph{slack} by a single-round randomized color trial. We next partition the nodes into the sparse nodes and dense clusters (called \emph{almost-cliques}). Among the dense clusters, we separate a fraction of the nodes as \emph{outliers}. Both the sparse nodes and the outliers can be colored fast using the \emph{linear} slack available to them. The remaining \emph{inliers} then go through a \emph{synchronized color trial} (SCT), where the nodes are assigned a near-uniform random color which avoids color clashes between nodes in the same cluster. The remaining nodes now should have slack proportional to their number of uncolored neighbors. The ultra-dense clusters need a special treatment, but they induce a low-degree graph. The above structure is necessary for high-degree graphs, while for low-degree graphs one can afford less structured methods.

At the outset, several parts of this schema already exist for distance-2 coloring. In particular, generating slack is trivial, a $\poly(\log\log n)$-round algorithm for coloring $\poly(\log n)$-degree graphs is known from \cite{HKMN20},
and coloring with slack $\Omega(\Delta^2)$ follows from \cite{HN23}.
Almost-clique decompositions (ACD) have been well studied and need only a minor tweak here.
We use a particularly simple form of SCT, introduced for the streaming and broadcast \congest settings \cite{FGHKN22,FGHKN23}: permute the colors of the \emph{clique palette}
-- the set of colors not used within the almost-clique -- and distribute them uniformly at random among the nodes.
We produce the clique palette by giving the nodes data structure access for looking up their assigned color.

\paragraph{Challenges.}
The biggest challenge in deriving efficient algorithms for the distance-2 setting is that there are no efficient methods known (and possibly not existing) to compute (or approximate) basic quantities like the distance-2 degree of a node. One can easily count the number of 2-paths from a given node, but not how many distinct endpoints they have. This seriously complicates the porting of all previous methods from the distance-1 setting, as we shall see.

A related issue is that a node cannot keep track of all the colors used by its distance-2 neighbors, since it has $\Delta^2$ of them but only bandwidth $O(\Delta \log n)$ bits. Hence, it cannot maintain its true palette (the set of available colors), which means that the standard method of coloring with proportional slack \cite{SW10,CLP20} (that can be achieved in $O(\log^* n)$ rounds in distance-1) is not available.

\paragraph{Using Multiple Sources of Slack.}
We use slack from four sources in our analysis. The (usual) initial slack generation step gives the dense nodes slack proportional to their external degree -- their degree to outside of their almost-clique. The method of \emph{colorful matching} \cite{ACK19} provides slack proportional to the average anti-degree of the cluster, where anti-degree counts non-neighbors in one's almost-clique. And finally we get two types of slack for free: the discrepancy between the node's pseudodegree and its true degree on one hand, and the difference between $\Delta^2$ and the pseudodegree on the other hand, where pseudodegrees are easy-to-compute estimates of distance-2 degrees. Only by combining all four sources can we ensure that the final step of coloring with proportional slack can be achieved fast.

\paragraph{Selecting Outliers.}
The outliers are nodes with exceptionally high degree parameters, either high \emph{external degree} (to the outside of the almost-clique) or \emph{anti-degree} (non-neighbors within the cluster). As we cannot estimate their true values, we work with \emph{pseudodegrees}: the number of 2-paths to external neighbors, or how many additional 2-paths are necessary to reach all anti-neighbors.
The selection of outliers is crucial for the success of the last step of the algorithm, where we need to ensure that nodes have true slack proportional to the number of uncolored neighbors.
To select outliers, we use a sophisticated filtering technique, giving us bounds in terms of certain related parameters, that then can be linked to the slack that the nodes obtain.

\paragraph{Coloring Fast with Slack.}
With the right choice of inliers and suitable analysis of SCT, we argue that remaining uncolored nodes have slack proportional to their uncolored degree. We provide a new procedure to color these nodes, extending a method from the first fast \CONGEST algorithm \cite{HKMT21}. It needs to be adapted to biased sampling and to handle nodes with different ranges of slack. It outputs a series of low-degree graphs, which are then colored by the method of \cite{HKMN20}.



\subsection{Organization of the Paper}
After introducing some definitions and results from previous work in \cref{sec:prelim}, we give a detailed overview of the full algorithm in \cref{sec:overview}.
In \cref{sec:dense}, we go over the technical details involving the coloring of dense nodes, assuming a $O(\log^2 n)$ bandwidth.
We defer some lengthy and technical proofs to \cref{sec:slice-color,sec:outliers} to preserve the flow of the paper.
We explain how we reduce the bandwidth to $O(\log n)$ in \cref{sec:congestion}.
Missing details as well as some proofs can be found in appendices.

\section{Preliminaries \& Definitions}
\label{sec:prelim}

\paragraph{Distributed Graphs.}
For any integer $k \ge 1$, let $[k]$ represent the set $\{1, 2, \ldots, k\}$.
We denote by $G=(V, E)$ the communication network, $n = \card{V}$ its number of nodes, and $\Delta$ its maximum degree. The square graph $G^2$ has vertices $V$ and edges between pairs $u,v\in V$ if $\operatorname{dist}_G(u,v) \le 2$.
For a node $v\in V$, we denote its unique identifier by $\ID(v)$.
For a graph $H=(V_H,E_H)$, the neighborhood in $H$ of $v$ is $N_H(v)=\set{u\in V_H: uv\in E_H}$.
A subgraph $K=(V_K, E_K)$ of $H=(V_H,E_H)$ with $V_K \subseteq V_H$ is an \emph{induced} subgraph if $E_K = \set{uv\in E_H: u,v\in V_K}$, i.e., it contains all edges of $E_H$ between nodes of $V_K$.
We call \emph{anti-edge} in $H$ a pair $u,v\in V_H$ such that $uv\notin E_H$, i.e., an edge missing from $H$ (or, equivalently, in the complement of $H$).

The degree of $v$ in $H$ is $d_H(v)=|N_H(v)|$, and we shall denote by $\Ntwo(v)=N_{G^2}(v)$ the distance-2 neighbors of $v$, and the \emph{distance-2 degree} of $v$ by $d(v)=|\Ntwo(v)|$.
We also drop the subscript for distance-1 neighbors and write $N(v)$ for $N_G(v)$.

\paragraph{Distributed Coloring.}
A \emph{partial} $c$-coloring $\col$ is a function mapping vertices $V$ to colors $[c]\cup\set{\bot}$ such that if $uv\in E$, either $\col(u)\neq\col(v)$ or $\bot \in \set{\col(u), \col(v)}$.
The coloring is complete if $\col(v)\neq \bot$ for all $v\in V$ (i.e., all nodes have a color).
A $\deg+1$-list-coloring instance is an input graph $H=(V_H,E_H)$ where each node has a list $L(v)$ of $d_H(v)+1$ colors from some color space $\mathcal{U}$. A valid $\deg+1$-list-coloring is a proper coloring $\col:V_H\to \mathcal{U}$ such that $\col(v)\in L(v)$ for each $v\in V_H$.

Our algorithm computes a monotone sequence of partial colorings until all nodes are colored.
In particular, once a node \emph{adopts} a color, it never changes it.
The palette of $v$ with respect to the current partial coloring $\col$ is $\pal{v}\eqdef [\Delta^2+1]\setminus\col(\Ntwo(v))$, i.e., the set of colors that are not used by distance-2 neighbors.
For a set $S\subseteq V$, we shall denote the uncolored vertices of $S$ by $\uncolored{S}\eqdef\set{v\in S: \col(v)=\bot}$ and, reciprocally, the colored vertices of $S$ by $\colored{S} \eqdef S\setminus \uncolored{S}$.
We shall denote the uncolored (distance-2) degree with respect to $\col$ by $\hatd(v)\eqdef |\uncolored{N_{G^2}}(v)|$.

\subsection{Slack Generation}

A key notion to all fast randomized coloring algorithm is the one of slack.
It captures the number of excess colors: a node with slack $s$ will always have $s$ available colors, regardless of the colors tried concurrently by neighbors.
For our problem, the slack is more simply captured by the following definition.


\begin{definition}[Slack]\label{def:slack}
Let $H$ be an induced subgraph of $G^2$. The slack of $v$ in $H$ (with respect to the current coloring of $G^2$) is
\[ s_H(v) \eqdef |\pal{v}|-\hatd_H(v) \ . \]
\end{definition}

There are three ways a node can receive slack: if it has a small degree originally, if two neighbors adopt the same color, or if an uncolored neighbor is inactive (does not belong to $H$). We consider the first two types of slack \emph{permanent} because a node never increases its degree, and nodes never change their adopted color. On the other hand, the last type of slack is \emph{temporary}: if some inactive neighbors become active, the node loses the slack which was provided by those neighbors.

The sparsity of a node counts the number of missing edges in its neighborhood.
We stress that, contrary to previous work in $\Delta+1$-coloring \cite{CLP20,HKMT21,FGHKN23}, we use the \emph{local sparsity} -- defined in terms of the node's degree $d(v)$ -- as opposed to the global sparsity, instead defined in term of $\Delta$.
This is to separate the contribution to slack of same-colored neighbors from the \emph{degree slack}, $\Delta^2-d(v)$.
While global sparsity measures both, local sparsity focuses on the former.

\begin{definition}[Local Sparsity, \cite{AA20,HKNT22}]
\label{def:sparsity}
The sparsity of $v$ (in the square graph $G^2$) is
\[ \spar_v \eqdef \frac{1}{d(v)}\parens*{\binom{d(v)}{2} - |E(\Ntwo(v))|} \ . \]
A node $v$ is $\spar$-sparse if $\spar_v \ge \spar$; if $\spar_v \le \spar$ it is $\spar$-dense.
\end{definition}

For a node $v$, observe that each time that both endpoints of a missing edge in $N(v)$ are colored the same, the node $v$ gains slack as its uncolored degree decreases by 2 while its palette loses only 1 color. Therefore, when a node has many missing edges in its neighborhood, it has the potential to gain a lot of slack \cite{Reed98,EPS15}. 
This potential for slack is turned into permanent slack by the following simple algorithm (\slackgeneration): each node flips a random coin (possibly with constant bias); each node whose coin flip turned heads picks a color at random and \emph{tries} it, i.e., colors itself with it if none of its neighbors is also trying it (see \cref{sec:pseudo-code} for pseudo-code).
As we state the result with local sparsity (which is in terms of $d(v)$) while nodes try colors in $[\Delta^2+1]$, the next statement has a $d(v)/\Delta^2$ factor compared to previously published versions.

\begin{proposition}[Slack Generation, \cite{Reed98,EPS15,HKMT21}]
\label{lem:slack-generation}
There exists a (small) universal constant $\cslack> 0$ such that after \slackgeneration, w.p.\ $e^{-\Omega(\slack_v \cdot \frac{d(v)}{\Delta^2})}$, node $v$ received slack $\cslack\cdot \slack_v\cdot \frac{d(v)}{\Delta^2}$.
\end{proposition}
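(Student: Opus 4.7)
My plan is to adapt the classical slack-generation analysis of \cite{Reed98,EPS15,HKMT21} to the distance-$2$ setting, where the conflict graph is $G^2$ and the palette is $[\Delta^2+1]$. The key observation is that whenever two distance-$2$ neighbors $u, w$ of $v$ both successfully adopt the same color during \slackgeneration, $\Psi(v)$ shrinks by only one while $\hatd(v)$ shrinks by two, contributing $+1$ to the slack of $v$. Moreover such $u, w$ are automatically non-$G^2$-adjacent (else one of them would fail during \trycolor), so they form an anti-edge in $N_{G^2}(v)$.

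\paragraph{First moment.} Let $A_v$ denote the set of anti-edges of $N_{G^2}(v)$, so $|A_v| = \zeta_v \cdot d(v)$ by \cref{def:sparsity}. For each $\{u,w\} \in A_v$, let $Y_{uw}$ be the indicator that $u, w$ both successfully adopt a common color, and set $X_v = \sum_{\{u,w\} \in A_v} Y_{uw}$. Up to a constant factor absorbing multi-color collisions, the slack of $v$ is at least $X_v$, so it suffices to lower bound $X_v$. For a fixed anti-edge $\{u,w\}$: both endpoints are active with constant probability $p^2$; conditioned on being active, they pick a common color with probability $1/(\Delta^2+1)$; conditioned on picking a common color $c$, each of $u, w$ succeeds unless one of its at most $\Delta^2$ other $G^2$-neighbors also selects $c$, an event of probability at most $\Delta^2 \cdot p/(\Delta^2+1) < p$. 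For $p$ a suitably small constant, a union bound yields $\Exp[Y_{uw}] = \Omega(1/\Delta^2)$, and summing over $A_v$ gives $\Exp[X_v] = \Omega(\zeta_v d(v)/\Delta^2)$, as required.

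\paragraph{Concentration and main obstacle.} The $Y_{uw}$'s are not independent, so the substantive step is a sharp concentration inequality. I would invoke Talagrand's inequality in its certifiable form, applied to $X_v$ viewed as a function of the independent (activation, color) choices of the nodes in $N_{G^2}(v)$ and their $G^2$-neighbors: the event $X_v \ge s$ is certifiable from $O(s)$ such choices, namely those of the endpoints of $s$ witness anti-pairs together with the $G^2$-neighbors responsible for their success. Combined with a controlled Lipschitz bound, this yields $\Pr[X_v \le \Exp[X_v]/2] \le e^{-\Omega(\Exp[X_v])}$, matching the stated bound. The main obstacle is extracting the correct exponent: a naive bounded-differences argument would give only $e^{-\Omega(\sqrt{\Exp[X_v]})}$, so the certifiability upgrade provided by Talagrand is essential. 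Since our changes from the distance-$1$ analyses are essentially notational (replacing $\Delta$ by $\Delta^2$ in the palette size and neighborhood bounds, which accounts for the $d(v)/\Delta^2$ factor arising when we use local sparsity in $d(v)$ rather than global sparsity in $\Delta$), the arguments of \cite{EPS15,HKMT21} transfer with only cosmetic modifications.
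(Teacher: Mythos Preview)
The paper does not prove this proposition; it is stated with citations to \cite{Reed98,EPS15,HKMT21} and used as a black box, so there is no paper-side argument to compare against. Your first-moment computation is correct and matches the standard analysis, and you correctly identify that the $d(v)/\Delta^2$ factor arises from combining local sparsity (in $d(v)$) with the $\Delta^2$-sized palette.

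There is, however, a genuine gap in your concentration step. You claim that ``$X_v \ge s$ is certifiable from $O(s)$ choices, namely those of the endpoints of $s$ witness anti-pairs together with the $G^2$-neighbors responsible for their success.'' But success of a pair $(u,w)$ on a color $c$ requires that \emph{no} $G^2$-neighbor of either $u$ or $w$ also tried $c$; there are no specific neighbors ``responsible'' for success---rather, the choices of all $\Theta(\Delta^2)$ neighbors of $u$ and $w$ must be revealed to certify it. So the certificate size is $\Theta(s\Delta^2)$, not $O(s)$, and Talagrand applied this way gives only an $e^{-\Omega(\zeta_v d(v)/\Delta^4)}$ tail, which is useless.

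The cited references avoid this by not analyzing $X_v$ directly. A common route (see, e.g., \cite{EPS15,HKMT21}) is to split the quantity: let $Z_v$ count non-adjacent pairs in $N_{G^2}(v)$ that are both active and \emph{try} the same color (irrespective of success), and let $B_v$ count those pairs that then fail due to a conflicting third node. The variable $Z_v$ is genuinely $O(s)$-certifiable (just the $2s$ endpoints' activation and color choices), and $B_v$ has small expectation and is handled by a separate concentration or union-bound argument. Alternatively, one invokes the exceptional-outcomes variant of Talagrand. Either way, the adaptation to distance~$2$ is indeed notational once the correct decomposition is in place, but your sketch as written skips precisely the step where the real work happens.
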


\subsection{Sparse-Dense Decomposition}
\label{sec:acd-explain}

All recent fast randomized distributed coloring algorithms \cite{HSS18,CLP20,HKMT21,HKNT22,FHM23,FGHKN23} decompose the graph into a set of sparse nodes and several dense clusters.
Such a decomposition was first introduced by \cite{Reed98}.

\begin{definition}
\label{def:almost-clique}
For $\epsilon \in (0, 1/3)$, a distance-2 $\epsilon$-almost-clique decomposition (ACD) is a partition of $V(G)$ in sets $\Vsparse, K_1, \ldots, K_k$ such that
\begin{enumerate}
\item nodes in $\Vsparse$ either are $\Omega(\epsilon^2\Delta^2)$-sparse in $G^2$ or have degree $d(v)\leq \Delta^2 - \Omega(\epsilon^2\Delta^2)$,
\item for all $i\in[k]$, sets $K_i$ are called \emph{almost-cliques}, and verify
\begin{enumerate}
\item\label[part]{part:clique-size} $|K_i| \le (1+\epsilon)\Delta^2$,
\item\label[part]{part:deg-inside} for each $v \in K_i$, $|\Ntwo(v)\cap K_i|\ge (1-\epsilon)\Delta^2$.
\end{enumerate}
\end{enumerate}
\end{definition}

There are several ways to compute this decomposition in \congest \cite{HKMN20,HKMT21,HNT22,FGHKN22}.
We refer the reader to the version of \cite[Section 4.2]{HNT22}. The existing distance-2 algorithm of \cite{HKMN20} uses $O(\log\Delta)$ rounds and the \congest algorithms by \cite{HKMT21} require too much bandwidth at distance-2.
We mention that \cite{FGHKN22} implements \cite{HNT22} without representative hash functions and that it can be done here as well.
We discuss the implementation of the algorithm in \cref{sec:acd}.

\begin{lemma}[{Adaptation of \cite[Section B.1]{FGHKN22}}]
\label{lem:acd}
There exists a \congest randomized algorithm partitioning the graph into $\Vsparse, K_1, \ldots, K_k$ for some integer $k \ge 0$ such as described in \cref{def:almost-clique}. It runs in $O(\epsilon^{-4})$ rounds.
\end{lemma}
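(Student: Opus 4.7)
The plan is to adapt the sampling-based ACD algorithm of \cite{FGHKN22,HNT22} from distance~1 to $G^2$. The first step is, for each node $v$, to obtain a representative random sample of $v$'s distance-2 neighborhood via a two-hop process: every 1-neighbor $u\in N(v)$ draws $\Theta(\eps^{-2})$ uniform independent elements of $N(u)$ and forwards them to $v$; together these form a sample $S(v)\subseteq \Ntwo(v)$. Call a pair $w,w'$ at distance 2 \emph{friends} if $|\Ntwo(w)\cap \Ntwo(w')|\ge (1-O(\eps))\Delta^2$. By standard Chernoff concentration over the sampling, with high probability every $v$ can estimate $|\Ntwo(v)\cap \Ntwo(w)|/\Delta^2$ from $|S(v)\cap \Ntwo(w)|/|S(v)|$ up to additive error $O(\eps)$ for every relevant $w$, which lets each node decide friendship queries reliably.

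The second step is the clustering itself. Following \cite{FGHKN22}, each ``dense-looking'' node $v$ (whose sample indicates large overlap with many of its distance-2 neighbors) picks the smallest-ID friend leader it sees in its sample and joins the group that leader forms; the remaining nodes are placed in $\Vsparse$. The correctness argument then splits in two. For nodes sent to $\Vsparse$: a constant fraction of their distance-2 neighbors have provably small overlap with them, and a standard counting argument (Reed's original argument, ported to $G^2$ as in \cite{HKMT21,HNT22}) shows this forces either $\spar_v=\Omega(\eps^2\Delta^2)$ or $d(v)\le \Delta^2-\Omega(\eps^2\Delta^2)$. For clustered nodes: all members of a group share $(1-O(\eps))\Delta^2$ of their distance-2 neighborhood with the leader, and the triangle inequality on symmetric differences yields both the cluster-size bound \cref{part:clique-size} and the internal-degree bound \cref{part:deg-inside}.

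The main obstacle is the bandwidth asymmetry at distance~2: a node $v$ has $\Theta(\Delta^2)$ distance-2 neighbors but only $O(\Delta \log^2 n)$ bits per round of incoming bandwidth, so it cannot even enumerate them. All communication must be routed through $v$'s $\Delta$ 1-neighbors, each of which acts as a \emph{relay} for up to $\Delta$ distance-2 neighbors. The key is that each relay only needs to carry $O(\eps^{-2})$ IDs on behalf of each of its endpoints, keeping the per-edge congestion to $O(\eps^{-2}\log n)$ bits per round. This is precisely the routing pipeline that \cite[Section B.1]{FGHKN22} implements without representative hash functions; porting it from $G$ to $G^2$, together with $O(\eps^{-2})$ parallel repetitions to amplify the confidence of each overlap estimate, yields the claimed $O(\eps^{-4})$ round complexity.
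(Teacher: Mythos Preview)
Your proposal has a genuine gap in both the sampling and the communication steps, and it diverges from what the paper (and \cite{FGHKN22,HNT22}) actually does.

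First, the sample $S(v)$ you describe is \emph{not} a uniform sample of $\Ntwo(v)$: a node $x\in\Ntwo(v)$ reachable via several 2-paths is proportionally oversampled, so $|S(v)\cap\Ntwo(w)|/|S(v)|$ does not concentrate around $|\Ntwo(v)\cap\Ntwo(w)|/\Delta^2$ in general. This is precisely the ``duplicate'' issue the paper handles explicitly (\cref{sec:acd}): nodes with $\td(v)-d(v)\ge\eps\Delta^2$ are shown to be sparse anyway, so misestimating their friend count is harmless. You invoke ``standard Chernoff'' without addressing this bias.

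Second, and more seriously, you never explain how $|S(v)\cap\Ntwo(w)|$ is computed within the bandwidth budget. As you describe it, $|S(v)|=\Theta(\Delta\eps^{-2})$ identifiers; broadcasting that to all distance-2 neighbors is infeasible in $O(\eps^{-4})$ rounds, and no relay knows either $S(v)$ or $\Ntwo(w)$ in full. Your per-edge congestion bound covers only the step where relays forward samples \emph{to} $v$, not the subsequent overlap computation. The paper's approach is quite different: every node picks a \emph{single} random value $h(v)\in[\Theta(\Delta^2/\eps)]$, and $v$'s fingerprint is the bitmap $T_v=\{h(x):x\in\Ntwo(v),\,h(x)\le\sigma\}$ with $\sigma=\Theta(\eps^{-4}\log n)$. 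This bitmap is $O(\eps^{-4}\log n)$ bits regardless of $\Delta$, hence broadcastable in $O(\eps^{-4})$ rounds, after which a shared relay computes $|T_v\cap T_w|$ directly. Duplicates (the same hash arriving via two relays) are dropped conservatively, and the paper argues this only affects already-sparse nodes. That hash-and-bitmap mechanism, not ID-sampling plus repetition, is what yields the $O(\eps^{-4})$ bound.
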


\begin{definition}[External and Anti-Degrees]
For a node $v\in K$ and some almost-clique $K$,
we call $e_v = |\Ntwo(v)\setminus K|$ its \emph{external degree} and $a_v=|K\setminus \Ntwo(v)|$ its anti-degree.
We shall denote by
$\avgext_K=\sum_{v\in C} e_v/|K|$ the average external degree and
$\avganti_K=\sum_{v\in K}a_v/|K|$ the average anti-degree.
\end{definition}

It was first observed by \cite{HKMT21} that sparsity bounds external and anti-degrees.

\begin{lemma}[{\cite[Lemmas 6.2]{HKMT21}}]
\label{lem:bound-ext}
\label{lem:bound-anti}
There exists two constants $\ce = \ce(\epsilon) > 0$ and $\ca=\ca(\epsilon) > 0$ such that for all $v\in K$, the bounds $e_v \le \ce \slack_v$ and $a_v \le \ca \slack_v$ holds.
\end{lemma}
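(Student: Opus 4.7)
The plan is to adapt the distance-1 double-counting argument of \cite[Lemma~6.2]{HKMT21} to $G^2$ and to the local-sparsity formulation used here. In both bounds the strategy is the same: lower-bound the number of missing $G^2$-edges inside $\Ntwo(v)$—which equals $\spar_v \cdot d(v)$ by \cref{def:sparsity}—by an amount proportional to $e_v \cdot \Delta^2$ (respectively $a_v \cdot \Delta^2$), then divide by $d(v) \le \Delta^2$ to conclude.

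For the external-degree bound $e_v \le \ce \spar_v$, I would argue that each external neighbor $u \in \Ntwo(v) \setminus K$ is non-$G^2$-adjacent to $\Omega(\Delta^2)$ internal neighbors of $v$. Since $u \notin K$, either $u$ lies in a different almost-clique $K'$, in which case $\card{\Ntwo(u) \cap K} \le \card{\Ntwo(u) \setminus K'} \le \Delta^2 - (1-\epsilon)\Delta^2 = \epsilon\Delta^2$ by \cref{def:almost-clique} applied to $u$; or $u \in \Vsparse$, where the low-degree subcase directly gives $\card{\Ntwo(u) \cap K} \le d(u) \le (1-\Omega(\epsilon^2))\Delta^2$, and the $\Omega(\epsilon^2\Delta^2)$-sparse subcase is treated by a short extra argument showing that having a large $\card{\Ntwo(u) \cap K}$ would force $\Ntwo(u)$ to inherit the near-cliqueness of $K$ and contradict $u$'s sparsity. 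Combined with $\card{\Ntwo(v) \cap K} \ge (1-\epsilon)\Delta^2$, this yields $\Omega(\Delta^2)$ missing $G^2$-edges from $u$ into $\Ntwo(v) \cap K$. Summing over $u$ (each such missing edge has exactly one external endpoint, so no double counting) gives $\spar_v \cdot d(v) = \Omega(e_v \Delta^2)$, and $d(v) \le \Delta^2$ finishes.

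For the anti-degree bound $a_v \le \ca \spar_v$, I would run a symmetric argument on anti-neighbors $w \in (K \setminus \Ntwo(v)) \setminus \{v\}$. Since $w \in K$, \cref{def:almost-clique} gives $\card{\Ntwo(w) \cap K} \ge (1-\epsilon)\Delta^2$, so together with $\card{\Ntwo(v) \cap K} \ge (1-\epsilon)\Delta^2$ and $\card{K} \le (1+\epsilon)\Delta^2$ inclusion–exclusion provides at least $(1-3\epsilon)\Delta^2$ common $G^2$-neighbors of $v$ and $w$ inside $K$. The structural atypicality of $v$ and $w$—two nodes of $K$ at $G$-distance greater than $2$, hence with no common $G$-neighbor—then forces a linear (in $\Delta^2$) number of missing $G^2$-edges among pairs of $v$'s internal neighbors per anti-neighbor, mirroring the distance-1 counting in \cite{HKMT21}. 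Summing over $w$ and dividing by $d(v)$ yields $a_v = O(\spar_v)$. The main obstacle will be the sparsity subcase of $u \in \Vsparse$ in the external-degree argument, since $\Omega(\epsilon^2\Delta^2)$-sparsity does not directly bound $\card{\Ntwo(u) \cap K}$; this is the standard sticking point in ACD-based analyses and is handled by the near-cliqueness argument sketched above.
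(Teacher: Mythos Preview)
The paper does not give its own proof of this lemma; it is quoted from \cite{HKMT21}. So there is nothing to compare your argument against in the paper itself. That said, your sketch has two issues worth flagging.

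For the external-degree bound, your case analysis on $u\in\Vsparse$ does not close. In the low-degree subcase you use $d(u)\le (1-\Omega(\epsilon^2))\Delta^2$ together with $|\Ntwo(v)\cap K|\ge(1-\epsilon)\Delta^2$ to get $\Omega(\Delta^2)$ non-neighbours of $u$ inside $\Ntwo(v)\cap K$; but the difference $(\Omega(\epsilon^2)-\epsilon)\Delta^2$ is negative for small $\epsilon$, so this yields nothing. The $\Omega(\epsilon^2\Delta^2)$-sparse subcase has the same problem: a node can be that sparse while still having $(1-\epsilon)\Delta^2$ neighbours in $K$, since a mere $\epsilon\Delta^2$ external neighbours already suffice to create $\Theta(\epsilon\Delta^2)$ sparsity. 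What actually makes the argument in \cite{HKMT21} work is the stronger ACD property that every node outside $K$ has at most $(1-\Omega(\epsilon))\Delta^2$ neighbours inside $K$; this is standard in ACD constructions (it is how membership in $K$ is decided) but is not part of \cref{def:almost-clique} as written, and you should invoke it explicitly rather than attempt case-by-case arguments on $\Vsparse$.

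For the anti-degree bound, the gap is more serious. Your key step---``two nodes of $K$ at $G$-distance greater than~$2$ force a linear number of missing $G^2$-edges among $v$'s internal neighbours''---is asserted, not argued, and the natural double-counting does not deliver it under \emph{local} sparsity. Counting pairs $(w,x)$ with $w$ an anti-neighbour and $x\in\Ntwo(v)\cap\Ntwo(w)\cap K$ gives $a_v(1-3\epsilon)\Delta^2\le \sum_{x\in\Ntwo(v)}|\Ntwo(x)\setminus\Ntwo(v)|$, and bounding the right-hand side only yields
\[
a_v \;\le\; O\!\bigl(\spar_v + (\Delta^2 - d(v))\bigr),
\]
not $a_v\le O(\spar_v)$. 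The extra $\Delta^2-d(v)$ term is exactly the degree slack, and indeed one can build almost-cliques where $a_v>0$ while the neighbourhood $\Ntwo(v)$ is a clique (so $\spar_v=0$). In \cite{HKMT21} this issue does not arise because they use \emph{global} sparsity, which already absorbs the degree-slack term; the present paper deliberately switches to local sparsity (see the remark before \cref{def:sparsity}), so porting the anti-degree half verbatim is delicate. Fortunately the paper only ever invokes the $e_v$ half of the lemma (in the proof of \cref{lem:slack}), so this does not affect the downstream results.
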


\subsection{Pseudo-degrees}
Bandwidth constraints, such as that of the \CONGEST model, can severely restrict nodes in their ability to learn information about their neighborhood in a power graph of $G$. This includes a node's palette (which colors are not yet used by its neighbors in the power graph) but also its degree and related quantities. This motivates the use of similar, but readily computable quantities.

\begin{definition}[Distance-2 Pseudo-Degrees]
\label{def:pseudodegree}
In the distance-2 setting, for any node $v\in V$, let its \emph{pseudo-degree} $\td(v)$ and its \emph{uncolored pseudo-degree} $\pud(v)$ be
\begin{equation}
\td(v) \eqdef \sum_{\mathclap{u\in N_G(v)}} |N_{G}(u)|
\quad\text{and}\quad
\pud(v)\eqdef |\uncolored{N}_G(v)| + \sum_{\mathclap{u\in N_G(v)}} |\uncolored{N}_{G\setminus \set{v}}(u)|\ .
\label{eq:pseudodegree}
\end{equation}
For a dense node $v\in K$, its \emph{pseudo-external degree} $\te_v$ and its \emph{pseudo-anti degree} $\ta_v$ are
\begin{equation}
\te_v \eqdef \sum_{\mathclap{u\in N_G(v)}} |N_G(u) \setminus K|
\quad\text{and}\quad
\ta_v \eqdef |K| - \sum_{\mathclap{u\in N_G(v)}} |N_G(u)\cap K|\ .
\label{eq:pseudoantiext}
\end{equation}
\end{definition}

Note that pseudo-degree and pseudo-external degree are \emph{overestimates} of a node's actual degree and external degree, while pseudo-anti degree is an \emph{underestimate} of a dense node's actual anti-degree. The estimates are accurate for nodes with a tree-like 2-hop neighborhood.

For dense nodes, we also introduce notation for the deviations between the pseudo-degrees and actual $G^2$-degrees. Such deviations result in slack, which we exploit later in the paper.
\[ \errext_v \eqdef \te_v - e_v\ , \quad \erranti_v \eqdef a_v - \ta_v \ , \quad \text{and} \quad \err_v \eqdef \errext_v + \erranti_v = \td(v) - d(v) \ . \]
We also write $\errext_K = \sum_{v\in K} \errext_v/|K|$ for the average value within a clique.

Pseudo-degrees partially allow nodes to estimate their \emph{degree slack}, the number of colors that $v$ is guaranteed to always have available due to the palette being larger than its degree. Intuitively, the deviations $\errext_v$ and $\erranti_v$ capture the part of its degree slack that a dense node $v$ does not know about.
\begin{equation}
    \label{eq:knowndegrereslack}
    \underbrace{\Delta^2+1 - d(v)}_{\mathclap{\text{degree slack}}}
    \;=\;
    \underbrace{\Delta^2+1 - \td(v)}_{\mathclap{\text{known to $v$}}}%
    \;+\;%
    \underbrace{ \errext_v + \erranti_v}_{\mathclap{\text{unknown to $v$}}}
\end{equation}

\section{Detailed Overview of the Full Algorithm}
\label{sec:overview}

We now give a streamlined overview of our algorithm and describe with some details the technical ideas behind it.
See \cref{alg:high-level} for a high-level description of its steps.
Since there exists a $O(\log^5 \log n)$-round algorithm when $\Delta \le \poly\log n$ (\cref{lem:small-degree}), we assume $\Delta \ge \Omega(\log^{3.5} n)$.
Henceforth, we assume we are given the almost-clique decomposition $\Vsparse, K_1, \ldots, K_k$ (\cref{lem:acd}).

\begin{algorithm}
\caption{High-Level Algorithm}
\label{alg:high-level}
\Input{Graph $G$ with $\Delta \ge \Omega(\log^{3.5} n)$}
\Output{A distance-2 coloring $\col$ of $G$}

$\Vsparse, K_1, \ldots, K_k$ = \alg{ComputeACD}{}$(\epsilon)$ \label{step:acd}\hfill (\cref{sec:acd-explain})

\slackgeneration \label{step:slackgen}\hfill(\cref{lem:slack-generation})

\alg{ColoringSparseNodes} \label{step:sparse}\hfill (\cref{lem:coloring-sparse})

\matching \label{step:matching}\hfill (\cref{sec:colorful-matching})

\alg{ComputeOutliers} \label{step:outliers} \hfill (\cref{sec:outliers})

\alg{ColorOutliers} \label{step:color-outliers} \hfill (\cref{lem:coloring-sparse})

\sct \label{step:sct} \hfill (\cref{sec:sct})

$L_1, \ldots, L_{\ell}=$ \slicecolor for some $\ell=O(\log\log n)$ \label{step:slice-color} \hfill (\cref{sec:slack-color-log2n,sec:slice-color})

\ForEach{$i\in[\ell]$}{
\alg{LearnPalette}  \label{step:learn-palette}\hfill (\cref{sec:learn-palette-log2n})

\alg{ColorSmallDegree}{}$(L_i)$ \label{step:small-degree}\hfill (\cref{lem:small-degree})
}
\end{algorithm}

\paragraph{Coloring Sparse Nodes (Steps \ref{step:slackgen} \& \ref{step:sparse}).}
The coloring of sparse nodes was already handled in \cite{HN23}.
After \slackgeneration, all sparse nodes have slack proportional to $\Delta^2$ (\cref{lem:slack-generation}).
In particular, their palettes always represent a constant fraction of the color space $[\Delta^2+1]$. This allows them to sample colors in their palette efficiently without learning most of their distance-2 neighbors' colors.
The algorithm is summarized by the following proposition:

\begin{restatable}[Coloring Nodes with Slack Linear in $\Delta^2$, \cite{HN23}]{proposition}{propColoringSparse}
\label{lem:coloring-sparse}
Suppose $\Delta \ge \Omega(\log^{3.5} n)$.
Let $H$ be an induced subgraph of $G^2$ for which all nodes have slack $\gamma \cdot \Delta^2$ for some universal constant $\gamma > 0$ known to all nodes.
There exists an algorithm coloring all nodes of $H$ in $O(\log^* n)$ rounds.
\end{restatable}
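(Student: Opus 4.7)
Proof Proposal:

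My plan is to exploit the enormous slack to color by sampling directly from the full color space $[\Delta^2+1]$, circumventing the distance-2 obstacle of being unable to learn palettes within $\congest$ bandwidth. The key observation is that since $|\pal{v}| \geq \hatd_H(v) + \gamma\Delta^2 \geq \gamma\Delta^2$, a uniformly random color from $[\Delta^2+1]$ lies in $\pal{v}$ with probability at least $\gamma$, and, conditional on this, the probability that some uncolored distance-2 rival picks the same color is at most $\hatd_H(v)/(\Delta^2+1) \le 1 - \gamma/2$. Hence a single sample from $[\Delta^2+1]$ is simultaneously in the palette and conflict-free with some constant probability $\alpha = \alpha(\gamma) > 0$.

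I would first implement a \emph{one-round color trial} in $\congest$: every uncolored $v$ picks $c_v \in [\Delta^2+1]$ uniformly and broadcasts it to its $G$-neighbors; each $G$-neighbor $u$ cross-checks the samples it receives against the adopted colors in $N_G(u)$, detects pairwise collisions, and returns a single conflict/no-conflict bit to each of its neighbors; $v$ adopts $c_v$ iff all replies are no-conflict. Messages are $O(\log n)$ bits and the trial takes $O(1)$ rounds, with success probability at least $\alpha$.

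To reach $O(\log^* n)$ rounds total, I would amplify via the classical multi-trial schedule (Schneider--Wattenhofer style, adapted to distance-2): in phase $i$, each uncolored node performs $T_i$ parallel trials using a short $O(\log n)$-bit pseudo-random seed that encodes the $T_i$ candidate colors, with $T_i$ growing by tower iteration ($T_{i+1} = 2^{T_i}$). Each $G$-neighbor expands the seeds of its neighbors locally, enumerates all $O(\Delta T_i)$ candidate colors, detects every potential collision with adopted or concurrently sampled colors, and reports back a compact summary identifying a conflict-free candidate (or certifying failure) to each neighbor. After $O(\log^* n)$ phases we have $T_i \geq \Omega(\log n)$, so the per-node failure probability $(1-\alpha)^{T_i}$ drops below $1/n^2$ and a union bound over all nodes completes the argument.

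The main obstacle is implementing the parallel verification of many candidate colors at distance-2 within $O(\log n)$-bit bandwidth, and in particular reporting the identity of a conflict-free candidate back from the relay when $T_i$ is large. This is the technical core inherited from \cite{HN23}: $k$-wise independent hash families of seed length $O(\log n)$ provide enough independence for the per-trial success analysis to carry over to the batched setting, and a careful encoding of the conflict-free summary lets a node decode a valid candidate from constant-length feedback per neighbor. The assumption $\Delta \geq \Omega(\log^{3.5} n)$ ensures that $T_i$ stays well below $\Delta^2$ throughout, so the within-phase union bound over the $O(T_i^2)$ intra-batch collisions is harmless, and that $T_i$ can be grown to $\Theta(\log n)$ within $O(\log^* n)$ phases.
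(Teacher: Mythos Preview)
The paper does not give its own proof of this proposition; it is quoted as a black box from \cite{HN23}, so there is no in-paper argument to compare your attempt against.

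Your overall plan is the right one and matches what one expects the cited result to do: since $|\pal{v}|\ge\gamma\Delta^2$, a uniform sample from the full color space $[\Delta^2+1]$ lands in $\pal{v}$ with constant probability, so the Schneider--Wattenhofer multi-trial schedule still terminates in $O(\log^*n)$ phases. Your single-trial implementation at distance~2 is correct, as is the scheme where relays return $T_i$-bit conflict bitmaps that $v$ ANDs across its $G$-neighbors; this fits in $O(\log n)$ bandwidth once $T_i$ is capped at $\Theta(\log n)$.

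The one place your write-up is too loose is the appeal to ``$k$-wise independent hash families of seed length $O(\log n)$'' for encoding the $T_i$ candidates. To get batch failure probability $(1-\alpha)^{T_i}$ you need the $T_i=\Theta(\log n)$ candidates to behave essentially independently against an adversarially fixed bad set, and standard $k$-wise independent constructions with $k=\Theta(\log n)$ have seed length $\Theta(k\log\Delta)=\Theta(\log^2 n)$, not $O(\log n)$; with only pairwise independence, Chebyshev gives failure probability $O(1/T_i)$, which yields $O(\log\log n)$ phases rather than $O(\log^*n)$. The gap is fixable --- e.g., a length-$T_i$ random walk on a constant-degree expander over $[\Delta^2+1]$ has seed length $O(\log\Delta+T_i)=O(\log n)$, can be expanded locally by each relay, and the expander hitting property delivers exactly the exponential-in-$T_i$ avoidance bound --- but $k$-wise independence as you invoke it does not do the job, and you should name the actual derandomization tool rather than gesture at it.
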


\paragraph{Reducing Degrees with Slack.}
Since coloring sparse nodes is already known, from now on, we focus our attention to dense nodes.
Reducing coloring problems to low-degree instances that one then solves with an algorithm that benefits from the low degree is a common scheme in randomized algorithms for distributed coloring \cite{BEPS,CLP20}.
In particular, when nodes have slack linear in their degree, it was observed by \cite{SW10,EPS15,CLP20} that if nodes try \emph{multiple colors from their palette}, degrees decrease exponentially fast, resulting in a $O(\log^* n)$-round algorithm in \local.
This observation motivates the structure of all ultrafast coloring algorithms: 1) generate $\Omega(e_v)$ slack with \slackgeneration, 2) reduce degrees to $O(e_v)$ with \sct, and 3) complete the coloring with slack.
Unfortunately, this approach is not feasible for us because it requires too much bandwidth.
As a result, we do something intermediate that takes advantage of slack
but only tries a single color at a time to accommodate our bandwidth limitations.
In $O(\log\log n)$ rounds, our method creates $O(\log \log n)$ instances of the maximum degree $O(\log n)$.

Another key technical detail of these methods is that nodes try colors \emph{from their palettes}.
At distance-2, perfect sampling in one's palette is not feasible for nodes do not have sufficient bandwidth.
We show that they can nevertheless sample colors from a good enough approximation of their palette, in the sense that it preserves the slack.
Our involved sampling process requires our degree reduction algorithm to work with weaker guarantees than previous work \cite{HKMT21,HKNT22}. See \cref{sec:slice-color} for the proof.

\begin{restatable}[Slice Color]{lemma}{lemmaSlackColor}
\label{lem:slack-color}
Let $C, \alpha, \kappa > 0$ be some universal constants.
Suppose each node knows an upper bound $b(v) \geq \hatd(v)$ on its uncolored degree. Suppose that for all nodes with $b(v) \geq C\log n$, and a value $s(v) \ge \alpha \cdot b(v)$, there exists an algorithm that samples a color $\rC_v\in \pal{v}\cup\set{\bot}$ (where $\bot$ represents failure) with the following properties:
\begin{align}
\Pr(\rC_v = \bot) &\le 1/\poly(n)\ , \\
\Pr(\rC_v = c \mid \rC_v\neq\bot) &\le \frac{\kappa}{\hatd(v)+s(v)} \ .
\label{eq:uniform-slack-color}
\end{align}
Then, there is a $O(\log \log \Delta + \kappa\cdot\log(\kappa/\alpha))$-rounds algorithm extending the current partial coloring so that uncolored vertices are partitioned into $\ell=O(\log\log \Delta)$ layers $L_1, \ldots, L_{\ell}$ such that each uncolored node knows to which layer it belongs and each $G[L_i]$ has uncolored degree $O(\log n)$.
\end{restatable}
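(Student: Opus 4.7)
My strategy is to adapt the iterative degree-reduction paradigm (\emph{MultiTrial}/\emph{SliceColor}) of \cite{HKMT21,HKNT22} to the relaxed sampling oracle supplied by the hypothesis. The algorithm has two stages, preceded by the following single-trial analysis: in one round, every uncolored vertex $v$ invokes the oracle and adopts the returned colour $\rC_v\in\pal{v}\cup\{\bot\}$ unless some distance-2 neighbor picks the same colour. Combining the pointwise bound \eqref{eq:uniform-slack-color} with the slack assumption $s(u)\ge \alpha b(u)\ge \alpha\hatd(u)$, the expected number of neighbors attempting $v$'s colour is at most $\sum_{u\in\uncolored{N}(v)} \kappa/(\hatd(u)+s(u))$, which is $O(\kappa/(1+\alpha))$ as long as the surviving uncolored vertices have comparable degrees---an invariant that I maintain by the layer-peeling scheme below. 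Adding the $1/\poly(n)$ bound on $\{\rC_v=\bot\}$, a single trial colours $v$ with probability at least $1-O(\kappa/(1+\alpha))-1/\poly(n)$.

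\textbf{Boost phase ($O(\kappa\log(\kappa/\alpha))$ rounds).} When $\kappa/(1+\alpha)$ is close to (or above) $1$, a single trial is insufficient. I would run $r=O(\kappa\log(\kappa/\alpha))$ independent trials per node, using a thinning probability $\Theta(\alpha/\kappa)$ so that each round gives per-node success probability $\Theta(\alpha/\kappa)$; independence of fresh oracle calls makes the surviving-uncolored probability drop below $1/\poly(n)$. When we enter the slicing phase, the remaining uncolored vertices therefore satisfy the sampling hypothesis with an effective $\kappa$ bounded by a small constant, so we may reduce to the case $\kappa/(1+\alpha) \leq 1/2$.

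\textbf{Slicing phase ($O(\log\log\Delta)$ rounds).} Under constant per-round success probability I follow the \emph{SliceColor} template of \cite{HKMT21,HKNT22}: iterate $\ell=O(\log\log\Delta)$ rounds, and at the end of round $i$ peel off as layer $L_i$ those vertices whose tracked upper bound $b_i(v)$ has just entered a pre-chosen geometric window $[\gamma\log n\cdot 2^{i-1},\gamma\log n\cdot 2^{i})$ (nodes whose initial $b(v)$ already lies in that window are sliced immediately). A Chernoff-type bound on the number of successful trials among $v$'s neighbors certifies a constant-factor per-round shrinkage of $b_i(v)$; the geometric choice of windows together with the doubly-exponential range of $b(v)\in[\gamma\log n, \Delta^2]$ ensures both that $\ell=O(\log\log\Delta)$ layers cover all surviving vertices and that the induced subgraph on each $L_i$ has uncolored degree $O(\log n)$.

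\textbf{Main obstacle.} The delicate point is that the oracle is \emph{not} a uniform palette sampler: it only certifies a pointwise $\ell_\infty$ bound on $\Pr(\rC_v=c)$. This breaks the symmetric analyses of \cite{HKMT21,HKNT22}, so both the single-trial success probability and the concentration of surviving-neighbor counts must be redone using only pointwise information. Moreover the events ``$u$'s trial succeeds'' and ``$v$'s trial succeeds'' are correlated via the shared colour, forcing a read-once-formula (or martingale) concentration argument rather than plain Chernoff. Carefully threading these weakened concentration bounds through the $O(\log\log\Delta)$ iterations while preserving the ``comparable-degree'' invariant that validates the hypothesis for surviving vertices is the technical heart of the proof, carried out in \cref{sec:slice-color}.
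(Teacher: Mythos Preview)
Your boost phase is broadly right in spirit and matches the paper's first loop: thin trials by a $\Theta(1/\kappa)$ factor and run $O(\kappa\log(\kappa/\alpha))$ rounds to drive the uncolored degree down to $s(v)/(2\kappa)$.

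The slicing phase, however, has a genuine gap. You claim ``a Chernoff-type bound \ldots certifies a constant-factor per-round shrinkage of $b_i(v)$,'' and then define layers by \emph{geometric} windows $[\gamma\log n\cdot 2^{i-1},\gamma\log n\cdot 2^i)$. But constant-factor shrinkage over $O(\log\log\Delta)$ rounds only reduces $b(v)$ by a $\poly\log\Delta$ factor, so a node starting with $b(v)\approx\Delta^2$ still has degree $\Delta^2/\poly\log\Delta$ at the end; and geometric windows of width~$2$ covering $[\gamma\log n,\Delta^2]$ require $\Theta(\log\Delta)$ layers, not $O(\log\log\Delta)$. Your sentence about ``the doubly-exponential range of $b(v)\in[\gamma\log n,\Delta^2]$'' is simply false---that range is singly exponential in $\log\Delta$.

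What is missing is the \emph{quadratic} feedback between slack and coloring probability. The paper defines layers upfront by doubly-exponential thresholds $b(v)< C\log n\cdot 2^{2^i}$, gives higher layers priority in color conflicts, and shows that if every node in $L_{\geq i}$ has slack-to-degree ratio $\geq x\kappa$, then after two rounds the ratio becomes $\geq x^2\kappa$ (each node fails with probability $\leq 1/x$, so the uncolored degree of any $v$ drops by a factor~$x^2$ in two rounds). This squaring, not a constant factor, is what yields $O(\log\log\Delta)$ iterations. Your single-trial analysis also needs the priority ordering: bounding $\sum_{u\in\uncolored{N}(v)}\kappa/(\hatd(u)+s(u))$ by $O(\kappa/(1+\alpha))$ requires $\hatd(u)\gtrsim\hatd(v)$ for the contributing neighbors, which is exactly what restricting to higher-or-equal layers provides.
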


\paragraph{Coloring Dense Nodes.}
We assume the sparse nodes are colored (Step \ref{step:sparse}) and focus on the dense nodes (Steps \ref{step:matching} to \ref{step:small-degree}).
Dense nodes receive slack proportional to their external degree (Step \ref{step:slackgen}, \cref{lem:slack-generation,lem:bound-ext}) in all but the densest almost-cliques.

\paragraph{Steps \ref{step:matching}, \ref{step:outliers} \& \ref{step:color-outliers}: Setting up (\cref{sec:setup}).}
We begin by two pre-processing steps to ensure uncolored nodes have useful properties further in the algorithm.
Computing a colorful matching (Step \ref{step:matching}, \cref{lem:colorful-matching}) creates $\Theta(\avganti_K)$ slack in the clique palette $\pal{K}$.
This is a crucial step to ensure we can approximate nodes palettes (see Step~\ref{step:slice-color}).
We then compute a (small) fraction $O_K \subseteq K$ of atypical nodes called \emph{outliers} (Step \ref{step:outliers}, \cref{lem:compute-outliers}).
Outliers have $\Omega(|K|)$ slack from their inactive inlier neighbors, and can thus 
be colored in $O(\log^* n)$ rounds (Step \ref{step:color-outliers}, \cref{lem:coloring-sparse}).
\emph{Inliers} $I_K \eqdef K\setminus O_K$ verify $\te_v \le O(\avgext_K + \errext_K)$ and $\ta_v\le O(\avganti_K)$ (\cref{eq:inliers}).

While the colorful matching algorithm is rather straightforward to implement even at distance-2, computing outliers is a surprisingly challenging task.
The reason is that, contrary to distance-1, nodes do not know good estimates for $\avganti_K$.
Fortunately, a node only overestimates its anti-degree (i.e., $\ta_v \ge a_v$) and we know that $0.9|K|$ nodes have $a_v \le 100\avganti_K$.
By learning approximately the distributions of anti-degrees, we can set a threshold $\tau$ such that all nodes with $\ta_v \le \tau$ verify $\ta_v \le 200\avganti_K$.

\paragraph{Step \ref{step:sct}: Synchronized Color Trial (\cref{sec:sct}).}
This now standard step exploits the small external degree of dense nodes to color most of them.
We distributively sample a permutation $\pi$ of $[|I_K|]$ such that the $i$-th node in $I_K$ (with respect to any arbitrary order) knows $\pi(i)$ (\cref{lem:perm}).
Each node then learns the $\pi(i)$-th color in $\pal{K}$ and tries that color (\cref{lem:free-color}).
This leaves $O(\avganti_K+\avgext_K+\log n)$ uncolored node in each almost-clique (\cref{lem:sct}).
To implement these steps, we split nodes into small random groups to spread the workload.
The main technical novelty here is an algorithm to aggregate the partial information of each group (\cref{lem:prefix-sum}).

\paragraph{Step \ref{step:slice-color}: Slice Color (\cref{sec:slack-color-log2n}).}
In the densest almost-cliques, the synchronized color trial already leaves $O(\log n)$ nodes uncolored with uncolored degree $O(\log n)$.
In other almost-cliques, nodes have slack proportional to their uncolored degree $O(\avganti_K +\avgext_K + \te_v)$: $\Theta(\avganti_K)$ slack from the colorful matching (Step \ref{step:matching}), $\Omega(e_v)+\errext_v=\Omega(\te_v)$ from slack generation and pseudo-external-degree. If these are not large enough, it must be that $\Delta^2-\td(v) \ge \Omega(\avgext_K)$, i.e., the node has enough slack from its small degree.

While at distance-1 we could use slack to color fast, doing the same at distance-2 requires more work because nodes do not know their palettes.
The first key observation, is that \emph{the clique-palette preserves the slack}. More precisely, for all inliers $v\in I_K$, we have $|\pal{K}\cap\pal{v}| \ge \hatd(v) + \Omega(\avganti_K + \avgext_K + \te_v)$ (\cref{lem:slack}).
The proof of this statement is very technical and requires careful balancing of all four sources of slack: the colorful matching, the sparsity slack, the pseudo-degree slack and the degree slack.
We also emphasize this is why inliers need to verify $\ta_v \le O(\avganti_K)$: when we use colors from the clique-palette, we lose up to $a_v$ colors used by anti-neighbors, which we compensate using the colorful matching and pseudo-anti-degree slack $\Theta(\avganti_K) + \erranti_K$.

It remains to sample uniform colors in $\pal{K}\cap\pal{v}$.
Based on \cref{lem:slack}, it can be observed that $|\pal{K}\cap\pal{v}| \ge \Omega(|\pal{K}|)$.
Hence, each node $v$ finds a random color $\rC_v\in\pal{K}\cap\pal{v}$ to try w.h.p.\ by sampling $\Theta(\log n)$ uniform colors in $\pal{K}$.
With $\Theta(\log^2 n)$ bandwidth, this step can easily be implemented (\cref{lem:sample-log2n}) by sampling indices in $|\pal{K}|$ and using the same tools as for the synchronized color trial (Step \ref{step:sct}).
With $\Theta(\log n)$ bandwidth, we use \emph{representative hash functions} (\cref{lem:learn-moderately-logn}).
Intuitively, we use a $\poly(n)$-sized family of hash functions mapping $[\Delta^2+1]$ to some $[\Theta(|\pal{K}|)]$. To ``sample'' colors, we take a hash function at random and pick as sampled colors those hashing below $\Theta(\log n)$.
Since a hash function $h$ can be described in $O(\log n)$ bits and the hashes $h(\pal{K})\cap[\sigma]$ and $h(\col(N(v)\setminus K))\cap[\sigma]$ can be described using a $O(\log n)$-bitmap, the algorithm works in \congest.

The above allows us to apply \slicecolor after \sct. In $O(\log\log n)$ rounds, we compute $\ell=O(\log\log n)$ layers $L_1, L_2, \ldots, L_{\ell}$ such that the maximum uncolored degree in each induced graph $G[L_i]$ is $O(\log n)$ (\cref{lem:slack-color}).

\paragraph{Steps \ref{step:learn-palette} \& \ref{step:small-degree}: Coloring Small Degree Instances (\cref{sec:learn-palette-log2n}).}
We go through each layer $L_1, L_2, \ldots, L_{\ell}$ sequentially, each time coloring all nodes in $L_i$.
Actually constructing small degree instances for solving with a deterministic algorithm requires the nodes to learn colors from their palette -- a tough ordeal in the distance-2 setting.
Our argument is two-fold: in not-too-dense almost-cliques, a simple sampling argument works (\cref{lem:learn-moderatly-dense-log2n}).
In very dense almost-clique where $\avganti_K, \avgext_K, \errext_K \le O(\log n)$, we use a different argument exploiting the very high density of the cluster to disseminate colors fast (\cref{lem:learn-palette-very-dense-log2n}).
We point out that at this step, it is crucial that uncolored nodes have typical degrees $a_v, e_v \le O(\log n)$, which is ensured by our inlier selection (Step \ref{step:outliers}).
Once nodes know a list of $\hatd(v)+1$ colors from their palettes, we can use a small-degree algorithm from \cite{HKMN20,GGHIR23,MPU23,GK21} to complete the coloring of $L_i$ in $O(\log^5 \log n)$ rounds (\cref{lem:small-degree}).
Overall, coloring small degree instances needs $O(\log^6 \log n)$ rounds, which dominates the complexity of our algorithm.

\section{Coloring Dense Nodes}
\label{sec:dense}

Henceforth, we assume we are given an $\epsilon$-almost-clique decomposition $\Vsparse, K_1, \ldots, K_k$ (for some small\footnote{we made no attempt to optimize the constants} $\epsilon$, e.g., $\epsilon=10^{-5}$) where $\Vsparse$ is already colored.
We further assume we ran \slackgeneration and that each node $v$ with $\slack_v \ge \Omega(\log n)$ has slack $\Omega(\slack_v)$ (\cref{lem:slack-generation}).
In this section, we describe an algorithm that colors dense nodes.
More formally, we prove the following result:

\begin{restatable}[Coloring Dense Nodes]{proposition}{propColoringDenseNodes}
\label{lem:coloring-dense}
After \slackgeneration and coloring sparse nodes, there is a $O(\log^6\log n)$-round randomized algorithm $\Delta^2+1$-coloring dense nodes $K_1, \ldots, K_k$ with high probability.
\end{restatable}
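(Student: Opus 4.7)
The plan is to execute Steps \ref{step:matching}--\ref{step:small-degree} of \cref{alg:high-level} in order, showing that each is correctly implementable in \congest and that together they color $\bigcup_i K_i$. First, I compute a colorful matching, which creates $\Theta(\avganti_K)$ permanent slack in each clique palette $\pal{K}$. Second, I compute outliers so that every remaining inlier $v \in I_K$ satisfies $\te_v \le O(\avgext_K + \errext_K)$ and $\ta_v \le O(\avganti_K)$; since outliers enjoy slack $\Omega(|K|)$ from their many inactive inlier neighbors, \cref{lem:coloring-sparse} colors them in $O(\log^* n)$ rounds. Third, I run the synchronized color trial, distributively sampling a random permutation of $[|I_K|]$ and routing each inlier to a near-uniform color of $\pal{K}$ that avoids collisions within its own almost-clique, leaving $O(\avganti_K + \avgext_K + \log n)$ uncolored nodes per almost-clique with high probability.

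Fourth, I invoke \slicecolor (\cref{lem:slack-color}) to partition the uncolored nodes into $\ell = O(\log \log n)$ layers $L_1, \ldots, L_\ell$ such that each induced subgraph of $G^2$ on $L_i$ has maximum uncolored degree $O(\log n)$. Fifth, I process the layers sequentially: on $L_i$, I first have each uncolored node learn a list of $\hatd(v) + 1$ colors from its palette --- using the sampling argument of \cref{lem:learn-moderatly-dense-log2n} in moderately dense almost-cliques and the high-density dissemination routine \cref{lem:learn-palette-very-dense-log2n} in the densest ones (where $\avganti_K, \avgext_K, \errext_K \le O(\log n)$) --- and then apply the deterministic small-degree list-coloring algorithm of \cref{lem:small-degree} in $O(\log^5 \log n)$ rounds. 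The first three steps cost $\poly\log\log n$ rounds, \slicecolor costs $O(\log\log n)$ rounds, and the $\ell$ sequential layer-colorings cost $O(\log \log n) \cdot O(\log^5 \log n) = O(\log^6 \log n)$ rounds, which dominates.

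The main obstacle is verifying the sampling hypothesis of \cref{lem:slack-color} after \sct. Concretely, one must show that every uncolored inlier $v \in I_K$ satisfies $|\pal{K} \cap \pal{v}| \ge \hatd(v) + \Omega(\avganti_K + \avgext_K + \te_v)$ --- i.e.\ that the clique palette preserves the true slack --- and that a near-uniform color of $\pal{K} \cap \pal{v}$ can actually be sampled within the $O(\log^2 n)$ bandwidth budget (by sampling indices into $|\pal{K}|$ for Step \ref{step:slice-color}). The slack bound requires a careful balancing of all four sources of slack: $\Theta(\avganti_K)$ from the colorful matching, $\Omega(e_v)$ from \slackgeneration, the pseudo-degree discrepancies $\errext_v + \erranti_v$, and the known degree slack $\Delta^2 + 1 - \td(v)$ of \eqref{eq:knowndegrereslack}. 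This is exactly where the inlier constraints $\ta_v \le O(\avganti_K)$ and $\te_v \le O(\avgext_K + \errext_K)$ are used in an essential way, since otherwise an atypical node could consume more colors of $\pal{K}$ via its anti-neighbors than the colorful matching and pseudo-anti-degree slack can compensate for.
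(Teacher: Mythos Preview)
Your proposal is correct and follows essentially the same approach as the paper's proof. One small refinement worth noting: the paper runs \slicecolor only on nodes in moderately dense almost-cliques ($K\in\Kmod$), since the sampler of \cref{lem:sample-log2n} and the slack-preservation bound of \cref{lem:slack} are stated only for $\Kmod$; very dense almost-cliques already have $O(\log n)$ uncolored degree after \sct and are set aside as a separate layer $L_0$. You only make the $\Kmod$/$\Kvery$ distinction at the palette-learning step, which is harmless in practice (nodes in $\Kvery$ would have $b(v)\le O(\log n)$ and fall into the first layer anyway), but the paper's explicit separation is cleaner because it avoids invoking the sampler where its hypotheses are not established.
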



We find it helpful to describe the algorithm using $O(\log^2 n)$ bandwidth first, for it is based on the same high-level ideas.
The use of extra bandwidth is very limited and explicitly stated.
The reduction in bandwidth is mostly achieved through techniques from \cite{HNT22}.
Details of how this is performed can be found in \cref{sec:congestion}.

\subsection{Leader, Outliers \& Colorful Matching}
\label{sec:setup}

A useful property of almost-cliques, used by \cite{HKMT21,HKNT22,FGHKN23}, is their relative uniform sparsity.
The first step of these algorithms is to dissociate the typical nodes, called \emph{inliers}, from the atypical ones, called \emph{outliers}.
At distance-2, however, detecting outliers is difficult.
For instance, the algorithm of \cite{FGHKN23} requires to keep only nodes with anti-degree $a_v \le O(\avganti_K)$.
Such a trivial task at distance one requires work at distance-2 because nodes are unable to approximate their degree accurately (up to a constant factor).
To circumvent this limitation of the distance-2 setting, we instead compute outliers using \emph{pseudo-degrees} (\cref{def:pseudodegree}).

\begin{restatable}[Compute Outliers]{lemma}{lemmaOutliers}
\label{lem:compute-outliers}
We compute in $O(\log\log\Delta)$ rounds a set $O_K$ in each almost-clique $K$ such that $I_K \eqdef K\setminus O_K$ has size $0.95|K|$ and each $v\in I_K$ verifies
\begin{equation}
\label{eq:inliers}
\te_v \le 200(\avgext_K + \errext_K) \ , \quad \text{and}\quad \ta_v \le 200\avganti_K \ .
\end{equation}
\end{restatable}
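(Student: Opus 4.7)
The plan has two parts: a Markov-type argument exhibiting thresholds that yield the desired properties assuming oracle access to $\avgext_K+\errext_K$ and $\avganti_K$, and a distributed procedure that approximates those thresholds inside each almost-clique in $O(\log\log\Delta)$ rounds.

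First I would exhibit ideal thresholds. Since $\te_v=e_v+\errext_v$, the mean of $\te_v$ over $v\in K$ is exactly $\avgext_K+\errext_K$, so by Markov at most $|K|/40$ nodes satisfy $\te_v>40(\avgext_K+\errext_K)$. For the anti-degree side, I use the fact that pseudo-anti-degree is an underestimate of the true anti-degree, $\ta_v\le a_v$, which follows directly from the definition since for every $w\in \Ntwo(v)\cap K$ the corresponding term $1-|N_G(v)\cap N_G(w)|$ in the expansion of $|K|-\sum_{u\in N_G(v)}|N_G(u)\cap K|$ is non-positive. Hence $\tfrac{1}{|K|}\sum_{v\in K}\ta_v\le\avganti_K$ and Markov gives at most $|K|/40$ nodes with $\ta_v>40\avganti_K$. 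Declaring the union of these two exceptional sets as outliers yields $|O_K|\le |K|/20=0.05|K|$, and every $v\in I_K$ automatically satisfies \cref{eq:inliers} with $40$ in place of $200$.

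Next I would approximate the two thresholds distributively. The factor-$5$ slack in the target constant $200$ allows a quite crude approximation: it suffices to estimate the $99$-th percentile of each distribution within a small multiplicative factor. Each $v\in K$ locally computes $\te_v,\ta_v$ from $1$-hop information (one round) and joins a uniform sample $\mathcal{S}\subseteq K$ with probability $\Theta(\log n/|K|)$, so $|\mathcal{S}|=\Theta(\log n)$ w.h.p. The clique leader (available from \computeacd) collects $\{(\te_v,\ta_v):v\in\mathcal{S}\}$, computes the sample $99$-th percentile in each coordinate, and broadcasts the pair $(\tau_e,\tau_a)$ back to $K$. A Dvoretzky--Kiefer--Wolfowitz / Chernoff concentration shows that the sample percentile equals the true percentile up to additive $o(1)$ w.h.p., so the true exceedance probability of each threshold lies in roughly $[1/200,1/50]$, keeping $|O_K|\le 0.05|K|$ while the Markov bound above ensures that each threshold is at most $200$ times the corresponding average.

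The main obstacle is realizing the collection and broadcast in $O(\log\log\Delta)$ rounds of \congest at distance-$2$: the leader has only $\Delta$ one-hop neighbors in $G$ while $|K|$ may be as large as $(1+\eps)\Delta^2$, so any two-hop relaying is bandwidth-critical. The approach is to exploit the ACD guarantee $|\Ntwo(v)\cap K|\ge(1-\eps)\Delta^2$, so that every sampled $v$ shares many candidate $2$-hop relays with the leader, and then to schedule the routing so that no relay is oversubscribed. This is analogous to the clique-aggregation primitives of \cite{HKMN20,HNT22,FGHKN22}; with the $O(\log^2 n)$ bandwidth of this section, routing $O(\log n)$ scalar messages up to the leader and broadcasting back two thresholds is comfortable, and the subsequent reduction to $O(\log n)$ bandwidth follows the general strategy deferred to \cref{sec:congestion}.
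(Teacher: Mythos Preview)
Your proposal is essentially correct but takes a genuinely different route from the paper. The Markov step is the same (and the paper opens \cref{sec:outliers} with exactly this observation), but for the algorithmic part the paper does \emph{not} sample. Instead it builds an approximate histogram of the values $x_v\in\{\ta_v,\te_v\}$: nodes are bucketed into sets $S_i=\{v:(1+\eta)^i\le x_v<(1+\eta)^{i+1}\}$, and the leader aggregates $(1+\eta)$-rounded bucket sizes over a depth-$4$ BFS tree. Each rounded count fits in $O(\log\log\Delta)$ bits and there are $O(\log\Delta)$ buckets, so one level of aggregation costs $O(\log\log\Delta)$ rounds with $O(\log n)$ bandwidth. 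A \emph{Filtering Lemma} then shows that the smallest index $\tau$ for which the (approximate) prefix count reaches $(1-\delta)|K|$ yields a set $\tilde A$ of size $\ge(1-1.5\delta)|K|$ in which every node has $x_v\le 2M$, where $M$ is the Markov threshold. Your random-sampling approach is conceptually simpler and, if routed via the paper's many-to-all primitive (\cref{lem:many-to-all}) rather than deferred to \cref{sec:congestion}, even runs in $O(1)$ rounds with $O(\log n)$ bandwidth; the paper's approach buys determinism (given the values $x_v$) and avoids the percentile-concentration argument. Note that your appeal to \cref{sec:congestion} is misplaced: that section only handles color sampling and palette learning, not outlier selection.

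One small correction: your justification of $\ta_v\le a_v$ is not quite right. In the expansion $\ta_v=\sum_{w\in K}(1-|N_G(v)\cap N_G(w)|)$, a direct $G$-neighbor $w\in N_G(v)\cap K$ that shares no common $G$-neighbor with $v$ contributes $+1$, not a non-positive term. The inequality still holds because the $w=v$ term contributes $1-d_G(v)$, which absorbs all such positive contributions; but the termwise argument you wrote does not go through.
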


The general idea behind \cref{lem:compute-outliers} is that a large fraction of the almost-clique has a typical sparsity, external degree and anti-degree. By learning approximately the distribution of pseudo-external degrees and pseudo-anti-degrees, the leader can select a large enough fraction of $K$ verifying \cref{eq:inliers}.
As the proof of \cref{lem:compute-outliers} is quite technical, we defer it to \cref{sec:outliers} to preserve the flow of the paper.

Outliers can be colored in $O(\log^* n)$ rounds, thanks to the $\Omega(\Delta^2)$ slack provided by their inactive inliers neighbors. Starting from \cref{sec:sct}, we will assume outliers are all colored, thus focus on coloring inliers.

\paragraph{Colorful Matching.}
A major issue when coloring dense nodes in $G^2$ is that they do not know their palette.
We overcome this by using the \emph{clique palette} as an approximation.

\begin{definition}[Clique Palette]
\label{def:clique-palette}
For an almost-clique $K$, define its \emph{clique palette} as $\pal{K}=[\Delta^2+1]\setminus \col(K)$, i.e., the set of colors in $\set{1, 2, \ldots, \Delta^2+1}$ that are not already used by a node of $K$.
\end{definition}

This idea was first (implicitly) used by \cite{ACK19} to prove their palette sparsification theorem on almost-cliques.
This was since used formally in \cite{FGHKN22,FGHKN23}.
Note that in large almost-cliques (such that $|K| = (1+\epsilon)\Delta^2$), the clique-palette can be empty after coloring the outliers.
To remedy this issue, \cite{ACK19} compute first a colorful-matching:

\begin{definition}[Colorful Matching]
\label{def:colorful-matching}
In a clique $K$, a colorful matching $M$ is a set of anti-edges in $K$ (edges in the complement) such that both endpoints are colored the same.
\end{definition}

In an almost-clique with a colorful matching of size $|M|$, the slack provided to the clique palette by monochromatic anti-edges ensures the following fact:

\begin{fact}
\label{fact:simple-almost-clique}
In an almost-clique $K$ with $k$ uncolored nodes and a colorful matching of size $\avganti_K$,
we have $|\pal{K}| \ge k$.
\end{fact}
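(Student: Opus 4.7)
The plan is to compute $|\pal{K}|$ by counting distinct colors used in $K$, and then to show that the savings from the colorful matching compensate for the possible excess of $|K|$ over $\Delta^2+1$. By definition, $|\pal{K}| = (\Delta^2+1) - |\col(K)|$, so the goal is to upper bound $|\col(K)|$.

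First I would count distinct colors used in $K$. There are $|K| - k$ colored nodes in $K$. Since $M$ is a matching of anti-edges in $K$ where both endpoints share a color, the $2|M|$ endpoints of $M$ contribute only $|M|$ distinct colors rather than $2|M|$. The remaining $(|K|-k) - 2|M|$ colored nodes contribute at most one color each. Hence
\[
|\col(K)| \;\le\; |M| + \bigl((|K|-k) - 2|M|\bigr) \;=\; (|K|-k) - |M|\ ,
\]
which yields $|\pal{K}| \ge (\Delta^2+1) - |K| + k + |M|$. Thus it suffices to establish $|M| \ge |K| - \Delta^2 - 1$.

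Next I would use the hypothesis $|M| = \avganti_K$ together with the trivial bound on distance-2 degrees. For every $v \in K$, we have $|\Ntwo(v)| \le \Delta^2$, so $|K \cap \Ntwo(v)| \le \Delta^2$, and therefore $a_v = |K \setminus \Ntwo(v)| = |K| - |K \cap \Ntwo(v)| \ge |K| - \Delta^2$. Averaging over $v \in K$ gives $\avganti_K \ge |K| - \Delta^2$, hence $|M| \ge |K| - \Delta^2 \ge |K| - \Delta^2 - 1$, which combined with the previous inequality yields $|\pal{K}| \ge k$ (indeed $|\pal{K}| \ge k+1$).

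There is no real obstacle here; the only subtlety is confirming that the colorful matching is understood as a true matching (vertex-disjoint anti-edges), so that each anti-edge in $M$ saves exactly one color. This is implicit in \cref{def:colorful-matching}. A minor sanity check is that uncolored nodes cannot be endpoints of $M$ (since $M$-endpoints are colored), so the counts $|K|-k$ and $2|M|$ do not overlap improperly.
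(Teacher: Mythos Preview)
Your proof is correct and matches the paper's implicit reasoning (the paper states this as a fact without a standalone proof, but the same counting $|\pal{K}| \ge \Delta^2 + 1 - (|K|-k) + |M|$ combined with $|K| \le \Delta^2 + \avganti_K$ appears in the proofs of \cref{lem:sct} and \cref{lem:slack}). Your averaging step to obtain $\avganti_K \ge |K|-\Delta^2$ is a clean way to get the needed inequality; the paper elsewhere uses the node of minimum anti-degree for the same bound, which is equivalent.
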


We will use a stronger variant of \cref{fact:simple-almost-clique} which requires a colorful matching of size $\beta\avganti_K$ for some large constant $\beta > 1$ (to be defined later).
\cite{FGHKN23} gave a \congest algorithm to compute a colorful matching of size $\Theta(\avganti_K/\epsilon)$ in $O(1/\epsilon)$ rounds in cliques with a high average anti-degree.
We review this algorithm and argue it can be implemented on $G^2$ with constant overhead in \cref{sec:colorful-matching}.

\begin{proposition}[Distance-2 Colorful Matching]
\label{lem:colorful-matching}
Let $\beta \le O(1/\epsilon)$.
There exists a $O(\beta)$-round randomized algorithm \matching that computes a colorful matching of size $\beta\avganti_K$ in all almost-cliques of $G^2$ with $\avganti_K \ge \Omega(\log n)$.
\end{proposition}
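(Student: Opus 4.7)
My plan is to simulate the distance-one colorful-matching algorithm of~\cite{FGHKN23} on $G^2$ with $O(1)$ $G$-rounds of overhead per iteration, so that $O(\beta)$ iterations produce a colorful matching of size $\beta\avganti_K$ in every almost-clique with $\avganti_K\ge \Omega(\log n)$.

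I would run $O(\beta)$ iterations of the following subroutine. The leader of $K$ first disseminates a concise description of a sampling set $P\subseteq\pal{K}$ of size $\Theta(|K|)$, and each active $v\in K$ samples one color $c_v\in P$ uniformly at random. A pair of anti-adjacent $u,v\in K$ with $c_u=c_v$ whose shared color collides with no $G^2$-neighbor of $u$ or $v$ is a \emph{candidate pair}; the leader commits a maximal set of vertex-disjoint candidate pairs to $M$. The iteration-gain analysis is standard: $K$ has $\Theta(|K|\avganti_K)$ anti-edges, each monochromatic with probability $\Theta(1/|K|)$, and the ``no external collision'' check survives with constant probability since each of $v$'s at most $\Delta^2$ $G^2$-neighbors uses any given color with probability $O(1/|K|)=O(1/\Delta^2)$. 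Using $\avganti_K\ge \Omega(\log n)$, a Chernoff bound (after the standard pairing argument to break dependence between overlapping anti-edges) yields $\Omega(\avganti_K)$ surviving candidates w.h.p., a constant fraction of which enter the maximal matching; iterating $O(\beta)$ times reaches size $\beta\avganti_K$.

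The main obstacle is that an anti-adjacent pair $(u,v)\subseteq K$ has $G$-distance at least~$3$, so $u$ and $v$ cannot directly compare samples. I resolve this by routing discovery through their common $G^2$-neighbors in $K$: since $u$ and $v$ each have at least $(1-\epsilon)\Delta^2$ $G^2$-neighbors in $K$ by \cref{part:deg-inside}, they share at least $(1-2\epsilon)\Delta^2$ intermediates $w\in K$ that hear both $c_u$ and $c_v$ after a two-hop broadcast of each sampled color; one such $w$, picked by ID tie-breaking, reports the collision to the leader of $K$. This respects the $O(\log^2 n)$ bandwidth of this section because each $v$ broadcasts only a single color per iteration, and the number of collisions that any intermediate $w$ actually witnesses is $O(1)$ w.h.p.\ (it sees at most $\Delta$ samples from a set of size $\Theta(\Delta^2)$). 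The leader's maximal-matching selection and the two-hop commit to $M$ fit in $O(1)$ more $G$-rounds using the prefix-sum and permutation primitives developed for \sct\ in \cref{sec:sct}. Each iteration thus costs $O(1)$ $G$-rounds, and the whole algorithm $O(\beta)$.
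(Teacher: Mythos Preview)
Your routing argument has a real bandwidth gap. You want an intermediate $w\in K$ that is a $G^2$-neighbor of both $u$ and $v$ to ``hear both $c_u$ and $c_v$ after a two-hop broadcast.'' But $w$ has $\Theta(\Delta^2)$ $G^2$-neighbors, not $\Delta$; to collect their sampled colors, each of $w$'s $\le\Delta$ direct $G$-neighbors would have to forward the $\le\Delta$ colors it received in the first hop, which is impossible with $O(\log^2 n)$ bits per edge. Your sentence ``it sees at most $\Delta$ samples'' conflates $G$-neighbors with $G^2$-neighbors: if $w$ only sees the colors of its $\Delta$ direct $G$-neighbors, it is useless for witnessing an anti-edge $uv$, since $u$ and $v$ are at $G$-distance $\ge 3$ and hence cannot both be direct $G$-neighbors of $w$. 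The leader-based steps (broadcasting a set $P$ of $\Theta(|K|)$ colors, collecting all candidate pairs, picking a maximal matching) also need $\Theta(\Delta^2)$ bits into or out of a single node and do not fit in $O(1)$ rounds as stated.

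The paper avoids this by a different routing idea: nodes sample the color itself (with a suitable probability), and then the color range a node lands in doubles as its random-group index. Splitting $[\Delta^2+1]$ into $\Theta(\Delta^2/\log n)$ contiguous ranges $R_i$, the set $T_i$ of nodes whose sampled color falls in $R_i$ is a uniform random group of size $\Theta(\log n)$ that $2$-hop connects $K$ (\cref{lem:random-groups}); within $T_i$, a $O(\log n)$-bitmap aggregated over a BFS tree lets every $u\in T_i$ learn which colors of $R_i$ were sampled by $T_i\setminus\{u\}$. After first discarding colors that conflict with a $G^2$-neighbor (a local \trycolor-style check), any surviving node whose color is still present in its group has a same-colored anti-neighbor, and the minimum-\ID pair per color is selected via the same BFS aggregation. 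The prefix-sum primitive then trims the matching to exactly $\beta\avganti_K$. This keeps each iteration to $O(1)$ rounds with $O(\log n)$-bit messages and sidesteps the all-to-all congestion your proposal incurs.
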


\subsection{Synchronized Color Trial}
\label{sec:sct}

Synchronizing color trials in dense components is a fundamental part of all known sub-logarithmic algorithm \cite{HSS18,CLP20,HKMT21,HKNT22}.
We implement a variant of \cite{HKNT22} where a uniform permutation determines which node tries which color.
Contrary to \cite{HKNT22}, we use colors from the clique palette $\pal{K}$ (\cref{def:clique-palette}), which is easier to implement in our setting.
This approach was also used by \cite{FGHKN23} to implement the synchronized color trial in Broadcast-\CONGEST.
A major difference with \cite{FGHKN23} is that at distance-2, nodes cannot learn the whole clique-palette $\pal{K}$.

\begin{lemma}[Synchronized Color Trial, \cite{HKNT22}]
\label{lem:sct}
Let $K$ be an almost-clique with $|I_K|\ge \Omega(|K|)$ inliers.
Fix the randomness outside $K$ arbitrarily.
Let $\pi$ be a uniform random permutation of $[|I_K|]$.
If the $i$-th node in $I_K$ (for any arbitrary order) tries the $\pi(i)$-th color in $\pal{K}$ (if it exists), then, with high probability, at most $O(\avgext_K + \avganti_K + \log n)$ are uncolored in $K$.
\end{lemma}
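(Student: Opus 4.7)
The plan is to apply a direct expectation-plus-concentration analysis in the spirit of the synchronized color trial proofs of \cite{HKNT22,FGHKN23}, specialized to the distance-2 clique-palette formulation. The first step is to exploit the permutation structure: since $\pi$ is a permutation of $[|I_K|]$, distinct inliers pick distinct positions of $\pal{K}$ and therefore try pairwise distinct colors. Combined with the fact that every node of $K$ colored in earlier steps (outliers and endpoints of the colorful matching) holds a color that is already excluded from $\pal{K}$, this implies that no collision can occur between $v\in I_K$ and any other node of $K$. Anti-neighbors of $v$ inside $K$ are irrelevant by definition (they lie outside $N_{G^2}(v)$). Hence, the only way for $v$ to fail to adopt its color is that some node of $N_{G^2}(v)\setminus K$ already bears the same color.

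With this in place, I would compute $\bbE[Y]$, where $Y$ is the number of failed inliers. Condition on the (fixed) randomness outside $K$ and let $B_v \subseteq [|\pal{K}|]$ denote the positions of $\pal{K}$ whose color appears on an external neighbor of $v$; clearly $|B_v| \le e_v$. Because $\pi(i_v)$ is marginally uniform on $[|I_K|]$,
\[
\Pr[X_v = 1]\;\le\;\frac{|B_v\cap[|I_K|]|}{|I_K|}\;\le\;\frac{e_v}{|I_K|}.
\]
Summing over $v\in I_K$ and using $\sum_{v\in K} e_v = |K|\,\avgext_K$ together with $|I_K|=\Theta(|K|)$ yields $\bbE[Y] = O(\avgext_K)$.

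For the high-probability bound I would exploit the negative association of a uniformly random permutation: the indicators $\{X_v\}_{v\in I_K}$ form a negatively associated family, so a Chernoff bound for negatively correlated Bernoullis produces $Y \le O(\bbE[Y] + \log n) = O(\avgext_K + \log n)$ with probability $1-1/\poly(n)$. The remaining $\avganti_K$ term in the stated bound comes from carefully accounting for the size of $\pal{K}$ relative to $|I_K|$: by \cref{fact:simple-almost-clique} the inequality $|\pal{K}|\ge|I_K|$ is only guaranteed once the colorful matching reaches size $\avganti_K$, and any small deficit (inherent to the matching's size) leaves up to $O(\avganti_K)$ inliers whose index $\pi(i_v)$ exceeds $|\pal{K}|$ and therefore receive no color at all. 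The main technical obstacle is precisely this concentration step, since the $X_v$ are correlated through $\pi$; once negative association (or, equivalently, a Bernstein-type inequality tailored to uniform permutations) is invoked, the remainder of the proof is a clean combination of the structural observations above with the outlier selection guarantees of \cref{eq:inliers}.
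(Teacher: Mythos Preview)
Your proposal is correct and follows the same overall skeleton as the paper (no intra-$K$ collisions, only external neighbors cause failure, expectation $O(\avgext_K)$, then concentrate), but your concentration step takes a genuinely different route. The paper does \emph{not} invoke negative association: it splits $I_K$ into two halves $S_1,S_2$, orders the nodes of $S_1$, and reveals $\pi(v_1),\ldots,\pi(v_{i-1})$ sequentially; since $i\le |I_K|/2$, at least $|I_K|/2$ values remain unassigned, so the conditional failure probability of $v_i$ is at most $2e_{v_i}/|I_K|$, and the martingale Chernoff of \cref{lem:chernoff} applies directly. Your NA argument is cleaner (no halving, no factor-2 loss) but relies on a fact you should state explicitly: the permutation-matrix indicators $\mathbf{1}[\pi(i)=j]$ are negatively associated (Joag-Dev--Proschan), and each $X_v$ is an increasing function of a disjoint block of these, so the $\{X_v\}$ inherit NA. Without that justification the claim ``the indicators form a negatively associated family'' looks unsupported, since for general target sets $B_v$ the events $\{\pi(i_v)\in B_v\}$ can even be positively correlated pairwise.

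On the $\avganti_K$ term: the paper does \emph{not} go through \cref{fact:simple-almost-clique} or the colorful matching at all. It simply uses $|K|\le \Delta^2+a_v\le \Delta^2+\avganti_K$ (taking $v$ of minimum anti-degree) to get $|\pal{K}|\ge \Delta^2-(|K|-|I_K|)\ge |I_K|-\avganti_K$, so at most $\avganti_K$ inliers receive an index beyond $|\pal{K}|$. Your appeal to the matching is unnecessary and slightly out of place, since the lemma as stated does not assume the matching has been computed; the direct counting is both shorter and matches the lemma's hypotheses.
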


\begin{proof}
Let $v\in K$ be the node of minimum anti-degree.
Note that $|K| \le \Delta^2 + a_v \le \Delta^2 + \avganti_K$.
Since all uncolored nodes are inliers, even if each colored node in $K$ uses a different color, the clique palette has size $\pal{K} \ge \Delta^2-(|K|-|I_K|) \ge |I_K| - \avganti_K$.
This means that at most $\avganti_K$ nodes fail due to colors missing in $\pal{K}$.

Partition $I_K$ into two arbitrary disjoint sets $S_1$ and $S_2$ of size at least $\floor{|I_K|/2}$.
Order nodes $v_1, \ldots, v_{|S_1|}$ of $S_1$ by increasing \IDs.
Let $X_i$ be the random variable indicating if $v_i$ fails to get colored.
The only way a node of $S_1$ fails to adopt its color is if it conflicts with an external neighbor.
For any conditioning of values $\pi(v_j)$ for all $j < i \le |I_K|/2$, the probability $v_i$ fails to get colored is $\Pr(X_i=1~|~\pi(v_1), \ldots, \pi(v_{i-1}) ) \le \frac{e_v}{|I_K|/2} \le O\parens*{\frac{e_v}{|K|}}$, using $|I_K|/2 \ge \Omega(|K|)$.
Therefore, the expected number of uncolored nodes in $S_1$ is
\[ \Exp[X] \le \sum_{v\in S_1} O\parens*{\frac{e_v}{|K|}} \le \sum_{v\in K} O\parens*{\frac{e_v}{|K|}} = O(\avgext_K) \ . \]
By the martingale inequality (\cref{lem:chernoff}), w.h.p., at most $O(\avgext_K + \log n)$ nodes are uncolored in $S_1$.
The same reasoning applies to $S_2$ and by union bound over both sets, the number of uncolored nodes in $S$ after the synchronized color trial is at most $O(\avgext_K + \log n)$.
\end{proof}

To implement the synchronized color trial, a node $v$ needs only to know $\pi(v)$ and the $\pi(v)$-th color of $\pal{K}$.
We use an approach similar to \cite{FGHKN23}:
we randomly partition nodes into groups $T_1, \ldots, T_k$ to spread the workload.
Concretely, we use the following fact, which is a straightforward consequence of Chernoff and \cref{def:almost-clique}.\ref{part:deg-inside}.
\begin{fact}
\label{lem:random-groups}
Let $K$ be an almost-clique and $k\le |K|/(C\log n)$ for some large enough $C > 0$.
Suppose each $v\in K$ samples $t(v)\in[k]$ uniformly at random.
Then, w.h.p., each $T_i=\set{v\in K: t(v)=i}$ satisfies that any $u,w\in K$ have $|\Ntwo(u)\cap \Ntwo(w)\cap T_i|\ge (C/4)\log n$.
We say set $T_i$ \emph{$2$-hop connects} $K$.\footnote{Note that the 2-hops are in $G^2$. In $G$, this means that for any pair of nodes $u_0,u_4\in K$, there is a length-4 path $u_0,u_1,u_2,u_3,u_4$ in $G$ s.t.\ $u_2 \in T_i$.}
\end{fact}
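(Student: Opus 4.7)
The statement is a standard Chernoff-plus-union-bound argument, but we must first extract the deterministic lower bound on $|\Ntwo(u)\cap \Ntwo(w)\cap K|$ from the almost-clique properties before sampling. Fix an almost-clique $K$, two nodes $u,w\in K$, and a group index $i\in[k]$. By \cref{def:almost-clique}.\ref{part:deg-inside}, both $\Ntwo(u)\cap K$ and $\Ntwo(w)\cap K$ have size at least $(1-\epsilon)\Delta^2$, and $|K|\le (1+\epsilon)\Delta^2$ by \cref{def:almost-clique}.\ref{part:clique-size}. Hence by inclusion--exclusion,
\[
|\Ntwo(u)\cap \Ntwo(w)\cap K| \;\ge\; 2(1-\epsilon)\Delta^2 - |K| \;\ge\; (1-3\epsilon)\Delta^2 \;\ge\; \tfrac{1-3\epsilon}{1+\epsilon}|K| \;\ge\; \tfrac{|K|}{2}
\]
for the small $\epsilon$ fixed earlier (say $\epsilon = 10^{-5}$).

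Next I would invoke the random group assignment. Since each $v\in K$ picks $t(v)\in[k]$ uniformly and independently, membership in $T_i$ is an independent Bernoulli with success probability $1/k$. Writing $X_{u,w,i}=|\Ntwo(u)\cap \Ntwo(w)\cap T_i|$, the random variable $X_{u,w,i}$ is a sum of at least $|K|/2$ such independent Bernoullis, so
\[
\Exp[X_{u,w,i}] \;\ge\; \frac{|K|}{2k} \;\ge\; \frac{C\log n}{2},
\]
using the hypothesis $k\le |K|/(C\log n)$. A Chernoff lower tail (e.g.\ \cref{lem:chernoff} with deviation parameter $1/2$) then yields
\[
\Pr\!\left(X_{u,w,i} < \tfrac{C}{4}\log n\right) \;\le\; \exp\!\left(-\Omega(C\log n)\right).
\]

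Finally, I would take a union bound over all triples $(u,w,i)$ with $u,w\in K$ and $i\in[k]$ (and over all almost-cliques, though one can also fix $K$ and union bound globally later). The number of bad events is at most $|K|^2\cdot k\le n^3$, so choosing the constant $C$ large enough makes the total failure probability $1/\poly(n)$, giving the claim with high probability.

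\textbf{Main obstacle.} There is no real obstacle here beyond bookkeeping: the only subtlety is making sure the deterministic intersection bound $(1-3\epsilon)\Delta^2$ — which follows purely from the almost-clique definition — is large enough that the Chernoff argument can absorb the $1/k$ sampling loss and still deliver $(C/4)\log n$ after a $n^{O(1)}$ union bound. Since $\epsilon$ is a tiny fixed constant and $C$ can be chosen arbitrarily large (it enters only the threshold $k\le |K|/(C\log n)$ and the exponent of the union bound), the constants easily cooperate.
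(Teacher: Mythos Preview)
Your proof is correct and matches the paper's approach: the paper does not spell out a proof but states that the fact is ``a straightforward consequence of Chernoff and \cref{def:almost-clique}.\ref{part:deg-inside}'', which is precisely the inclusion--exclusion lower bound on $|\Ntwo(u)\cap\Ntwo(w)\cap K|$ followed by a Chernoff-plus-union-bound argument that you wrote out.
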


Contrary to \cite{FGHKN23}, at distance-2, nodes do not have the bandwidth to learn the whole clique-palette nor the full random permutation.
Fortunately, they only need to know their position in the permutation and the one corresponding color.
The main technical novelty in our distance-2 implementation lies in an algorithm to compute prefix-sums $\sum_{j < i} x_j$ where each random group $T_i$ holds a value $x_i$ (\cref{lem:prefix-sum}).
We first explain how to aggregate such prefix sums and then show it is enough for implementing the synchronized color trial.

\begin{lemma}[Prefix Sums]
\label{lem:prefix-sum}
Let $T_1, \ldots, T_k\subseteq K$ be disjoint sets that $2$-hop connect $K$.
If each $T_i$ holds a $\poly\log n$-bit integer $x_i$, then there is a $O(1)$-round algorithm such that for all $i\in [k]$, each $v\in T_i$ learns $\sum_{j < i} x_j$.
\end{lemma}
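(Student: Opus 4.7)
The plan is a two-level bucketing scheme. A single-leader aggregation is too expensive: when $k$ approaches its maximum $\Theta(\Delta^2/\log n)$, a lone node cannot absorb $k$ messages of $\polylog n$ bits within $O(1)$ rounds, given the $O(\Delta \log^2 n)$ bits-per-round a node can source or sink. But $\sqrt k \le \Delta/\sqrt{\log n}$, so $\sqrt k$ simultaneous messages to or from a single node fit comfortably. This suggests splitting the prefix-sum computation across a layer of $\sqrt k$ bucket leaders and one meta-leader, so that no node ever handles more than $O(\sqrt k)$ messages.

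Concretely, I partition $[k]$ into $b = \lceil \sqrt k \rceil$ consecutive buckets $B_0,\ldots,B_{b-1}$ of size at most $b$, designate a bucket leader $\ell_q \in K$ for each $B_q$, and a single meta-leader $\ell_\star \in K$ for the whole almost-clique (say, the lowest-ID node). The algorithm then has four constant-round phases. \emph{(i) Within-bucket collection.} For each $T_j \in B_q$, a node of $T_j$ routes $x_j$ to $\ell_q$ through a 2-hop bridge $u\to r\to \ell_q$ supplied by \cref{lem:random-groups} (used within $K$); $\ell_q$ then computes $Y_q = \sum_{T_j \in B_q} x_j$ and the within-bucket prefix sums $Z_i = \sum_{T_j \in B_q,\, j<i} x_j$ for each $T_i \in B_q$. \emph{(ii) Cross-bucket collection.} Each $\ell_q$ forwards $Y_q$ to $\ell_\star$ via 2-hop routing, and $\ell_\star$ computes $W_q = \sum_{q'<q} Y_{q'}$. \emph{(iii) Cross-bucket distribution.} $\ell_\star$ returns each $W_q$ to $\ell_q$, which sends $S_i = W_q + Z_i$ to a designated representative $r_i\in T_i$ for every $T_i \in B_q$. \emph{(iv) Within-group broadcast.} Each $r_i$ disseminates $S_i$ to all of $T_i$, again using the 2-hop bridges inside $K$. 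If $x_i$ is not already shared inside $T_i$, a preliminary $O(1)$-round intra-group aggregation (the same 2-hop primitive) distributes $x_i$ to every node of $T_i$.

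At the endpoints, each node sources or sinks $O(\sqrt k \cdot \polylog n)$ bits per phase, well within the $O(\Delta \log^2 n)$ budget. The main obstacle is to check that the intermediate bridge nodes do not become bottlenecks when the $\sqrt k$ flows per leader and the $b$ flows around $\ell_\star$ are relayed through $G$-edges: \cref{lem:random-groups} guarantees $\Omega(\log n)$ distinct 2-hop bridges for every ordered pair of nodes in $K$, so assigning each flow independently to a uniformly random bridge and applying a standard Chernoff congestion bound keeps every $G$-edge at $O(\polylog n)$ bits per round with high probability. This load-balancing verification is the only non-routine step; the prefix-sum arithmetic itself, as well as the correctness of steps (i)--(iv), is immediate.
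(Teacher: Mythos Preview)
Your approach is more complicated than the paper's, and the congestion step you flag as ``the only non-routine step'' is a genuine gap rather than a routine verification.

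The paper avoids load balancing entirely. It picks a single root $w_K\in K$ and builds a depth-$2$ BFS tree in $G$ spanning $N^2(w_K)\cap K$. Each value $x_i$ is first delivered, via one 2-hop relay from $T_i$, to a \emph{distinct leaf} $u_i$ of this tree; since the leaves $u_i$ are pairwise distinct and the sources lie in the disjoint sets $T_i$, every $G$-edge carries at most one such message in either direction. Prefix sums are then computed by one pass up and one pass down the tree: each depth-$1$ node sums its $\le\Delta$ children's values and sends the total to $w_K$, which returns the running prefixes of these totals; each depth-$1$ node then completes the prefix for each of its children. Every tree edge carries exactly one value per direction, so there is nothing to balance. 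Your premise that ``a lone node cannot absorb $k$ messages'' is true only if one insists on routing all $k$ flows directly to the leader via 2-hop paths; the BFS tree sidesteps it by letting the $\le\Delta$ depth-$1$ children of $w_K$ serve as the first aggregation layer, each over its own $\le\Delta$ children.

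In your scheme, by contrast, each bucket leader $\ell_q$ must receive $\sqrt{k}$ flows whose last hops all land on the $\le\Delta$ edges incident to $\ell_q$, and you must argue these spread out. Your appeal to \cref{lem:random-groups} is a misreading: the 2-hop-connecting property guarantees $\Omega(\log n)$ \emph{members of $T_j$} inside $N^2(\ell_q)$, not $\Omega(\log n)$ distinct \emph{relays} in $N(\ell_q)$ --- all those $T_j$-members could reach $\ell_q$ through a single common $G$-neighbor. Since the lemma is stated for \emph{arbitrary} disjoint sets $T_i$ that 2-hop connect $K$ (not random ones), nothing in the hypothesis prevents the valid relays for many flows from concentrating on one edge of $\ell_q$, so ``assigning each flow to a uniformly random bridge and applying a standard Chernoff bound'' does not go through without further structural input.
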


\begin{proof}
Compute a BFS tree rooted at some arbitrary $w_K \in K$ and spanning $N^2(w_K)\cap K$.
We order distance-2 neighbors of $w_K$ with the \emph{lexicographical order induced by the BFS tree}: distance-2 neighbors $u\in N_{G^2}(w_K)$ are ordered first by $\ID(v)$, where $v$ is the parent of $u$ in the BFS tree, and then by $\ID(u)$.
Call $u_1, u_2, \ldots, u_{|N_{G^2}(w_K)\cap K|}$ distance-2 neighbors of $w_K$ with respect to that ordering.
For each $i\in [k]$, \emph{node $u_i$ learns $x_i$}. Since $T_i$ 2-hop connects $K$, there must exist a node $r\in N(u_i)\cap N(T_i)$ which can relay $x_i$ from its neighbor in $T_i$ to $u_i$.
For each distance-1 neighbor $v_j\in N(w_K)\cap K$ of $w_K$ (i.e., depth-1 nodes in the BFS tree), let $u_{i_j}, u_{i_j+1}, \ldots, u_{i_{j+1}-1}$ be its children in the BFS tree.
Each $v_j$ can learn all values $x_{i_j}, \ldots, x_{i_{j+1}-1}$ with a broadcast.
Node $v_j$ then sends the sum $S_j \eqdef \sum_{k=i_j}^{i_{j+1}-1} x_k$ to $w_K$, which responds with
$\sum_{k < j} S_j = \sum_{k < i_j} x_k$.
For each child $u_{i_j+t}$ with $0\le t \le i_{j+1}-i_j$, the node $v_j$ communicates
\[ \sum_{k < j} S_k+\sum_{k=i_j}^{i_j+t-1} x_k = \sum_{k < i_j + t} x_k \]
to $u_{i_j+t}$, which is exactly the prefix sum it had to learn.
Each $u_i$ can then transmit its prefix sum to $T_i$ using the same path it used to learn $x_i$.
\end{proof}

We now explain how we use it to implement \sct.
Since the following algorithmic ideas are very similar to the ones of \cite{FGHKN23}, we only discuss them briefly, and claim no novelty beyond \cref{lem:prefix-sum}.

\begin{lemma}[Permute]
\label{lem:perm}
There is an algorithm that samples a uniform permutation $\pi$ of $[|I_K|]$ in $O(1)$ rounds with high probability.
The $i$-th node in $I_K$ (with respect to any ordering where $v$ knows its index) learns $\pi(i)$.
\end{lemma}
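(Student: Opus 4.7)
The plan is to combine a random partition into groups with the prefix-sum aggregation of \cref{lem:prefix-sum}. First, each $v \in K$ independently samples $t(v) \in [k]$ uniformly with $k = \Theta(|K|/\log n)$; by \cref{lem:random-groups} the resulting groups $T_1, \ldots, T_k$ have size $\Theta(\log n)$ and 2-hop connect $K$ with high probability. The global rank of each inlier $v$ will be $\pi(v) = s_{t(v)} + \ell(v)$, where $\ell(v) \in [|I_K \cap T_{t(v)}|]$ is $v$'s local rank within the inliers of its group and $s_j = \sum_{j' < j} |I_K \cap T_{j'}|$ is the starting offset of the group, obtained by invoking \cref{lem:prefix-sum} with $x_j = |I_K \cap T_j|$.

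For the within-group ordering, each inlier samples an $O(\log n)$-bit random tag; with high probability all tags are distinct inside every group, so sorting them induces a uniformly random local ranking. A designated leader per group (for instance the depth-2 descendant of the BFS root used in \cref{lem:prefix-sum}) would collect the $O(\log n)$ tags of its group via two-hop $G$-paths and return the ranks. Only $O(\log^2 n)$ bits per group are in transit; routing each inlier's message through a uniformly random relay in $N_G(v) \cap N_G(\mathrm{leader})$ --- which has $\Omega(\Delta)$ candidates for typical pairs in $K$ by \cref{part:deg-inside} --- balances the per-edge load so that $O(1)$ rounds suffice under $O(\log^2 n)$ bandwidth.

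To see that $\pi : I_K \to [N]$ with $N = |I_K|$ is uniformly random, fix any target bijection $\sigma$. Conditioned on any inliers-per-group profile $(\ell_1, \ldots, \ell_k)$, exactly one group partition and one family of within-group orderings realize $\sigma$, and their joint conditional probability is $\binom{N}{\ell_1, \ldots, \ell_k}^{-1} \cdot \prod_j 1/\ell_j! = 1/N!$. Since this value does not depend on the profile, averaging over the random sizes yields $\Pr(\pi = \sigma) = 1/N!$.

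I expect the main technical hurdle to be the within-group aggregation step: we must route all tags of each group to a single leader through length-two $G$-paths concurrently across all groups without overflowing any edge. The random-relay scheme above handles this, but its full justification relies on essentially the same load-balancing argument used for analogous communication subroutines in \cite{FGHKN23}, which this paper signals it will invoke rather than re-derive, since the genuine novelty of the subsection lies in \cref{lem:prefix-sum}.
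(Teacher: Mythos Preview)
Your overall scheme matches the paper's exactly: random groups $T_i$ of size $\Theta(\log n)$, a uniform local permutation inside each group, and global offsets via \cref{lem:prefix-sum}. Your uniformity argument (conditioning on the size profile and computing $\binom{N}{\ell_1,\dots,\ell_k}^{-1}\prod_j 1/\ell_j! = 1/N!$) is correct and in fact more explicit than what the paper writes.

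The gap is in the within-group communication. Your claim that $N_G(v)\cap N_G(\text{leader})$ has $\Omega(\Delta)$ elements does not follow from \cref{part:deg-inside}: that property bounds the number of $G^2$-neighbors in $K$, not shared $G$-neighbors, and two nodes of $K$ can share as few as one $G$-neighbor (or none, if they are anti-neighbors). Your suggested leader $u_i$ from the BFS of \cref{lem:prefix-sum} is not even in $T_i$, so it may fail to be $G^2$-adjacent to some group members. The random-relay load-balancing you sketch therefore does not go through as stated, and the deferral to \cite{FGHKN23} does not cover this, since that paper works in a different communication model.

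The paper sidesteps the issue with a much simpler observation. It takes the leader $w_i$ to be the minimum-\ID node \emph{inside} $T_i$ and notes that, for any within-$T_i$ computation (electing $w_i$, building a spanning tree, relabeling nodes with $O(\log\log n)$-bit names, broadcasting the permutation $\rho_i$ that $w_i$ samples), an edge $(u,v)$ of $G$ only ever needs to carry messages about the two groups $T_{t(u)}$ and $T_{t(v)}$: a relay $u$ forwards to $v$ only what concerns $v$'s group, and vice versa. This is $O(1)$ messages per edge per round with no load balancing at all, and since $T_i$ has $G$-diameter at most $4$ (from $2$-hop connectivity), the whole step finishes in $O(\log\log n)$ rounds.
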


\begin{proof}
Each node $v\in I_K$ picks an integer $t(i)\in[\Theta(|K|/\log n)]$ at random.
Let $T_i=\set{v\in S: t(v) = i}$.
By Chernoff bound, w.h.p., $|T_i|=O(\log n)$ and $2$-hop connects $K$ (\cref{lem:random-groups}).
In particular, each $T_i$ has hop-diameter at most 4.
Let $w_i$ be the node of minimum \ID in $T_i$. Each $T_i$ computes a spanning tree rooted at its $w_i$. This is performed in parallel for all groups, by having nodes forward the minimum ID they received from a group $T_i$ to other members of $T_i$. Note that an edge only needs to send information concerning the two groups of its endpoints. Each $T_i$ then relabels itself using small $O(\log\log n)$-bit identifiers in the range $[|T_i|]$. $w_i$ samples a permutation $\rho_i$ of $|T_i|$ and broadcasts it to $T_i$.
Since the permutation of a group needs $O(\log n)\times O(\log\log n)$ bits, after $O(\log\log n)$ rounds each $v\in T_i$ knows $\rho_i(v)$.
Then, using \cref{lem:prefix-sum}, each $v$ learns $\sum_{j < i} |T_j|$.
Finally, node $v$ sets its position to $\pi(v)=\sum_{j < i}|T_j| + \rho_i(v)$.
\end{proof}

\begin{lemma}[Free Color]
\label{lem:free-color}
Suppose each node in $v\in K$ holds an integer $i_v\in[\Delta^2+1]$. There is $O(1)$-round algorithm at the end of which each $v$ knows the $i_v$-th color of $\pal{K}$ (with respect to any globally known total order of $\pal{K}$).
Furthermore, all nodes can learn $|\pal{K}|$ in the process.
\end{lemma}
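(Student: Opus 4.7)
The plan is to turn the random groups from \cref{lem:random-groups} into a distributed lookup table for the sorted palette $\pal{K}$. First I split $[\Delta^2+1]$ into $k = \Theta(|K|/\log n)$ consecutive intervals $I_1, \ldots, I_k$ of width $\sigma = \Theta(\log n)$ and identify $I_j$ with group $T_j$. In one round every colored $v \in K$ sends $\col(v)$ along a uniformly random 2-hop path to $T_{j(v)}$, where $j(v) = \lceil \col(v)/\sigma \rceil$. Since any two $K$-nodes share $\Omega(\log n)$ length-2 connections through $T_j$ (\cref{lem:random-groups}), a Chernoff bound keeps the load on each intermediate at $O(\log n)$ messages, which fits in the $O(\log^2 n)$ bandwidth. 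After this step each $T_j$ locally stores the sorted palette slice $p_j^{(1)} < \cdots < p_j^{(m_j)}$ lying in $I_j$ and knows $m_j = |I_j \setminus \col(K)|$.

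Next I invoke \cref{lem:prefix-sum} with $x_j = m_j$ so that each $T_j$ also learns the cumulative offset $o_j = \sum_{j' < j} m_{j'}$ in $O(1)$ rounds. Group $T_j$ is then able to answer any rank query $i$ with $o_j < i \le o_j + m_j$ by returning $p_j^{(i - o_j)}$. The leader $w_K$ of the prefix-sum BFS tree recovers $|\pal{K}| = o_k + m_k$ from the same aggregation and broadcasts it back down the tree in two additional rounds, giving every node in $K$ its value.

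The main obstacle is the final query phase: $v$ knows $i_v$ but not the offsets $(o_j)_j$, and collecting these at $w_K$ for a global broadcast is too slow, because $w_K$ has only $O(\Delta \log^2 n)$ total output bandwidth and would need $\Omega(k/\log n)$ rounds to disseminate $\Theta(k)$ offsets of $O(\log n)$ bits each. To break this bottleneck I introduce a second independent random partition of $K$ into $\Theta(\sqrt{k})$ \emph{index groups} (again via \cref{lem:random-groups}). One further round of 2-hop routing lets each index group collect the $\sqrt{k}$ offsets of the data groups in its range and store them locally. Each $v$ then ships $(v, i_v)$ to a uniformly random index group; the index group performs a local binary search on its stored offsets to identify $T_{j^\ast(v)}$ and forwards $(v, i_v)$ there; finally $T_{j^\ast(v)}$ reads off $p_{j^\ast(v)}^{(i_v - o_{j^\ast(v)})}$ and sends it back along the reverse 2-hop path. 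Standard Chernoff arguments bound the load on every intermediate, index group, and data group by $O(\log n)$, so each of these stages runs in $O(1)$ rounds, yielding the claimed complexity.
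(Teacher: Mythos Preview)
Your construction of the lookup table (groups $T_j$ each learning the palette slice in interval $I_j$, then prefix sums to get offsets $o_j$) is essentially what the paper does. The gap is in your query phase.

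You write that each $v$ ``ships $(v,i_v)$ to a \emph{uniformly random} index group; the index group performs a local binary search on its stored offsets to identify $T_{j^\ast(v)}$.'' But by your own description, index group $\ell$ only stores the $\sqrt{k}$ offsets of data groups \emph{in its range}. A uniformly random index group therefore has no information about offsets outside its range and cannot determine $j^\ast(v)$ when the answer lies elsewhere. The routing step as written simply fails. You would need either an extra forwarding hop between index groups (which you do not describe and whose congestion you do not analyze) or replication of all offsets to every index group (which contradicts ``in its range'').

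More importantly, the whole two-level index machinery is unnecessary. You treat the dissemination of the offsets as a bottleneck because you imagine the leader $w_K$ broadcasting all $k$ of them. The paper sidesteps this entirely: after the prefix-sum step, every $u\in T_i$ already knows the triple $(i,\,o_i,\,I_i\cap\pal{K})$, which fits in $O(\log n)$ bits since $|I_i|=\Theta(\log n)$ is described by a bitmap. In one round each $u\in T_i$ broadcasts this triple and each $v$ broadcasts $i_v$. Now any relay $w\in N_G(u)\cap N_G(v)$ has heard both; it checks locally whether $o_i < i_v \le o_i+|I_i\cap\pal{K}|$ and, if so, reads off the answer from the bitmap and sends it to $v$. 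Because each $T_i$ $2$-hop connects $K$ (\cref{lem:random-groups}), such a relay always exists. No routing of queries, no index groups, no load-balancing argument --- just two broadcasts and local matching at relays.

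A smaller point: your color-collection step routes individual colors and relies on $O(\log^2 n)$ bandwidth. The paper instead has every node broadcast its color once; then each node $u$, knowing the colors of $N_G(u)\cap K$, sends to its $T_i$-neighbor the $O(\log n)$-bit bitmap $I_i\cap\col(N_G(u)\cap K)$, and a bitwise OR over the diameter-$4$ set $T_i$ recovers $I_i\cap\col(K)$. This keeps the lemma at $O(\log n)$ bandwidth throughout.
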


\begin{proof}
Each node $v\in K$ picks an integer $t(v)\in [\Theta(\Delta^2/\log n)]$.
Let $T_i=\set{v\in K: t(v)=i}$.
Again, w.h.p., $|T_i|=O(\log n)$ and $T_i$ $2$-hop connects $K$.
Each node broadcasts its color (if it adopted one) and its group number $t(v)$.
Let $R_i = \set{i\cdot \Theta(\log n), \ldots, (i+1)\cdot\Theta(\log n) - 1}$.
Let $S_{u,i} = R_i\cap \col(N(u)\cap K)$ be the colors from range $R_i$ used by neighbors of $u$.
For each $i\in[k]$, node $u$ can describe $S_{u,i}$ to each neighbor in $T_i$ using a $O(\log n)$-bitmap.
Since each $T_i$ has diameter $4$ and $2$-hop connects $K$, after $O(1)$ rounds of aggregation on bitmaps using a bitwise OR, each node in $T_i$ knows $R_i\cap \col(K)$, i.e., all colors from range $R_i$ used in $K$.
Note that this also allows them to compute $R_i\setminus\col(K) = R_i\cap \pal{K}$, i.e., the colors of $R_i$ that are \emph{not} used by a node of $K$.
By \cref{lem:prefix-sum}, nodes of $T_i$ learn $\sum_{j< i}|R_j\cap\pal{K}|$ in $O(1)$ rounds.
Finally each $v$ broadcasts $i_v$ and each $u\in T_i$ broadcasts $i$, $\sum_{j < i}|R_j\cap\pal{K}|$ and $R_i\setminus \col(K)$.
Since each set $T_i$ 2-hop connects $K$, if the $i_v$-th color of $\pal{K}$ belongs to range $R_i$ (i.e., $\sum_{j < i}|R_j\cap\pal{K}| \le i_v < \sum_{j \le i} |R_j\cap\pal{K}|$), then there exists a $u\in T_i$ and $w\in N(u)\cap N(v)$ which knows both $i_v$ and the color it corresponds to.
Then $w$ can transmit that information to $v$.

To learn $|\pal{K}|$, nodes aggregate the sum of all $|R_i\cap\pal{K}|$.
This can easily done with a BFS (and electing a leader in each group to avoid double counting).
\end{proof}

\subsection{Slack Color (with extra bandwidth)}
\label{sec:slack-color-log2n}

After the synchronized color trial, uncolored nodes have degree 
proportional to the slack they received from \slackgeneration (\cref{lem:slack-generation}).
Contrary to \cite{CLP20,HKMT21,HKNT22}, nodes cannot trivially try colors from their palettes, for they lack direct knowledge of it.
In this section, we give a solution that uses $O(\log^2 n)$ bandwidth and defer the \congest implementation to \cref{sec:slack-color-logn}.
The idea is to sample $\Theta(\log n)$ colors from the clique palette, which is accessible by \cref{lem:free-color}.
Note that this step is needed only in high-sparsity cliques: if $\avganti_K+\avgext_K \le O(\log n)$, then its remaining uncolored nodes after the synchronized color trial have degree $O(\log n)$.
This motivates the following definition:
%

\begin{definition}[$\Kmod$, $\Kvery$]
\label{def:classification-acd}
Let $C>0$ be a large enough constant.
We say that almost-clique $K$ is
\emph{very dense} if $\avganti_K < C\log n$ and $\avgext_K, \errext_K < 4C\log n$.
Reciprocally, we say $K$ is \emph{moderately dense} if it is not very dense.
We call $\Kmod$ the set of moderately dense almost-cliques and $\Kvery$ the very dense ones.
\end{definition}

\cref{lem:slack} shows that in moderately dense almost-cliques, the clique palette preserves the slack provided by early steps of the algorithm (slack generation, colorful matching and degree slack).

\begin{lemma}[The Clique Palette Preserves Slack]
\label{lem:slack}
After \slackgeneration and \matching,
for all inlier $v\in I_K$ with $K\in \Kmod$,
we have \[ |\pal{v}\cap\pal{K}| \ge \hatd(v) + \Omega(\te_v+\avgext_K + \avganti_K)\ . \]
In particular, for any such $v\in I_K$ with $K\in\Kmod$, we have $|\pal{v}\cap\pal{K}| \ge \Omega(|\pal{K}|)$.
\end{lemma}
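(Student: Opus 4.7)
I will bound $|\pal{v}\cap\pal{K}|$ from below via two complementary inequalities obtained by tracking where losses can occur. Any color in $\pal{v}\setminus\pal{K}$ is used by some node of $K$ but by no distance-2 neighbor of $v$, so it must be used by an anti-neighbor of $v$ inside $K$; hence $|\pal{v}\cap\pal{K}|\geq |\pal{v}|-a_v$. Dually, colors in $\pal{K}\setminus\pal{v}$ are used by an external distance-2 neighbor of $v$, yielding $|\pal{v}\cap\pal{K}|\geq |\pal{K}|-e_v$. I will then expand both palettes to expose all four available slack sources: (i) the degree slack $\Delta^2+1-d(v)$, which \cref{eq:knowndegrereslack} decomposes as $(\Delta^2+1-\td(v))+\errext_v+\erranti_v$ with $\Delta^2+1-\td(v)\geq 0$ always; (ii) the color-reuse slack $\sigma_v$ from \slackgeneration, which by \cref{lem:slack-generation,lem:bound-ext} satisfies $\sigma_v\geq \Omega(\spar_v)\geq \Omega(\max(e_v,a_v))$ w.h.p.\ when $\spar_v\geq \Omega(\log n)$; (iii) the clique-palette boost $M=\beta\avganti_K$ from \matching (\cref{lem:colorful-matching}), with a large constant $\beta$ of our choice; and (iv) the inlier bounds $\ta_v\leq 200\avganti_K$ and $\te_v\leq 200(\avgext_K+\errext_K)$ of \cref{eq:inliers}.

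\textbf{Main inequality.} Plugging $|\pal{v}|\geq \hatd(v)+(\Delta^2+1-d(v))+\sigma_v$ and $\erranti_v=a_v-\ta_v$ into the first inequality gives
\[
|\pal{v}|-a_v \;\geq\; \hatd(v) + (\Delta^2+1-\td(v)) + \errext_v - \ta_v + \sigma_v,
\]
so $\erranti_v$ cancels and only $\ta_v\leq 200\avganti_K$ appears negatively. For the second, $|\pal{K}|\geq \Delta^2+1-|\colored{K}|+M$ together with $|K|\leq \Delta^2+a_v$ and $|\uncolored{K}|\geq \hatd(v)-e_v$ yields an estimate of $|\pal{K}|-e_v$ in which the matching contribution $M-a_v\geq (\beta-200)\avganti_K-\erranti_v$ dominates the anti-neighbor loss once $\beta$ is chosen large relative to the $200$ of \cref{eq:inliers}. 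Averaging the two inequalities: the $-\ta_v$ loss from the first is absorbed by the extra $(\beta-200)\avganti_K$ from the second; the residual $-\erranti_v$ in the second is dominated by $\sigma_v\geq \Omega(a_v)\geq \Omega(\erranti_v)$ (\cref{lem:bound-anti} applied to $a_v\geq \erranti_v$); and the $-2e_v\leq -400(\avgext_K+\errext_K)$ loss in the second is covered jointly by the non-negative $\errext_v$ and by $\sigma_v\geq \Omega(e_v)$. A short case split using \cref{def:classification-acd} handles the regime where $\spar_v$ is too small for the high-probability guarantee of \cref{lem:slack-generation}: if $\avganti_K\geq C\log n$ the colorful matching alone supplies $\Omega(\avganti_K)$; otherwise $\max(\avgext_K,\errext_K)\geq 4C\log n$ and the required slack comes from $\errext_v$ combined with $\sigma_v\geq \Omega(e_v)$ whenever $e_v=\Omega(\log n)$, while the remaining sub-regime $e_v=o(\log n)$ forces $\te_v\approx \errext_v$, already non-negatively captured. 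This yields $|\pal{v}\cap\pal{K}|\geq \hatd(v)+\Omega(\te_v+\avgext_K+\avganti_K)$.

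\textbf{The ``in particular'' claim, and the main obstacle.} Since $e_v\leq \te_v$, the main inequality already gives $|\pal{v}\cap\pal{K}|\geq \Omega(\te_v)\geq \Omega(e_v)$, and the trivial bound $|\pal{K}|\leq |\pal{v}\cap\pal{K}|+|\pal{K}\setminus\pal{v}|\leq |\pal{v}\cap\pal{K}|+e_v$ then yields $|\pal{K}|\leq O(|\pal{v}\cap\pal{K}|)$, i.e., $|\pal{v}\cap\pal{K}|\geq \Omega(|\pal{K}|)$. The main obstacle is the constant bookkeeping: the factor $200$ of \cref{eq:inliers} enters as a loss at both the anti-degree and external-degree steps, forcing the matching constant $\beta$ of \cref{lem:colorful-matching} to be chosen \emph{after} the inlier-selection constants and large enough (alongside the right convex-combination weights) to absorb both. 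Moreover, because \cref{lem:slack-generation} only produces sparsity slack w.h.p.\ when $\spar_v\geq \Omega(\log n)$, the moderate-density case split is essential to ensure that in every sub-regime at least one of the four slack sources delivers the required $\Omega(\te_v+\avgext_K+\avganti_K)$.
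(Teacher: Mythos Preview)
Your overall structure matches the paper's closely: the two loss bounds $|\pal{v}\setminus\pal{K}|\le a_v$ and $|\pal{K}\setminus\pal{v}|\le e_v$ (the paper uses $\colored{e}_v$, a minor sharpening), the averaging/double-counting step, and the case split on $\avganti_K$ and $\avgext_K$ are all there. Your derivation of the ``in particular'' clause is also exactly the paper's.

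However, there is a genuine gap in your case analysis. You need the final bound to contain $\Omega(\avgext_K)$ (and, in Case 4 of the paper, $\Omega(\errext_K)$). Consider the regime $\avganti_K<C\log n$, $\avgext_K\ge 4C\log n$, but with this particular inlier $v$ having $\te_v\ll\avgext_K$ (nothing forbids this; inlier selection only gives an \emph{upper} bound on $\te_v$). Then $e_v$ and $\errext_v$ are both small, so neither $\sigma_v=\Omega(e_v)$ nor the ``non-negatively captured'' $\errext_v$ supplies anything comparable to $\avgext_K$. Your write-up simply asserts the slack ``comes from $\errext_v$ combined with $\sigma_v$'' here, but neither term is $\Omega(\avgext_K)$ in this sub-case. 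The same issue arises when $\avganti_K\ge C\log n$ but $\avganti_K\ll\avgext_K$ and $\te_v$ is small: the matching gives $\Omega(\avganti_K)$, not $\Omega(\avgext_K)$.

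The missing idea, which the paper isolates as a separate Fact inside the proof, is that the quantity $\Delta^2-|K|=(\Delta^2-\td(v))+\te_v-\ta_v$ is the same for every $v\in K$ and hence equals its average $\ge\errext_K+\avgext_K-\avganti_K$. So whenever $\te_v$ and $\avganti_K$ are both $\le\avgext_K/4$ (resp.\ $\le\errext_K/4$), one gets $\Delta^2-\td(v)\ge\avgext_K/2$ (resp.\ $\ge\errext_K/2$): the node has large \emph{known degree slack}, and that is where the $\Omega(\avgext_K)$ term comes from in this sub-regime. You list $\Delta^2+1-\td(v)$ as a slack source in your setup but never invoke it quantitatively in the case split; without this averaging observation the argument does not close.
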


\begin{proof}
Clearly, $|K|= |\Ntwo(v)\cap K| + a_v$.
We carefully add all contributions to the degree slack of a node
\[ \Delta^2 = (\Delta^2 - \td(v)) + \td(v) = |\Ntwo(v) \cap K| + e_v + (\Delta^2 - \td(v)) + \errext_v + \erranti_v \ . \]
The clique palette loses one color for each colored node but saves one for each edge in the colorful matching. Recall that $\hK$ denotes the uncolored part of $K$. The clique palette has size at least
\[ |\pal{K}| \ge \Delta^2 - (|K|-|\hK|) + |M| \ge |\hK| + e_v + |M| - a_v + (\Delta^2 - \td(v)) + \errext_v + \erranti_v \ . \]
Let $s$ be the slack received w.h.p.\ by $v$ after \slackgeneration: if $e_v \ge C\log n$ then $s\eqdef\Omega(\slack_v) \ge \Omega(e_v)$ (\cref{lem:slack-generation} and \cref{lem:bound-ext}), otherwise $s=0$.
The palette of $v$ is of size at least
\[ |\pal{v}| \ge \hatd(v) + s + (\Delta^2 - \td(v)) + \errext_v + \erranti_v \ . \]
Notice $|\pal{v}\setminus\pal{K}| \le a_v$ and $|\pal{K}\setminus\pal{v}| \le \colored{e}_v$ (recall $\colored{e}_v$ is the number of \emph{colored} external neighbors).
A double counting argument bounds the number of colors in both $v$'s palette and the clique palette:
\begin{align*}
2|\pal{v}\cap\pal{K}| &= |\pal{v}| + |\pal{K}| - |\pal{v}\setminus\pal{K}| - |\pal{K}\setminus\pal{v}|\\
&\ge \hatd(v)+ |\hK| + (e_v - \colored{e}_v) + s + |M| - 2a_v + 2\erranti_v + 2\errext_v + 2(\Delta^2 - \td(v)) \\
&\ge 2\hatd(v) + s + |M| - 2\ta_v + 2\errext_v + 2(\Delta^2 - \td(v)) \ , \addtocounter{equation}{1}\tag{\theequation} \label{eq:double-counting}
\end{align*}
where the second inequality uses $|\hK| + (e_v - \colored{e}_v) \ge \hatd(v)$ and $\erranti_v - a_v = (a_v - \ta_v) - a_v = -\ta_v$.

The remaining of this proof is a careful case analysis to show that \cref{eq:double-counting} implies the result. Slack implicitly refers to the slack in the clique palette, i.e., node $v$ has slack $x$ if $|\pal{v}\cap\pal{K}| \ge \hatd(v)+x$. Each case of our analysis corresponds to a regime for $\avganti_K$ and $\avgext_K$, since $v$ receives slack from the coloring matching only when $\avganti_K > \Omega(\log n)$ (\cref{lem:colorful-matching}) and from slack generation when $e_v \ge \Omega(\log n)$ (\cref{lem:slack-generation}).
When both quantities are too small, the following fact implies nodes must have slack from a low degree.

\begin{fact}
\label{fact}
If $\avganti_K, \te_v \le \avgext_K/4$, then $\Delta^2-\td(v) > \avgext_K/2$.
\end{fact}

\begin{proof}
For all $v\in K$, we have $|K|=|\Ntwo(v)\cap K|+a_v$ and $\Delta^2 = (\Delta^2-\td(v)) + |\Ntwo(v)\cap K| + \err_v + e_v$, \cref{eq:D-minus-K} holds:
\begin{equation}
\label{eq:D-minus-K}
    \Delta^2 - |K| = (\Delta^2-\td(v)) + \err_v + e_v - a_v = (\Delta^2-\td(v)) + \te_v - \ta_v \ .
\end{equation}
Since this holds for all nodes, it also holds on average:
\begin{equation}
\label{eq:D-minus-K-avg}
    \Delta^2 - |K| \ge \err_K + \avgext_K - \avganti_K \ .
\end{equation}
We conclude by replacing \cref{eq:D-minus-K-avg} in \cref{eq:D-minus-K}:
\begin{align*}
\Delta^2 - \td(v) &\ge (\Delta^2-|K|) - \te_v \tag{by \cref{eq:D-minus-K}}\\
    &\ge \avgext_K - \avganti_K - \te_v  \tag{by \cref{eq:D-minus-K-avg}}\\
    &\ge \avgext_K/2 \tag{because $\avganti_K, \te_v \le \avgext_K/4$} \ .
\end{align*}
\end{proof}

\paragraph{Case 1:} If $\avganti_K > C\log n$ and $\avgext_K < 4C\log n$\textbf{.}
We compute a colorful matching of size $|M|\ge 402\avganti_K$. Thus, all nodes have slack $|M|-2\ta_v \ge 2\avganti_K$, because $\ta_v \le 200\avganti_K$ for all inliers (\cref{lem:compute-outliers}).
If $e_v \ge \avganti_K > C\log n$, then $v$ receives slack $\Omega(e_v)$ from slack generation; hence it has $\Omega(\avganti_K + \avgext_K + \te_v)$ slack by \cref{eq:double-counting}.
Otherwise, if $\avganti_K> e_v$, it gets enough slack from the colorful matching.

\paragraph{Case 2:} If $\avganti_K > C\log n$ and $\avgext_K \ge 4C\log n$\textbf{.}
Similarly to case 1, nodes have slack $\avganti_K$.
If $\avganti_K > \avgext_K/4$ or $\errext_v \ge \avgext_K/8$, then it has enough slack.
Finally, if $e_v > \avgext_K/8 > \Omega(\log n)$, then $v$ received $\Omega(e_v)$ slack from \slackgeneration; hence has slack $\Omega(\avganti_K + \avgext_K + \te_v)$.
The only remaining possibility is that $\avganti_K, \te_v \le \avgext_K/4$. Then, \cref{fact} shows that $\Delta^2-\td(v) \ge \avgext_K/2 \ge \Omega(\avganti_K + \avgext_K + \te_v)$ and we are done.

\paragraph{Case 3:} If $\avganti_K < C\log n$ and $\avgext_K > 4C\log n$\textbf{.}
If $e_v > \avgext_K/8 \ge \Omega(\log n)$, then $v$ has slack $\errext_v + \Omega(e_v) \ge \Omega(\avganti_K + \avgext_K + \te_v)$ from slack generation, so we are done.
If $\errext_K > \avgext_K/8$, then again we are done.
Otherwise, $\avganti_K, \te_v \le \avgext_K/4$ and by \cref{fact} we conclude that all nodes have enough degree slack.

\paragraph{Case 4:} If $\avganti_K < C\log n$ and $\avgext_K < 4C\log n$\textbf{.}
Since $K\in\Kmod$, it must be that $\errext_K > 4C\log n$.
If $\te_v$ is greater than $\errext_K/8$, then $v$ has slack $\Omega(\te_v)\ge \Omega(\te_v + \errext_K)\ge\Omega(\avganti_K+\avgext_K+\te_v)$ and we are done.
So we can assume $\avganti_K, \te_v < \errext_K/4$.
We argue that the degree slack must be large. Similarly to \cref{fact}, we have
\begin{align*}
\Delta^2 - \td(v) &\ge (\Delta^2-|K|) - \te_v \tag{by \cref{eq:D-minus-K}}\\
    &\ge \errext_K - \avganti_K - \te_v \tag{by \cref{eq:D-minus-K-avg}}\\
    &\ge \errext_K/2 \ge \Omega(\avganti_K + \avgext_K + \te_v) \tag{by assumption}\ .
\end{align*}

\paragraph{Constant density.}
Observe that if $|\pal{K}| > 2e_v$ then $|\pal{v}\cap\pal{K}| \ge |\pal{K}|-e_v \ge |\pal{K}|/2$.
Otherwise if $e_v \ge |\pal{K}|/2$, we use $|\pal{v}\cap\pal{K}|\ge\Omega(e_v)$ (which we just proved) to deduce that $|\pal{v}\cap\pal{K}|\ge \Omega(e_v) \ge \Omega(|\pal{K}|)$.
\end{proof}

\cref{lem:sample-log2n} is the main implication of \cref{lem:slack}. It states that we can use random sampling in the clique palette, instead of nodes' palettes, to try colors in \slicecolor (\cref{lem:slack-color}).
In particular, after \sct (Step \cref{step:sct} in \cref{alg:high-level}), \slicecolor with the sampling process described in \cref{lem:sample-log2n} reduces degrees to $O(\log n)$ in $O(\log\log n)$ rounds.

\begin{lemma}
\label{lem:sample-log2n}
There is an $O(1)$-round algorithm (using $O(\log^2 n)$ bandwidth) that when run after \slackgeneration and \matching, achieves the following:
It samples a random color $\rC_v\in \pal{v}\cup\set{\bot}$ for all uncolored dense nodes $v\in K\in \Kmod$ such that $\Pr(\rC_v=\bot)\le 1/\poly(n)$ and $\Pr(\rC_v=c) \le \frac{1}{\hatd(v)+\Omega(\avganti_K+\avgext_K+\te_v)}$ for all colors $c\in \pal{v}\cap\pal{K}$.
\end{lemma}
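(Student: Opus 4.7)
The plan is to leverage \cref{lem:slack} to reduce the problem to uniform sampling in $\pal{K}$ followed by rejection of colors not in $\pal{v}$. Indeed, \cref{lem:slack} tells us both that $|\pal{v}\cap\pal{K}|\ge\hatd(v)+\Omega(\avganti_K+\avgext_K+\te_v)$ and that $|\pal{v}\cap\pal{K}|\ge\Omega(|\pal{K}|)$, so a constant-density rejection sampler will succeed with high probability and simultaneously yield the required per-color upper bound.

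\emph{Algorithm.} Each uncolored $v\in K\in\Kmod$ first learns $|\pal{K}|$ via \cref{lem:free-color}. It then draws independently $T=\Theta(\log n)$ uniform indices $i_1,\dots,i_T\in[|\pal{K}|]$ and uses $T$ parallel instances of \cref{lem:free-color} to retrieve the corresponding colors $c_1,\dots,c_T\in\pal{K}$. Running $T$ parallel copies of \cref{lem:free-color} multiplies its bandwidth by $T=O(\log n)$, fitting into our $O(\log^2 n)$ budget in $O(1)$ rounds (the underlying \cref{lem:prefix-sum} is simply called with $T$-dimensional vectors). To test membership in $\pal{v}$, node $v$ forwards $(c_1,\dots,c_T)$ to every $u\in N(v)$, and each $u$ replies with a $T$-bit bitmap whose $j$-th entry is $1$ iff some $w\in N(u)\setminus\{v\}$ satisfies $\col(w)=c_j$; again this uses $O(\log^2 n)$ bandwidth in $O(1)$ rounds and lets $v$ determine exactly which $c_j$ lie in $\pal{v}$. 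Finally $v$ outputs $\rC_v:=c_{j^\star}$ with $j^\star:=\min\{j:c_j\in\pal{v}\}$, or $\bot$ if the set is empty.

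\emph{Analysis.} Since the $c_j$ are i.i.d.\ uniform on $\pal{K}$,
\[
\Pr(\rC_v=\bot)=\left(1-\frac{|\pal{v}\cap\pal{K}|}{|\pal{K}|}\right)^{T}\le e^{-\Omega(T)}\le 1/\poly(n),
\]
using $|\pal{v}\cap\pal{K}|\ge\Omega(|\pal{K}|)$ from \cref{lem:slack} and choosing $T$ large enough. For uniformity, fix any permutation $\sigma$ of $\pal{K}$ that stabilizes $\pal{v}\cap\pal{K}$ set-wise; then $(\sigma(c_1),\dots,\sigma(c_T))$ has the same law as $(c_1,\dots,c_T)$, and the selection rule ``first $c_j\in\pal{v}$'' commutes with $\sigma$ (with $\sigma(\bot)=\bot$). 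Hence $\Pr(\rC_v=c)=\Pr(\rC_v=\sigma(c))$ for every such $\sigma$, so the conditional law of $\rC_v$ given $\rC_v\neq\bot$ is uniform on $\pal{v}\cap\pal{K}$. Combining with the first bound of \cref{lem:slack},
\[
\Pr(\rC_v=c)\le\frac{1}{|\pal{v}\cap\pal{K}|}\le\frac{1}{\hatd(v)+\Omega(\avganti_K+\avgext_K+\te_v)}
\]
for every $c\in\pal{v}\cap\pal{K}$, and $\Pr(\rC_v=c)=0$ trivially for $c\notin\pal{v}$.

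\emph{Main obstacle.} The substantive points are (i) parallelizing \cref{lem:free-color} $T$ times within $O(\log^2 n)$ bandwidth, handled by running the prefix-sum aggregation of \cref{lem:prefix-sum} on $T$-vectors of $O(\log n)$-bit integers, and (ii) letting $v$ test membership of $T$ colors in $\pal{v}$ without $v$ knowing its distance-$2$ colors, resolved by a single bitmap exchange over $N(v)$. Everything else is a direct consequence of the symmetry argument and of the two size estimates already provided by \cref{lem:slack}.
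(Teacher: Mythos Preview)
Your proposal is correct and follows essentially the same approach as the paper: sample $\Theta(\log n)$ uniform colors from $\pal{K}$ via \cref{lem:free-color} (exploiting the extra bandwidth), filter out those not in $\pal{v}$ by a single exchange with distance-1 neighbors, and appeal to \cref{lem:slack} for both the failure probability and the per-color bound. The only cosmetic differences are that the paper outputs a uniformly random surviving color rather than the first one, and justifies uniformity via an equivalent random-permutation description rather than your (slightly cleaner) symmetry argument; both selection rules yield the uniform law on $\pal{v}\cap\pal{K}$ and hence the same bound.
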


\begin{proof}
Fix a node $v\in K$.
Nodes of $K$ can learn $|\pal{K}|$ by \cref{lem:free-color}.
Then $v$ samples $x=\Theta(\log n)$ indices in $[|\pal{K}|]$.
By \cref{lem:free-color}, using $O(\log^2 n)$ bandwidth, each node can learn in $O(1)$ rounds the colors corresponding to the indices they sampled.
They broadcast this list of colors (using $O(\log^2 n)$ bandwidth) and drop all colors used by neighbors (i.e., that are not in their palette).
Finally, node $v$ picks $\rC_v$ uniformly at random among the remaining ones.
Since $|\pal{K}\cap\pal{v}| \ge \Omega(|\pal{K}|)$, by sampling $\Theta(\log n)$ colors, we sample at least one color $\pal{K}\cap\pal{v}$ with high probability (i.e., $\Pr(\rC_v=\bot)\le1/\poly(n)$).
To argue about the uniformity (\cref{eq:uniform-slack-color}), we observe that sampling $x=\Theta(\log n)$ indices in $[|\pal{K}|]$ and then trying a random one of those is equivalent to sampling a uniform permutation $\pi$ of $[|\pal{K}|]$ (the $x$ sampled indices are $\pi^{-1}(1), \ldots, \pi^{-1}(x)$) and trying the color $c\in\pal{K}\cap\pal{v}$ with the smallest $\pi(c)$ (if $\min\pi(\pal{K}\cap\pal{v}) < x$).
Hence, if we call $Z=\min\pi(\pal{K}\cap\pal{v})$, we have
\begin{align*}
    \Pr(\rC_v=c) &= \Pr(Z < x~\wedge~\pi(c) = Z)\\
    &\le \Pr(\pi(c) = Z) = \frac{1}{|\pal{K}\cap\pal{v}|} \\
    &\le \frac{1}{\hatd(v) + \Omega(\avganti_K+\avgext_K+\te_v)} \ . \tag{by \cref{lem:slack}}
\end{align*}
\end{proof}

\subsection{Learning Small Palettes (with extra bandwidth)}
\label{sec:learn-palette-log2n}

Assume we are given sets $L_1, \ldots, L_{\ell}$ for some $\ell=O(\log\log n)$ such that the maximum uncolored degree in each $G[L_i]$ is at most $O(\log n)$.
We explain how nodes learn a list $L(v)$ of $\hatd(v)+1$ colors in their palette, with respect to the current coloring of $G^2$.

\begin{lemma}[Learn Palette]
\label{lem:learn-palette}
Let $H$ be an induced subgraph of $G^2$ with maximum uncolored degree $O(\log n)$.
There is a $O(\log\log n)$-round algorithm at the end of which each node in $H$ knows a set $L(v)\subseteq \pal{v}$ of $\hatd_H(v)+1$ colors with high probability.
\end{lemma}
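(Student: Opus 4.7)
I would split the argument according to the density classification of \cref{def:classification-acd}, treating moderately and very dense almost-cliques separately. In both regimes the goal is, in $O(\log\log n)$ rounds of \congest with $O(\log^2 n)$ bandwidth, to equip each uncolored $v\in H$ with $\hatd_H(v)+1 = O(\log n)$ colors certified to lie in its full distance-2 palette $\pal{v}$.

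For $K\in\Kmod$, the key input is \cref{lem:slack}, which guarantees $|\pal{v}\cap\pal{K}|\ge \Omega(|\pal{K}|)$ for every inlier $v$. The plan is essentially a parallel-repetition version of \cref{lem:sample-log2n}: each $v$ picks $s=\Theta(\log n)$ uniform indices in $[|\pal{K}|]$, uses \cref{lem:free-color} to recover the corresponding candidate colors in $O(1)$ rounds, and then discards those that are used by some distance-2 neighbor. A Chernoff bound on the sampling shows that $\Omega(\log n)\ge \hatd_H(v)+1$ candidates survive in $\pal{v}\cap\pal{K}$ w.h.p., so $L(v)$ can be taken to be the surviving set.

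For $K\in\Kvery$, where $\avganti_K,\avgext_K,\errext_K = O(\log n)$ and therefore every inlier satisfies $e_v,a_v = O(\log n)$, \cref{lem:slack} is not directly available but external communication becomes cheap. I would gather a prefix of $\Theta(\log n)$ colors of $\pal{K}$ using the BFS-based routine underlying \cref{lem:prefix-sum}, disseminate it through $K$ in $O(\log\log n)$ rounds (the list fits in $O(\log^2 n)$ bandwidth), and then have each $v$ cross off the $O(\log n)$ colors used by its external or anti-neighbors, which can be collected via the random-group partition of \cref{lem:random-groups}. The colorful matching (\cref{lem:colorful-matching}) together with the inlier bound $\ta_v \le O(\avganti_K)$ ensures $|\pal{K}| \ge \hatd(v)+\Omega(\avganti_K)$, which suffices to leave $\hatd_H(v)+1$ surviving colors.

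\paragraph{Main obstacle.}
The real technical work is the filter step in the moderately dense case: node $v$ must check whether each of its $\Theta(\log n)$ sampled colors is used by any of its (potentially $\Theta(\Delta^2)$ many) distance-2 neighbors, while only spending $O(\log^2 n)$ bits per edge per round. I plan to lift the range-bitmap trick from \cref{lem:free-color}: partition $[\Delta^2+1]$ into ranges $R_j$ of size $\Theta(\log n)$, let random groups $T_j\subseteq K$ aggregate bitmaps of colors in $R_j$ used by their common external neighbors, and have each $v$ look up the bitmaps relevant to its samples. Verifying the bandwidth accounting should follow from the Chernoff-type load balancing of \cref{lem:random-groups}, but extending this cleanly from ``colors used in $K$'' to ``colors used by the external distance-2 neighbors of $v$'' is where I expect the bookkeeping to be most delicate.
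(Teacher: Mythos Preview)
Your moderately-dense case is essentially the paper's approach, but you over-engineer the filter step. With $O(\log^2 n)$ bandwidth you do not need any range-bitmap machinery: once $v$ knows its $\Theta(\log n)$ sampled colors it simply broadcasts the whole list (that is $O(\log^2 n)$ bits), and each relay $u\in N_G(v)$ replies with a $\Theta(\log n)$-bit bitmap indicating which of those colors appear in $\col(N_G(u))$. That is exactly what the paper does in \cref{lem:learn-moderatly-dense-log2n}, in $O(1)$ rounds.

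Your very-dense case has real gaps. First, in $\Kvery$ we have $\avganti_K < C\log n$, so \cref{lem:colorful-matching} gives no matching; your appeal to it for $|\pal{K}| \ge \hatd(v)+\Omega(\avganti_K)$ is unsupported. Second, colors used by anti-neighbors should be \emph{added} to the list, not ``crossed off'': they are in $\pal{v}$ even if not in $\pal{K}$, and without them $\pal{K}\cap\pal{v}$ can fall short of $\hatd(v)+1$ whenever $|K|>\Delta^2$. The paper's list is $L(v)=(\pal{K}\cup\col(K\setminus\Ntwo(v)))\cap\pal{v}$, and the size bound crucially uses the $+\colored{a}_v$ term. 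Third, a $\Theta(\log n)$-sized prefix of $\pal{K}$ is too small: after removing up to $e_v=O(\log n)$ external conflicts you may have nothing left. The paper either learns all of $\pal{K}$ (when $|\pal{K}|\le O(\log^2 n)$) or a $\Theta(\log^2 n)$-sized subset $D$, so that $|D\cap\pal{v}|\ge |D|-e_v \ge \Theta(\log^2 n)-O(\log n)\ge \hatd_H(v)+1$; this margin is the point. Dissemination is done with the many-to-all broadcast of \cref{lem:many-to-all}, not with \cref{lem:prefix-sum}.
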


The argument is two-fold, we deal with $v\in K\in \Kmod$ nodes and very dense nodes $v\in K\in\Kvery$ separately.
Here, we assume $O(\log^2 n)$ bandwidth.
The $O(\log n)$ bandwidth argument can be found in \cref{sec:learn-palette-logn}.

\paragraph{Moderately Dense Almost-Cliques.}
Using the sampling algorithm from \cref{lem:sample-log2n}, nodes can sample $C\log n$ many colors in their palette in $O(1)$ rounds, for any arbitrarily large constant $C > 0$. Since uncolored degrees in $H$ are $O(\log n)$, this suffices for \cref{lem:learn-palette}.

\begin{lemma}
\label{lem:learn-moderatly-dense-log2n}
Let $H$ be an induced subgraph of $G^2$ with maximum uncolored degree $C'\log n$ for a large constant $C' > 0$.
There is a $O(1)$-round algorithm (using $O(\log^2n)$ bandwidth) for $v\in K\in\Kmod$ to learn a list $L(v)$ of $\hatd_H(v)+1$ colors from their palettes.
\end{lemma}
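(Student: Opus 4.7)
The plan is to extend the single-color sampling of \cref{lem:sample-log2n} to a batch of $\Theta(\log n)$ colors, exploiting \cref{lem:slack} to guarantee that a constant fraction of samples land in $\pal{v}$. Each uncolored $v \in K \in \Kmod$ learns $|\pal{K}|$ via \cref{lem:free-color}, then draws $x = C'' \log n$ uniform indices in $[|\pal{K}|]$ for a sufficiently large constant $C''$. Running \cref{lem:free-color} in parallel on all $x$ indices --- the inner range-wise bitmap exchanges scale linearly with the number of queries per edge --- $v$ obtains in $O(1)$ rounds the multiset $S_v \subseteq \pal{K}$ of the corresponding colors, within the $O(\log^2 n)$ bandwidth budget.

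Next, $v$ filters $S_v$ to those colors actually in $\pal{v}$. In one preliminary round, every colored node broadcasts its color, so that each $w \in N_G(v)$ learns $\set{\col(w)} \cup \col(N_G(w))$. Then $v$ broadcasts $S_v$ (which fits in $O(\log^2 n)$ bits) to its $G$-neighbors, and each $w$ replies with $S_v \cap (\set{\col(w)} \cup \col(N_G(w)))$, i.e., the sampled colors forbidden to $v$ through $w$. Since every $u \in N_{G^2}(v) \setminus \set{v}$ is a $G$-neighbor of some $w \in N_G(v)$, unioning these responses yields $S_v \setminus \pal{v}$ exactly, so $v$ recovers $S'_v \eqdef S_v \cap \pal{v}$.

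For correctness, \cref{lem:slack} gives $|\pal{v} \cap \pal{K}| \ge c |\pal{K}|$ for some absolute constant $c > 0$. A standard concentration bound on the number of distinct elements of $\pal{v} \cap \pal{K}$ appearing among $x$ uniform draws from $[|\pal{K}|]$ --- the indicators for individual colors being negatively correlated --- shows $|S'_v| \ge \Omega(\min(x, |\pal{v}\cap\pal{K}|))$ with probability $1 - 1/\poly(n)$. Since $K \in \Kmod$ requires at least one of $\avganti_K$, $\avgext_K$, $\errext_K$ to be $\Omega(\log n)$, combining with \cref{lem:slack} (Case~4 of whose proof handles the regime where only $\errext_K$ is large, producing $\Omega(\errext_K)$ degree slack) gives $|\pal{v}\cap\pal{K}| \ge \hatd(v) + \Omega(\log n) \ge \hatd_H(v) + 1$. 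Choosing $C''$ so that $\Omega(C''\log n) > C'\log n + 1$ then guarantees $|S'_v| \ge \hatd_H(v) + 1$, and $v$ outputs any $(\hatd_H(v)+1)$-subset as $L(v)$. The main technical care goes into the distinctness bound; the rest is a direct composition of \cref{lem:free-color} and \cref{lem:slack}.
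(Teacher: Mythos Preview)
Your proof is correct and follows essentially the same approach as the paper: sample $\Theta(\log n)$ colors from $\pal{K}$ via \cref{lem:free-color}, broadcast the sample using the $O(\log^2 n)$ bandwidth to filter out colors used by distance-2 neighbors, and invoke the constant-density guarantee $|\pal{v}\cap\pal{K}| \ge \Omega(|\pal{K}|)$ from \cref{lem:slack} together with Chernoff to ensure enough survivors. The paper samples each color of $\pal{K}$ independently with a fixed probability rather than drawing a fixed number of indices, which sidesteps your distinctness argument, but this is a cosmetic difference. If anything, you are slightly more careful than the paper in explicitly arguing $|\pal{v}\cap\pal{K}| \ge \hatd_H(v)+1$ (invoking Case~4 of the proof of \cref{lem:slack} for the regime where only $\errext_K$ is large); the paper leaves this implicit.
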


\begin{proof}
Let $\gamma > 0$ be the universal constant such that $|\pal{v}\cap\pal{K}| \ge \gamma|\pal{K}|$ from \cref{lem:slack} (which applies because $v\in K\in\Kmod$).
Each $v$ builds a set $S_v$ by sampling each color $c\in \pal{K}$ in it with probability $p\eqdef \frac{2C'\log n}{\gamma|\pal{K}|}$.
In expectation, $v$ samples $\Exp[|S_v\cap\pal{v}|] = p|\pal{K}\cap\pal{v}| = 2C'\log n \times \frac{|\pal{K}\cap\pal{v}|}{\gamma|\pal{K}|} \ge 2C'\log n$ colors. By Chernoff bound, w.p.\ $1-n^{-\Theta(C')}$, $|S_v\cap\pal{v}| \ge C'\log n \ge \hatd_H(v)+1$.
To implement the sampling of colors, each node samples indices in $[|\pal{K}|]$ and learns the corresponding colors with \cref{lem:free-color}.
Since $|S_v|=O(\log n)$, a node can broadcast this set to its distance-1 neighbors and remove colors used by external neighbors using $O(\log^2 n)$ bandwidth.
Hence, each $v$ finds a list $L(v) \eqdef S_v\cap\pal{v}$ of at least $\hatd_H(v)+1$ colors.
\end{proof}

\paragraph{Very Dense Almost-Cliques.}
We are now considering nodes $v\in K\in\Kvery$, such that \cref{lem:slack} does not apply.
We use the following broadcast primitive: Each node forwards along each outgoing edge a (independently) random message received.

\begin{lemma}[Distance-2 Many-to-All Broadcast]
\label{lem:many-to-all}
Let $K$ be an almost-clique in $G^2$ and $S\subseteq K$ be a subset of $k$ vertices such that each $x\in S$ has a message $m_x$. Suppose $\Delta \geq k \log n$ and $\Delta^2 \geq k^3\log n$. After four rounds of the broadcast primitive, every node in $K$ received all messages $\set{m_x}_{x\in S}$, w.h.p.
\end{lemma}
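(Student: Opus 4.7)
The plan is to fix a pair $(x,v) \in S \times K$ with $v \neq x$, show that $\Pr(v \text{ does not know } m_x \text{ after 4 rounds}) \le 1/\poly(n)$, and then union bound over the $k|K| \le O(k\Delta^2) \le \poly(n)$ such pairs. Let $V_t^x$ denote the set of nodes that know $m_x$ at the end of round $t$. For the per-pair bound, I would exhibit many length-$\le 4$ $G$-walks from $x$ to $v$ along which $m_x$ can be relayed and argue that at least one of them is successful.

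The walks come from the almost-clique structure. Since $x, v \in K$, \cref{def:almost-clique}.\ref{part:deg-inside} yields $|N_{G^2}(x)\cap N_{G^2}(v)\cap K| \ge (1-2\epsilon)\Delta^2$ common distance-$2$ neighbors in $K$. For each such $w$, pick $a_w \in N_G(x)\cap N_G(w)$ and $b_w \in N_G(w)\cap N_G(v)$ (both nonempty by the definition of distance-$2$ neighborhood) to get a walk $P_w: x,a_w,w,b_w,v$ in $G$. Call $P_w$ \emph{successful} if each of its four edges forwards $m_x$ in the corresponding round. The round-$1$ edge $(x,a_w)$ always forwards $m_x$ since at the start $x$'s memory is exactly $\{m_x\}$; each subsequent forwarding decision by $a_w, w, b_w$ in rounds $2,3,4$ chooses $m_x$ with conditional probability at least $1/k$, because every node's memory is bounded by the total number of messages $k$. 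Chaining these gives $\Pr(P_w \text{ succeeds}) \ge 1/k^3$, so by $\Delta^2 \ge k^3\log n$ the expected number of successful walks is $\Omega(\Delta^2/k^3) \ge \Omega(\log n)$.

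The main obstacle is handling dependencies between different $P_w$'s, which may share forwarding decisions when they share intermediate nodes or edges. I plan to avoid a direct independence analysis of walks by instead isolating round $4$: conditional on the state at the end of round $3$, the messages that different $G$-neighbors of $v$ send to $v$ in round $4$ are mutually independent draws, so
\[
\Pr\bigl(v \notin V_4^x \,\big|\, \text{round-$3$ state}\bigr)
\;\le\; \prod_{b \in N_G(v) \cap V_3^x}\!\Bigl(1 - \tfrac{1}{|M_b^3|}\Bigr)
\;\le\; \exp\!\Bigl(-\,|N_G(v)\cap V_3^x|/k\Bigr).
\]
It therefore suffices to show $|N_G(v)\cap V_3^x| \ge \Omega(k\log n)$ w.h.p., which reduces the problem to analyzing a 3-round spread of $m_x$ to many $G$-neighbors of $v$. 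I would prove this by the same walk-counting approach (now tracking 3-walks from $x$ to each $b \in N_G(v)$ that lies in $N_{G^2}(x)$), together with a concentration argument using the fact that the per-(directed edge, round) forwarding choices are mutually independent and each changes the count by at most $1$ (e.g., McDiarmid). The hypothesis $\Delta \ge k\log n$ supplies enough $G$-neighbors of $v$ to populate $N_G(v)\cap V_3^x$ at the target size, while $\Delta^2 \ge k^3\log n$ drives the walk-counting. A final union bound over the $O(k\Delta^2)$ pairs $(x,v)$ yields the claim.
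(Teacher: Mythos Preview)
Your reduction to round~4 is correct and is exactly what the paper does: it suffices to show that $\Omega(k\log n)$ distance-1 neighbors of $v$ hold $m_x$ after three rounds. The gap is in your concentration step. The claim that each per-(edge,\,round) forwarding choice changes $|N_G(v)\cap V_3^x|$ by at most~1 is false for round-2 choices: flipping whether $a$ forwards $m_x$ to $w$ in round~2 can flip $w$'s membership in $V_2^x$, which then changes what $w$ forwards in round~3 along \emph{every} outgoing edge; since $w$ may have up to $\Delta$ neighbors in $N_G(v)$, the Lipschitz constant is not~1 and McDiarmid gives nothing useful. If instead you count successful 3-walks with one designated walk per $w$, bounded differences of~1 do hold, but then distinct walks may share the same endpoint $b_w$, so the walk count does not lower-bound $|N_G(v)\cap V_3^x|$. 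Picking one walk per $b\in N_G(v)$ fixes the endpoint issue but leaves only $\le\Delta$ walks, each succeeding with probability $\ge 1/k^2$, and $\Delta/k^2$ need not be $\Omega(\log n)$ under the stated hypotheses.

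The paper resolves this by grouping the common d2-neighbors $W=N_{G^2}(x)\cap N_{G^2}(v)$ according to a canonical relay: for each $w\in W$ set $r_w$ to be the lowest-ID vertex of $N_G(w)\cap N_G(v)$, and let $W_r=\{w:r_w=r\}$. A counting argument shows at least $\Delta/2$ ``heavy'' relays have $|W_r|\ge\Delta/2$. The analysis is then round-by-round: after round~1 all of $N_G(x)$ hold $m_x$; in round~2, for each $w$ fix one edge $(a_w,w)$ with $a_w\in N_G(x)\cap N_G(w)$, so the indicators $\{a_w$ forwards $m_x$ to $w\}$ are independent across $w$ (distinct edges), and Chernoff gives $\ge\Delta/(4k)$ informed nodes in each $W_r$ w.h.p.; in round~3, conditionally on round~2, the events ``$r$ receives $m_x$ from some $w\in W_r$'' depend on disjoint edge sets $\{(w,r):w\in W_r\}$ and are therefore independent across heavy relays, so Chernoff applies directly. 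A case split on whether $\Delta\gtrless 2k^2$ then uses $\Delta\ge k\log n$ and $\Delta^2\ge k^3\log n$ respectively. The missing idea in your plan is precisely this partitioning of $W$ by relay, which is what makes the round-3 events independent and bypasses the cascading-dependency issue that breaks McDiarmid.
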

\begin{proof}
Let $u \in S$ and $v \in K$. Recall that $u$ and $v$ both have at least $(1-\eps)\Delta^2$ d2-neighbors in $K$. Since $\card{K} \leq (1+\eps)\Delta^2$, there are at least $(1-3\eps)\Delta^2$ common d2-neighbors of $u$ and $v$ in $K$. Let $W \eqdef \Ntwo(u) \cap \Ntwo(v)$ be this set.

We attribute a unique ``relay'' to each node $w\in W$, connecting it to $v$. For each $w\in W$, let $r_w$ be the common d1-neighbor of $w$ and $v$ of lowest \ID. For each d1-neighbor $r$ of $v$, let $W_r \subseteq W$ be the nodes of $W$ for which $r$ is the chosen relay to $v$. Assume $\eps < 1/12$. Using that $\card{W_r} \leq \Delta$, and by a simple Markov-type argument, there are at least $(1-6\eps)\Delta \geq \Delta/2$ d1-neighbors $r$ of $v$ for which $\card{W_r} \geq \Delta/2$. Let $R$ be the set of those ``heavy'' relays.

After the first round, each d1-neighbor of $u$ receives $m_u$.
Consider some heavy relay $r\in R$. Each node $w \in W_r$ receives the message $m_u$ from a d1-neighbor it shares with $u$ with probability at least $1/k$, independently from other nodes. Thus, with probability at least $1-\exp(-\Delta/(24k)) = 1-1/\poly(n)$, $\Delta/(4k)$ or more nodes in $W_r$ receive $m_u$.

Assume at least $\Delta/(4k)$ nodes in $W_r$ received $m_u$. Then, in the third round of the broadcast primitive, $r$ fails to receive $m_u$ with probability at most:
\[
\parens*{1-\frac 1 k}^{\Delta/(4k)} \leq e^{-\Delta/(4k^2)}\ .
\]
When $\Delta / (4k^2) \geq 1/2$, this probability is bounded by $e^{-1/2} \leq 2/3$. In that case, $\Delta/6$ or more nodes in $R$ should receive $m_u$ in expectation, and so at least $\Delta/12$ heavy  relays receive $m_u$ w.p.\ $1-\exp(-\Delta/72)$. The probability that those relays all fail in sending $m_u$ to $v$ in the fourth round of the broadcast primitive is at most $(1-1/k)^{\Delta/72} \leq \exp(-\Delta/(72k))=1/\poly(n)$.

If $\Delta / (4k^2) \leq 1/2$, then $e^{-\Delta/(4k^2)} \leq 1-\Delta/(8k^2)$. In expectation, at least $\Delta^2/(16k^2)$ heavy relays receive $m_u$, and $\Delta^2/(32k^2)$ of them receive $m_u$ w.h.p. All those relays fail to send $m_u$ to $v$ with probability at most $(1-1/k)^{\Delta^2/(32k^2)} \leq \exp(-\Delta^2/(32k^3)) = 1/\poly(n)$.
\end{proof}


In particular, this allows us to learn all colors remaining in the clique palette, because at this step of the algorithm, only $\poly\log n$ colors should remain available in $\pal{K}$. If not, we learn nonetheless a set of $\poly\log n$ colors from $\pal{K}$ which will act as a replacement.

\begin{lemma}
\label{lem:learn-colors-clique-palette}
Assume $\Delta \ge \Omega(\log^{3.5} n)$.
There is an $O(1)$-round algorithm (with $O(\log n)$ bandwidth) such that, in each almost-clique $K$, either
\begin{enumerate}
    \item all nodes $v\in K$ learn all colors in $\pal{K}$, or
    \item all nodes learn a set $D\subseteq \pal{K}$ of $\Theta(\log^2 n)$ colors.
\end{enumerate}
\end{lemma}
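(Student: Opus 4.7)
The plan is to reduce the task to a single application of the many-to-all broadcast from \cref{lem:many-to-all}, with $k=\Theta(\log^2 n)$ messenger nodes that between them carry either all of $\pal{K}$ (case 1) or some $\Theta(\log^2 n)$-sized subset of $\pal{K}$ (case 2). The enabling fact is that \cref{lem:free-color} lets any node of $K$ learn a \emph{specific} color of $\pal{K}$ (the $i_v$-th, for any globally ordered enumeration of $\pal{K}$) in $O(1)$ rounds with $O(\log n)$ bandwidth, and also lets every node of $K$ learn the integer $|\pal{K}|$.

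First, every node of $K$ learns $|\pal{K}|$ via \cref{lem:free-color} and sets $k \eqdef \min\{|\pal{K}|,\, c\log^2 n\}$ for a suitable constant $c>0$. Second, we distribute distinct labels $1,2,\ldots,k$ among $k$ nodes of $K$. This takes $O(1)$ rounds using the same sampling-into-groups infrastructure as in the proofs of \cref{lem:perm,lem:free-color}: partition $K$ into groups $T_1,\ldots,T_m$ by uniform sampling (\cref{lem:random-groups}), compute prefix sums of $|T_i|$ via \cref{lem:prefix-sum}, and have each node set its label to its rank within $T_i$ plus $\sum_{j<i}|T_j|$; nodes with label in $[k]$ are retained as messengers. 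Third, each messenger $v$ with label $i_v$ invokes \cref{lem:free-color} to learn the $i_v$-th color of $\pal{K}$, denoted $c_{i_v}$; the $k$ messengers then collectively hold $k$ distinct colors $c_1,\ldots,c_k\in\pal{K}$, each encoded in $O(\log n)$ bits. Fourth, invoke the many-to-all broadcast of \cref{lem:many-to-all} in $K$ with message set $\{c_1,\ldots,c_k\}$. In $O(1)$ rounds every node of $K$ learns all $k$ messages: if $|\pal{K}|\le c\log^2 n$, this is case 1 with $D=\pal{K}$; otherwise it is case 2 with $D=\{c_1,\ldots,c_k\}$ of size $\Theta(\log^2 n)$.

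The hypotheses of \cref{lem:many-to-all} demand $\Delta\ge k\log n$ and $\Delta^2\ge k^3\log n$, which for $k\le c\log^2 n$ become $\Delta\ge c\log^3 n$ and $\Delta^2\ge c^3\log^7 n$; both are implied by $\Delta\ge \Omega(\log^{3.5} n)$ for an appropriate choice of $c$. The only nontrivial point is precisely this constant-matching between the $\log^{3.5}$ lower bound on $\Delta$ and the $k=\Theta(\log^2 n)$ regime of \cref{lem:many-to-all} --- indeed, this is exactly where the $\log^{3.5}$ in the hypothesis of the lemma comes from. Once the composition above is set up, the $O(1)$-round complexity and the $O(\log n)$ bandwidth follow directly from the cited subroutines, since each step transmits a constant number of $O(\log n)$-bit messages per edge per round.
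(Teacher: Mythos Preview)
Your proposal is correct and follows essentially the same approach as the paper: learn $|\pal{K}|$, designate $k=\Theta(\log^2 n)$ messenger nodes each of which learns one color of $\pal{K}$ via \cref{lem:free-color}, then disseminate via \cref{lem:many-to-all}, checking that $\Delta\ge\Omega(\log^{3.5} n)$ satisfies the broadcast hypotheses. The only cosmetic difference is that the paper assigns messenger indices by a simple BFS rather than random groups plus prefix sums, but either works.
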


\begin{proof}
If $|\pal{K}| \le O(\log^2 n)$, then let $D\eqdef\pal{K}$. Otherwise, if $|\pal{K}| \ge \Omega(\log^2 n)$, let $D$ be the $\Theta(\log^2 n)$ first colors of $\pal{K}$.
Recall that all nodes can learn $|\pal{K}|$ in $O(1)$ rounds (\cref{lem:prefix-sum}); hence, nodes know in which of the two case they are.

Assign indices of $[|\pal{K}|]$ to arbitrary nodes $u_1, \ldots, u_{|D|}$ of $K$ (with a BFS for instance).
This is feasible because $|K| \ge \Delta^2/2 > \Theta(\log^2 n) = |D|$.
Then, each $u_i$ learns the $i$-th color of $D$ in $O(1)$ rounds (\cref{lem:free-color}).
Each $u_i$ then crafts a message $m_i$ containing that color and distributes it to all nodes of $K$ by Many-to-All broadcast.
By assumption, there are only $|D|=O(\log^2 n)$ messages, and since $|K| \ge \Delta^2/2 \ge \Theta(\log^{7} n)=|D|^3\times \Theta(\log n)$, we meet the requirements of \cref{lem:many-to-all}. Thus, in $O(1)$ rounds, all the nodes in $K$ know all colors of $D$.
\end{proof}

For nodes of very dense almost-cliques, the clique palette $\pal{K}$ does not approximate their palette well enough. We correct for that by adding colors used by anti-neighbors. They filter out colors used by external neighbors with $O(\log^2 n)$ bandwidth, because they have $O(\log n)$ such neighbors.

\begin{lemma}
\label{lem:learn-palette-very-dense-log2n}
Suppose each $K\in\Kvery$ has $O(\log n)$ uncolored nodes (hence $\hatd(v) \le O(\log n)$ for all $v\in K\in\Kvery$).
There is a $O(\log\log n)$-round randomized algorithm (using $O(\log^2 n)$ bandwidth) for all uncolored nodes $v\in K\in\Kvery$ to learn a list $L(v)$ of $\hatd_H(v)+1$ colors from their palettes.
\end{lemma}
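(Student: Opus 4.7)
The plan decomposes by the size of the clique palette and uses \cref{lem:learn-colors-clique-palette} together with the Many-to-All broadcast of \cref{lem:many-to-all}. Under the hypotheses $|\hat K|=O(\log n)$ and $\Delta\ge\Omega(\log^{3.5}n)$, Many-to-All applies with $k=O(\log n)$; and for every inlier $v\in K\in\Kvery$ the bounds $\te_v=O(\log n)$ (from \cref{eq:inliers} and $\avgext_K+\errext_K=O(\log n)$) and $\hat d_H(v)=O(\log n)$ keep all palette bookkeeping affordable.

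The first step is to invoke \cref{lem:learn-colors-clique-palette} so every node of $K$ obtains either all of $\pal K$ or a fixed subset $D\subseteq\pal K$ of size $\Theta(\log^2 n)$ in $O(1)$ rounds. In parallel, each uncolored $v$ collects the external color set $E(v)\eqdef\col(\Ntwo(v)\setminus K)$: since $\sum_{u\in N(v)}|N(u)\setminus K|=\te_v=O(\log n)$, every $u\in N(v)$ can forward its at most $|N(u)\setminus K|$ external colors to $v$ using $O(\log^2 n)$ bandwidth in $O(1)$ rounds. When $|D|=\Theta(\log^2 n)$, I set $L(v)\eqdef D\setminus E(v)$: because $D\subseteq\pal K$ is automatically disjoint from $\col(\Ntwo(v)\cap K)$, any $c\in L(v)$ lies in $\pal v$; the size bound $|L(v)|\ge\Theta(\log^2 n)-O(\log n)\ge\hat d_H(v)+1$ closes this easier subcase.

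The harder case is $|\pal K|<\Theta(\log^2 n)$, where $\pal K\setminus E(v)$ may fall short of $\hat d_H(v)+1$ and $v$ must harvest additional colors from its anti-neighbors in $K$. I use $O(\log\log n)$ Many-to-All phases: in phase $i$, randomly chosen colored members of $K$ opt in with a probability $p_i$ calibrated so that the expected number of broadcasters is $\Theta(\log n)$, with $p_i$ geometrically increasing across phases. Each received color $c$ is discarded by $v$ if $c\in E(v)$; otherwise $v$ verifies $c\in\pal v$ by asking each $u\in N(v)$ whether $c\in\{\col(u)\}\cup\col(N_G(u))$, which handles $O(\log n)$ candidates in $O(1)$ rounds at $O(\log^2 n)$ bandwidth. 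Surviving colors enter $L(v)$, and the phases stop as soon as $|L(v)|\ge\hat d_H(v)+1$.

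The main obstacle is arguing that the harvesting phase produces enough colors within the $O(\log\log n)$ budget. The a priori concern is that the true anti-degree $a_v$ is not bounded for inliers (only $\ta_v\le O(\log n)$ is), so one cannot list anti-neighbor colors in advance. The structural lever is that $|\pal K|<\Theta(\log^2 n)$ forces $|\pal v\cap\col(K)|\ge |\pal v|-|\pal K|\ge\hat d_H(v)+1-|\pal K|$, and any color in $\col(K)\cap\pal v$ is used exclusively by anti-neighbors of $v$ in $K$ (two $\Ntwo$-related nodes of $K$ cannot share a color). Tuning the $p_i$ schedule against these bounds is what makes the biased sampling collect the required palette colors w.h.p.\ within $O(\log\log n)$ phases.
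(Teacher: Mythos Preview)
Your treatment of the large-$|\pal{K}|$ case (take $D \subseteq \pal{K}$ of size $\Theta(\log^2 n)$, remove external colors) matches the paper and is fine.

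The harvesting scheme for the small-$|\pal{K}|$ case does not work within the stated round budget. Across $O(\log\log n)$ phases, each broadcasting $\Theta(\log n)$ randomly chosen colored nodes of $K$ (and even with a geometrically increasing $p_i$, the total across all phases is still only $\poly\log n$ broadcasters), you examine at most $\poly\log n$ of the $|K|\ge (1-\eps)\Delta^2$ colored nodes. But the nodes whose colors you actually need---colored anti-neighbors of $v$ whose color lies in $\pal{v}$---number at most $|\pal{v}\cap\col(K)|$ distinct colors, which by your own accounting can be as small as $\hatd_H(v)+1-|\pal{K}\setminus E(v)|=O(\log n)$, or even $O(1)$. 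Hitting a subset of density $O(\log n/\Delta^2)$ inside $K$ with $\poly\log n$ uniform samples succeeds only with probability $\poly\log n/\Delta^2=o(1)$. A geometric schedule over $O(\log\log n)$ phases bridges a $\poly\log n$ gap in target density, not a $\Delta^2/\log n$ gap; you would need $\Theta(\log\Delta)$ phases, and the later phases would exceed the message budget of \cref{lem:many-to-all}.

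The paper's route in this case is different and does not sample. It uses $a_v \le O(\avganti_K) \le O(\log n)$ for uncolored inliers in $K\in\Kvery$ (contrary to your premise that $a_v$ is unbounded), so the total number of colored $K$-nodes that are anti-neighbors of \emph{some} uncolored node is at most $|\hat K|\cdot O(\log n)=O(\log^2 n)$. These are located deterministically: run one BFS per uncolored node (there are $O(\log n)$ of them, relabeled with $O(\log\log n)$-bit IDs, so all BFSes run in parallel in $O(\log\log n)$ rounds even with $O(\log n)$ bandwidth), after which every colored node knows, by complement, which uncolored nodes it is an anti-neighbor of. The $O(\log^2 n)$ relevant nodes then broadcast their color together with a bitmap of their uncolored anti-neighbors via \cref{lem:many-to-all}. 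Each uncolored $v$ sets $L(v)=(\pal{K}\cup\col(K\setminus\Ntwo(v)))\setminus\col(\Ntwo(v)\setminus K)$, and a direct count gives $|L(v)|\ge |\pal{K}|-\colored{e}_v+\colored{a}_v\ge \hatd(v)+1$.
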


\begin{proof}
Run the algorithm of \cref{lem:learn-colors-clique-palette}. Assume first that nodes learned all colors of $\pal{K}$.
Recall nodes have $e_v = O(\log n)$ external neighbors (because $K\in\Kvery$ and $v\in I_K$); hence, they can learn all colors used by their external neighbors in $O(1)$ rounds by using $O(\log^2 n)$ bandwidth.
Since each uncolored $v$ knows $\pal{K}$ and the colors of its external neighbors, it thereby knows $\pal{K}\cap\pal{v}$.

With a BFS, we can relabel uncolored node of $K$ in the range $[O(\log n)]$.
Since uncolored nodes are inliers, they have anti-degree $a_v \le O(\avganti_K) \le O(\log n)$ each.
At most $O(\log^2 n)$ nodes in $K$ are anti-neighbors of (at least one) uncolored node.
We can run $O(\log n)$ BFS in parallel, one rooted at each uncolored node, such that each node knows to which uncolored node it is connected (at distance-2).
This takes $O(\log\log n)$ rounds, even with bandwidth $O(\log n)$, because each BFS uses $O(\log\log n)$-bit messages (thanks to the relabeling) and we run $O(\log n)$ of them.
Then, the $O(\log^2 n)$ nodes with an uncolored anti-neighbor can describe their list of uncolored anti-neighbors using a $O(\log n)$-bitmap.
Using \cref{lem:many-to-all}, they broadcast this information as well as their color to all nodes.

Nodes use lists $L(v) \eqdef (\pal{K}\cup\col(K\setminus \Ntwo(v)))\cap\pal{v}$, i.e., the clique palette augmented with the colors of their anti-neighbors, minus colors used by external neighbors.
Adding $\Delta^2 + 1 \ge |\Ntwo(v)\cap K| + e_v$ and $|K| \le |\Ntwo(v)\cap K| + a_v$, we get $|\pal{K}| \ge \Delta^2 + 1 - (|K|-|\hK|) \ge |\hK|+1 +e_v- a_v$. Since each colored external neighbor removes as most one color, lists have size (recall $\colored{e}_v$ and $\colored{a}_v$ are the \emph{colored} external degree and anti-degrees respectively)
\[ |L(v)| \ge |\pal{K}|-\colored{e}_v + \colored{a}_v \ge |\hK| + (e_v-\colored{e}_v) - (a_v - \colored{a}_v) + 1 = \hatd(v)+1 \ . \]

Suppose now that we are in the second case of \cref{lem:learn-colors-clique-palette}, i.e., nodes learn a set $D\subseteq \pal{K}$ of $\Theta(\log^2 n)$ colors. This immediately leads to a good approximation.
Since $K\in\Kvery$, the average node has few external connections, $\avgext_K+\errext_K = O(\log n)$ (\cref{def:classification-acd}). Moreover, because uncolored nodes are all inliers, $e_v = O(\avgext_K + \errext_K)$ (\cref{eq:inliers}). Finally, node $v$ loses at most one color in $D$ per external neighbor, hence
\[
|D \cap\pal{v}| \ge |D| - e_v \ge \Theta(\log^2 n) - O(\avgext_K + \errext_K) \ge \Theta(\log^2 n) \ge \hatd(v)+1 \ .
\]
The last inequality holds because, at this point of the algorithm, nodes have $\hatd(v) = O(\log n)$.
\end{proof}

\subsection{Proof of
\texorpdfstring{\cref{thm:d2}}%
{Theorem \ref{thm:d2}}}

Let $C > 0$ be some large universal constant.
By \cref{lem:acd}, computing the almost-clique decomposition $\Vsparse, K_1, \ldots, K_k$ of $G^2$ takes $O(1)$ rounds of \congest (Step~\ref{step:acd}). Generating slack (\cref{alg:slackgeneration}) and coloring $\Vsparse$ takes $O(\log^* n)$ rounds (\cref{lem:slack-generation,lem:coloring-sparse}). Putting together all results from \cref{sec:dense}, we prove the proposition stated earlier, which implies \cref{thm:d2}.

\propColoringDenseNodes*

\emph{(Steps~\ref{step:matching}, \ref{step:outliers} \& \ref{step:color-outliers}.)}
By \cref{lem:colorful-matching}, we compute a colorful matching of size $402\avganti_K$ in $O(1)$ rounds, in all almost-cliques with $\avganti_K > C\log n$. By \cref{lem:compute-outliers}, we can compute sets $O_K$ and $I_K=K\setminus O_K$ in all almost cliques, such that all $v\in I_K$ verify \cref{eq:inliers} and $|I_K| > (1-5/100)|K|$. Let $H_1$ be the subgraph of $G^2$ induced by $\bigcup_{K} O_K$. Note that each $v\in O_K$, for some almost-clique $K$, has at least $(1-5/100)|K| -\epsilon\Delta^2 > (1-5/100)(1-\epsilon)\Delta^2 - \epsilon\Delta^2 \ge \Delta^2/2$ neighbors in $I_K$, for $\epsilon$ small enough. Hence, each outlier has $\Delta^2/2$ slack in $H_1$ and can thus be colored in $O(\log^* n)$ rounds by \cref{lem:coloring-sparse}.

\emph{(Step~\ref{step:sct} \& \ref{step:slice-color}.)}
Order nodes of $I_K$ with a BFS. By \cref{lem:perm}, w.h.p., the $i$-th node of $I_K$ can learn $\pi(i)$, where $\pi$ is a uniformly random permutation of $[|I_K|]$. By \cref{lem:free-color}, each node can learn, thus try, the $i$-th color of $\pal{K}$ (if it exists). With high probability, by \cref{lem:sct}, each almost-clique $K$ has $O(\avganti_K+\avgext_K+\log n)$ uncolored nodes. This implies, the uncolored degree of a dense node $v\in K$ is $O(e_v + \avganti_K+\avgext_K+\log n)$. In particular, if $v\in K\in \Kvery$ (\cref{def:classification-acd}), it has uncolored degree $\hatd(v) \le O(\log n)$.
Since Step~\ref{step:slice-color} intend to reduce the uncolored degree to $O(\log n)$, we can focus on moderately dense almost-cliques. Let $H_2=\bigcup_{K\in\Kmod} K$.
The following fact shows conditions of \cref{lem:slack-color} are verified by the sampler of \cref{lem:sample-log2n}.
\begin{fact}\label{fact:bound-uncolored-degree}
    There exists a universal constant $\alpha > 0$ such that $s(v) \ge \alpha\cdot b(v)$ for all $v\in H_2$, where $b(v) = \te_v + \card{\hK}$ and $s(v) \ge \Omega(\avganti_K + \avgext_K + \te_v)$ from \cref{lem:sample-log2n} such that $\Pr(\rC_v = c) \le \frac{1}{\hatd(v)+s(v)}$.
\end{fact}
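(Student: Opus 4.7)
The plan is to (i) confirm that $b(v) = \te_v + |\hK|$ upper-bounds $\hatd(v)$, (ii) bound $|\hK|$ via \cref{lem:sct}, and (iii) do a short case split using the fact that \cref{lem:slack-color} only requires the inequality $s(v) \ge \alpha b(v)$ for nodes with $b(v) \ge C\log n$.

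For (i), each uncolored distance-2 neighbor of $v$ is either an uncolored external neighbor (of which there are at most $e_v \le \te_v$) or lies in $\hK$ (at most $|\hK|$ of them); hence $\hatd(v) \le \te_v + |\hK| = b(v)$. For (ii), after the synchronized color trial, \cref{lem:sct} yields $|\hK| \le C'(\avganti_K + \avgext_K + \log n)$ for some universal constant $C' > 0$.

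For (iii), split on whether $\avganti_K + \avgext_K$ dominates $\log n$. If $\avganti_K + \avgext_K \ge \log n$, then $|\hK| \le 2C'(\avganti_K + \avgext_K)$, so $b(v) \le O(\te_v + \avganti_K + \avgext_K) = O(s(v))$ directly from the slack bound in \cref{lem:sample-log2n}. Otherwise $\avganti_K + \avgext_K < \log n$, in which case $|\hK| \le 2C'\log n$; choosing the threshold $C \ge 4C'$ in the hypothesis $b(v) \ge C\log n$ then forces $\te_v = b(v) - |\hK| \ge (C - 2C')\log n \ge b(v)/2$, so $b(v) \le 2\te_v = O(s(v))$, where we use that $s(v) \ge \Omega(\te_v)$ by \cref{lem:sample-log2n}. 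In both cases this gives $s(v) \ge \alpha b(v)$ for a suitable universal constant $\alpha > 0$.

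The only (mild) subtlety is the second case, where $\avganti_K$ and $\avgext_K$ contribute negligible slack: one must invoke the threshold $b(v) \ge C\log n$ from \cref{lem:slack-color}'s interface to force $\te_v$ to dominate $|\hK|$, after which the $\Omega(\te_v)$ component of the slack bound from \cref{lem:sample-log2n} closes the argument. Note that in this regime we necessarily have $K \in \Kmod$ with $\errext_K \ge \Omega(\log n)$, but the argument does not require us to open the proof of \cref{lem:slack} to extract that hidden source of slack---the pseudo-external-degree slack $\te_v$ already suffices.
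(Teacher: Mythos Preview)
Your proof is correct and mirrors the paper's approach: bound $|\hK| = O(\avgext_K + \avganti_K + \log n)$ after \sct, then invoke the threshold $b(v) \ge C\log n$ from \cref{lem:slack-color} to absorb the additive $\log n$, giving $b(v) = \Theta(\te_v + \avgext_K + \avganti_K) = O(s(v))$. One small slip in your second case: the chain $\te_v \ge (C-2C')\log n \ge b(v)/2$ fails when $b(v) \gg C\log n$; instead go directly $\te_v = b(v) - |\hK| \ge b(v) - 2C'\log n \ge b(v)(1 - 2C'/C) \ge b(v)/2$ using $b(v) \ge C\log n$ and $C \ge 4C'$.
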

\begin{proof}\renewcommand{\qed}{}
    Let $K$ be the almost-clique of $v$.
    The quantity $b(v)$ only requires $O(1)$ rounds to compute: To compute its pseudo-external degree $\te_v$, a node only needs to receive from each of its direct neighbors $u \in N_G(v)$ the value $\card{(N_G(u) \cup\set{u})\setminus K}$; for $\card{\hK}$, the number of uncolored nodes in $K$, a simple BFS within $K$ suffices to count $\card{\hK}$ and broadcast it to the whole almost-clique.

    We now show $b(v)$ satisfies the hypotheses. After SCT, by \cref{lem:sct}, at most $O(\avgext_K + \avganti_K + \log n)$ nodes are left uncolored in $K$, so $b(v) \in O(\te_v+\avgext_K + \avganti_K + \log n)$. By \cref{lem:sample-log2n}, there exist $s(v) \in \Omega(\avganti_K+\avgext_K+\te_v)$ s.t.\ \cref{eq:uniform-slack-color} holds.
    If $b(v)\le C\log n$, then $\hatd(v) < C\log n$, hence the uncolored degree is already $O(\log n)$.
    Otherwise, when $b(v) \geq C \log n$, it must be that $b(v) \in \Theta(\te_v+\avgext_K + \avganti_K)$, and so, there exists a universal constant $\alpha$ s.t.\ $s(v) \geq \alpha \cdot b(v)$. \Qed{fact:bound-uncolored-degree}
\end{proof}
Hence, we can use the sampling algorithm of \cref{lem:sample-log2n} (and \cref{lem:learn-moderately-logn} with $O(\log n)$ bandwidth), to run \slicecolor (\cref{lem:slack-color}) in $H_2$.
Therefore, in $O(\log\log n)$ rounds, we produce a coloring and a partition $L_1, \ldots, L_{\ell}$ of uncolored nodes in $H_2$ such that the maximum uncolored degree of each $G[L_i]$ for $i\in [\ell]$ is $O(\log n)$.
We also define $L_0=\bigcup_{K\in\Kvery} K$ which has maximum uncolored degree $O(\log n)$ after the synchronized color trial.

\emph{(Steps~\ref{step:learn-palette} \& \ref{step:small-degree}.)}
We go through layers $L_0, L_1, \ldots, L_{\ell}$ sequentially.
In $L_0$, nodes learn lists of $\deg+1$ colors from their palette by \cref{lem:learn-palette-very-dense-log2n} (and \cref{lem:learn-palette-very-dense-logn} with $O(\log n)$ bandwidth). In each $L_i$ for $i \in [\ell]$, nodes are moderately dense, hence learn their palette from sampling by \cref{lem:learn-moderatly-dense-log2n} (and \cref{lem:learn-moderately-logn} with $O(\log n)$ bandwidth).
Solve each of these $\deg+1$-list-coloring instance of \cref{lem:coloring-dense} with the small degree algorithm of \cref{lem:small-degree}.
Since learning palettes takes $O(\log\log n)$ and each $\deg+1$-list-coloring instance is solved in $O(\log^5\log n)$, the total round complexity of this step is $O(\log^6\log n)$, which dominates the complexity of the algorithm.
\Qed{thm:d2}

\section{Reducing The Degree With Slack}
\label{sec:slice-color}

In this section we prove the following lemma:
\lemmaSlackColor*

We will actually prove a slightly stronger statement: that each node $v$ in a layer $L_i$ has $O(\log n)$ uncolored neighbors in higher or equal layers, i.e., in $L_{\geq i} \eqdef \bigcup_{j\geq i} L_j$.
The technique is similar to an algorithm of \cite{BEPS}, which partitioned uncolored nodes into $2$ subgraphs of maximum degree $O(\log n)$.

At a high level, the efficiency of \cref{alg:slicecolor} comes from a feedback loop between uncolored degree and probability of getting colored: the smaller uncolored degree is w.r.t.\ slack, the larger the probability of getting colored; the larger the probabilities that nodes get colored, the smaller the uncolored degree becomes.
We call \emph{skew} the factor $\kappa$ in \cref{eq:uniform-slack-color} by which some colors are more likely to be tried.
Intuitively, we compensate for the skew by having nodes only try colors with probability $O(1/\kappa)$ in the first phase of the algorithm, and adding a $\kappa$ term in the ratio between slack and degree for the second phase.


\begin{algorithm}[ht]
\caption{\slicecolor.}
\label{alg:slicecolor}
\Input{uncolored dense nodes with bound $b(v)\geq \hatd(v)$ on their uncolored degree, slack $s(v) \geq \alpha\cdot b(v) \in \Omega(\log n)$, given a color sampler with skew $\kappa$.}
\Output{uncolored nodes partitioned into $O(\log \log \Delta)$ graphs of $O(\log n)$ degree.}

\BlankLine

For $i=1, 2, \ldots, \ell=\ceil{\log \log \Delta}$, let $L_i \gets \set{v:b(v) < C\log n \cdot 2^{2^i}} \setminus \bigcup_{j<i} L_j $.

In color tries, priority is given first to higher layers, second to higher IDs

\For{$16\kappa\cdot \ln(2\kappa/\alpha)$ rounds, for each uncolored node $v$ in parallel}{%
    \label{step:sliceStartFirstLoop}%
    W.p.\ $1/(4\kappa)$, $v$ independently decides to try a color in that round.
    \label{step:sliceSilence}%

    \If{$v$ is to try a color in that round}{$v$ samples and tries a color.}\label{step:sliceEndFirstLoop}%
}

\For{$2\ceil{\log \log \Delta}+2$ rounds, for each uncolored node $v$ in parallel}{%
    \label{step:sliceSecondLoop}
    $v$ samples and tries a color.\label{step:sliceSlackColorTry}%
}
\end{algorithm}

\begin{proof}[Proof of \cref{lem:slack-color}]
For any node $v$ in a layer $L_i$, let $\uncolored{d'}(v)$ be its uncolored degree within $L_{\geq i} = \bigcup_{j\geq i} L_j$.
Throughout this analysis, the uncolored degree only counts \emph{active} neighbors of a node, i.e., uncolored nodes that are executing the \slicecolor. Notably, it ignores nodes of low uncolored degree $O(\log n)$ which are not passed to \slicecolor.

\begin{fact}[Minimum slack in layers]
    \label{claim:initSlack}
    For all $i \geq 1$, the minimum initial slack in layer $L_i$ and above, denoted $s(L_i)$, satisfies
    $s(L_i) \geq \alpha C \log n\cdot 2^{2^{i-1}}$.
\end{fact}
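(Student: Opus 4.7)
The plan is a two-line direct argument using the definition of the layers together with the slack hypothesis of \cref{lem:slack-color}. First I would unpack the layering rule: for any $v \in L_{\geq i}$ with $i \geq 2$, the node is, by construction, not in $L_{i-1}$, so it must violate the condition $b(v) < C\log n \cdot 2^{2^{i-1}}$ that defined $L_{i-1}$. This forces
\[ b(v) \;\geq\; C\log n \cdot 2^{2^{i-1}}. \]
Plugging into the slack hypothesis $s(v) \geq \alpha\cdot b(v)$ proves the claim for every $i \geq 2$ through
\[ s(v) \;\geq\; \alpha\,b(v) \;\geq\; \alpha C\log n \cdot 2^{2^{i-1}}. \]

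Second, I would treat the boundary case $i=1$ separately, since there is no earlier layer to exploit. Here I invoke the standing assumption in \cref{lem:slack-color} that \slicecolor is applied only to nodes with $b(v) \geq C\log n$, which combined with $s(v) \geq \alpha\cdot b(v)$ yields $s(v) \geq \alpha C\log n$. Since $2^{2^{0}} = 2$, this matches the claimed bound up to a constant factor of $2$ that can be absorbed in the universal constant $C$ (equivalently, by tightening the admission threshold of the lemma to $b(v) \geq 2C\log n$); this rescaling has no effect on downstream arguments. Finally, taking the minimum over $v \in L_{\geq i}$ preserves the per-node bound and gives $s(L_i) \geq \alpha C\log n \cdot 2^{2^{i-1}}$, as required.

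The main ``obstacle'' is really a bookkeeping point rather than a substantive difficulty: one must confirm that $b(v)$ and $s(v)$ appearing in the layer definition and in the slack hypothesis refer to the values present at entry to \slicecolor (not to values that change mid-algorithm as nodes get colored). Since the partition into layers $L_1, \ldots, L_\ell$ is computed once, at the start, and the claim explicitly concerns the \emph{initial} slack, this consistency is immediate, after which the argument above goes through verbatim.
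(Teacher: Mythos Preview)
Your proof is correct and follows essentially the same approach as the paper: combine the slack hypothesis $s(v) \geq \alpha\cdot b(v)$ with the lower bound on $b(v)$ forced by $v$ not belonging to any earlier layer. You are in fact more explicit than the paper about the $i=1$ boundary case (the factor-of-$2$ absorption), which the paper's two-line proof glosses over.
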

\begin{proof}
    By assumption, we have $s(v) \geq \alpha \cdot b(v)$, and for all nodes in layer $L_i$, $b(v) \geq C \log n \cdot 2^{2^{i-1}}$ as they would otherwise be in a previous layer.
\end{proof}

\begin{fact}[Constant degree reduction]
    \label{claim:setupRatio}
    After the first loop in \owntocref{lines}{step:sliceStartFirstLoop}{step:sliceEndFirstLoop}, w.h.p., each node $v$ has at most $s(v) / (2\kappa)$ uncolored neighbors.
\end{fact}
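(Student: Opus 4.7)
The plan is a three-stage argument: bound the per-round success probability, iterate to get exponential decay of the survival probability, and finally concentrate the surviving-neighbor count. Throughout, I will use that slack cannot decrease and uncolored degree only shrinks as coloring progresses, so single-round bounds derived from the initial parameters propagate to all $R = 16\kappa\ln(2\kappa/\alpha)$ rounds of the first loop.

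For the single-round bound, fix any uncolored node $u$ and condition on an arbitrary history of prior rounds. Let $T_u$ denote the event that $u$ decides to try a color in this round, which happens with probability $1/(4\kappa)$, and let $N^+(u)$ denote the uncolored neighbors of $u$ of strictly higher priority. Given $T_u$, $u$ fails either because $\rC_u = \bot$ (probability $\le 1/\poly(n)$) or because some node in $N^+(u)$ tries the same color. Letting $U$ be the random set of colors tried this round by $N^+(u)$, independence of the attempt decisions gives $\Exp[|U|] \le |N^+(u)|/(4\kappa) \le \hatd(u)/(4\kappa)$. Independence of $\rC_u$ from $U$ combined with the sampler bound $\Pr(\rC_u = c) \le \kappa/(\hatd(u)+s(u))$ then yields
\[
\Pr(\rC_u \in U \mid T_u) \;\le\; \Exp[|U|] \cdot \frac{\kappa}{\hatd(u)+s(u)} \;\le\; \frac{\hatd(u)}{4(\hatd(u)+s(u))} \;\le\; \frac{1}{4}\,.
\]
Hence $\Pr(u \text{ gets colored in this round}\mid\text{history}) \ge (1/(4\kappa))(2/3) = 1/(6\kappa)$.

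Iterating this conditional bound over the $R$ rounds, each neighbor $u$ of $v$ remains uncolored with probability at most $(1-1/(6\kappa))^R \le e^{-R/(6\kappa)} = (\alpha/(2\kappa))^{8/3}$. Combining with $\hatd(v) \le b(v) \le s(v)/\alpha$ and linearity gives
\[
\Exp[\hatd_R(v)] \;\le\; \hatd(v)\cdot\left(\frac{\alpha}{2\kappa}\right)^{\!8/3} \;\le\; \frac{s(v)}{\alpha}\left(\frac{\alpha}{2\kappa}\right)^{\!8/3} \;\le\; \frac{s(v)}{4\kappa}\,,
\]
where the last step absorbs absolute constants, which is precisely what the leading factor $16$ in $R$ is there to afford.

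The main obstacle is concentration: the survival indicators $\{X_u\}_{u\in N(v)}$ are correlated through shared higher-priority nodes in the $G^2$-ball of radius two around $v$, so a vanilla Chernoff bound does not apply. The key enabler is \cref{claim:initSlack}, which gives $s(v) = \Omega(\log n)$; picking the universal constant $C$ sufficiently large ensures $\Exp[\hatd_R(v)] = \Omega(\log n)$. A Chernoff-type concentration for sums of indicators with bounded dependence -- for instance a Doob-martingale Azuma argument that reveals each node's random choices in turn, with bounded differences reflecting that a single node's randomness only affects the fate of its own $G^2$-neighbors -- then yields $\hatd_R(v) \le s(v)/(2\kappa)$ with high probability, and a union bound over $V$ concludes.
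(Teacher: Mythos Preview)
Your single-round bound and the per-node $R$-round survival probability $(1-1/(6\kappa))^R$ are fine. The concentration step, however, has a genuine gap. Your proposed Doob-martingale/Azuma argument does not go through, because the bounded-difference constants are not bounded: revealing node $w$'s randomness over all $R$ rounds can flip the coloring status of every node within $G^2$-distance $R$ of $w$ (effects cascade round to round), so it can change $\hatd_R(v)$ by as much as $\hatd_0(v)$ for each of the many $w$'s in that ball. The resulting sum of squared Lipschitz constants is far too large for Azuma to yield a high-probability bound at the target scale $s(v)/(2\kappa) = \Theta(\log n)$. Saying the differences are ``bounded'' because $w$ only affects its $G^2$-neighbors ignores both this cascading and the fact that a single $w$ can be a $G^2$-neighbor of many of $v$'s neighbors simultaneously.

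The paper sidesteps this by concentrating \emph{round by round} rather than at the end, and for that it needs a \emph{conditional} (not merely unconditional) per-round success bound. The trick is to first condition on the high-probability event---a Chernoff bound on the independent try-decisions---that every node with $\hatd(u) > s(u)/(2\kappa)$ has at most a $1/(2\kappa)$-fraction of its neighbors trying. Under this conditioning, a trying node $u$ has at most $\max(\hatd(u),s(u))/(2\kappa)$ competing colors, so by the sampler bound it succeeds with probability $\ge 1/2$ over its own color choice, \emph{regardless of the other nodes' color choices}. This conditional lower bound is exactly what the martingale Chernoff (\cref{lem:chernoff}) needs within a single round: it yields that at least a $1/(16\kappa)$-fraction of $v$'s neighbors get colored that round, w.h.p. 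Iterating over $R = 16\kappa\ln(2\kappa/\alpha)$ rounds gives $\hatd_R(v) \le (1-1/(16\kappa))^R \hatd_0(v) \le (\alpha/(2\kappa))\cdot s(v)/\alpha = s(v)/(2\kappa)$. Your unconditional bound $\Pr(u\text{ colored}) \ge 1/(6\kappa)$ averages over neighbors' try-decisions and therefore does not survive conditioning on an adversarial realization of them, which is why it cannot drive a per-round martingale Chernoff directly.
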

\begin{proof}
    Consider a node $v$, uncolored at the start of some iteration of the loop in \owncref{line}{step:sliceStartFirstLoop}. Suppose it has at most $s(v)/(2\kappa)$ (active) neighbors. Then, by union bound, if it tries a random color from the skewed color sampling algorithm, it fails to get colored with probability at most $\frac{s(v)}{2\kappa} \cdot \frac{\kappa}{\hatd(v) + s(v)} \leq 1/2$. Suppose now the opposite, that $v$ has more than  $s(v)/(2\kappa)$ neighbors. Since $s(v) \geq \Omega(\log n)$, w.h.p., less than a $1/(2\kappa)$ fraction of its neighbors decide to try a color in \owncref{line}{step:sliceSilence}. Thus, when trying a random color from the skewed color sampling algorithm, $v$ fails to get colored with probability at most $\frac{\hatd(v)}{2\kappa}\cdot \frac{\kappa}{\hatd(v) + s(v)} \leq 1/2$.

    Consider a node $v$ with $\Omega(\log n)$ neighbors. Conditioned on a high probability event, nodes try a random color w.p.\ $1/(4\kappa)$ and will each get colored with probability $1/2$ when doing so. By \cref{lem:chernoff} (Chernoff bound), at least an $1/(16\kappa)$ fraction of $v$'s neighbors get colored, w.h.p. Thus, after $16\kappa\cdot \ln(2\kappa/\alpha)$ rounds, only a $(1-1/(16\kappa))^{16\kappa\cdot \ln(2\kappa/\alpha)} \le \alpha/(2\kappa)$ fraction of its neighbors remain uncolored.
\end{proof}

\begin{fact}[Slack increases coloring probability]
    \label{claim:probaColoring}
    Suppose a node $v$ satisfies $s(v) \geq x\cdot \kappa\uncolored{d'}(v)$. When trying a color in \owncref{line}{step:sliceSlackColorTry}, it fails to get colored with probability at most $1/x$, even conditioned on arbitrary random choices by its neighbors.
\end{fact}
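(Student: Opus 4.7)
The plan is to bound the failure probability by a union bound over the colors that higher-priority neighbors might try in this round. The key structural observation is that the tie-breaking rule in \cref{alg:slicecolor} (priority first by higher layer, then by higher ID) ensures that only active uncolored neighbors of $v$ lying in $L_{\geq i}$ (where $v \in L_i$) can ever block $v$'s color trial; any lower-layer neighbor defers to $v$. Consequently, the number of colors that can possibly conflict with $v$ is at most $\uncolored{d'}(v)$, and it is precisely this smaller count (not the full $\hatd(v)$) that the hypothesis $s(v)\geq x\kappa\,\uncolored{d'}(v)$ is designed to control.

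Concretely, I would first condition on all randomness used by $v$'s neighbors in this round. Under this conditioning, the set $T$ of colors actually sampled by higher-priority active neighbors is fixed, with $|T|\leq \uncolored{d'}(v)$. Node $v$ fails to adopt a color only if either (i) its own sampler returns $\bot$, which happens with probability at most $1/\poly(n)$ by the assumption of \cref{lem:slack-color}, or (ii) $\rC_v\in T$. For (ii), a union bound together with the per-color guarantee $\Pr(\rC_v=c)\leq \kappa/(\hatd(v)+s(v))\leq \kappa/s(v)$ from \cref{eq:uniform-slack-color} gives
\[
\Pr(\rC_v\in T)\;\leq\; |T|\cdot\frac{\kappa}{s(v)}\;\leq\;\frac{\uncolored{d'}(v)\cdot\kappa}{x\,\kappa\,\uncolored{d'}(v)}\;=\;\frac{1}{x}\,.
\]
The $1/\poly(n)$ contribution from (i) is negligible and can be absorbed into a slightly worse constant, or handled by invoking the standing assumption $s(v)\geq\Omega(\log n)$ already present in \cref{lem:slack-color}.

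There is no genuine obstacle here: once the priority rule is unpacked to replace the "true" uncolored degree with the smaller quantity $\uncolored{d'}(v)$, the argument collapses to a one-line union bound whose only nontrivial input is the near-uniformity of the color sampler. The one point that merits explicit care is the clause "even conditioned on arbitrary random choices by its neighbors": this is exactly what pointwise conditioning on the fixed set $T$ delivers, and because $v$'s sampler is independent of its neighbors' samplers, the per-color bound from \cref{eq:uniform-slack-color} survives the conditioning unchanged.
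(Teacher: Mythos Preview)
Your proposal is correct and follows essentially the same approach as the paper: bound the number of blocking colors by $\uncolored{d'}(v)$ via the priority rule, then apply a union bound using the per-color guarantee $\kappa/(\hatd(v)+s(v)) \leq \kappa/s(v)$ from \cref{eq:uniform-slack-color}. Your treatment is in fact more careful than the paper's---you explicitly handle the $\bot$ case and spell out why the conditioning on neighbors' randomness is harmless---but the core argument is identical.
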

\begin{proof}
    The number of distinct colors tried by neighbors of $v$ in equal or higher layers is at most $\uncolored{d'}(v)$. The probability that $v$ hits one of those colors is bounded by $\uncolored{d'}(v)\cdot \kappa/(\hatd(v)+s(v)) \leq (s(v)/(x\cdot\kappa))\cdot \kappa/s(v) \leq 1/x$.
\end{proof}

\begin{fact}[Quadratic progress in ratio]
    \label{claim:quadraticProgress}
    For some $x\geq 2$, $i \geq 2$, suppose that for all $j \geq i$, all nodes $v \in L_{j}$ satisfy $\uncolored{d'}(v) \leq s(v) / (x \cdot \kappa)$.
    For each node $v\in L_i$ and any $t \geq s(v) / (x^2 \cdot \kappa)$, after $O(1)$ iterations of the loop in
    \owncref{line}{step:sliceSecondLoop}, $v$ satisfies
    $\uncolored{d'}(v) \leq t$ w.p.\ $1-\exp(-\Omega(t))$.
\end{fact}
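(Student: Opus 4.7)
The plan is to couple the number of $v$'s surviving uncolored neighbors to a binomial tail and apply a multiplicative Chernoff bound. The main input is that \cref{claim:probaColoring} guarantees each active $u \in L_{\geq i}$ gets colored with probability at least $1-1/x$ per round. First, I would show that the hypothesis $\uncolored{d'}(u) \leq s(u)/(x\kappa)$ persists through the loop at \owncref{line}{step:sliceSecondLoop}: $s(u)$ is a fixed parameter of the algorithm while $\uncolored{d'}(u)$ is monotonically non-increasing, so the slack-to-degree ratio only grows. Hence in every iteration, every active $u \in L_{\geq i}$ satisfies the premise of \cref{claim:probaColoring} and fails to get colored in that round with probability at most $1/x$, conditioned on any behavior of its neighbors.

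I would then use this conditional bound to dominate the number of $v$'s surviving neighbors by an independent Bernoulli process. Ordering $v$'s currently uncolored neighbors in $L_{\geq i}$ arbitrarily as $u_1, u_2, \ldots$, the event ``$u_j$ is still uncolored after the round'' conditioned on the outcomes for $u_1, \ldots, u_{j-1}$ has probability at most $1/x$, since \cref{claim:probaColoring} holds against arbitrary configurations of the other variables. A standard coupling then shows that after one round the number of surviving uncolored neighbors in $L_{\geq i}$ is stochastically dominated by $\mathrm{Bin}(\uncolored{d'}(v), 1/x)$, and iterating across $r$ rounds yields domination by a binomial whose mean is at most $s(v)/(x^{r+1}\kappa)$.

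Choosing $r$ to be the smallest constant integer with $x^{r-1} \geq e^2$ (so $r\leq 4$ even when $x=2$), this mean is at most $t/e^2$ for every $t \geq s(v)/(x^2\kappa)$. The multiplicative Chernoff bound $\Pr[Y\geq t] \leq (e\Exp[Y]/t)^t \leq e^{-t}$ applied to the dominating binomial then yields the desired $\exp(-\Omega(t))$ tail. The main subtlety is that the events $\{u_j \text{ remains uncolored}\}_j$ are not jointly independent — they can be correlated through shared color attempts at distance two in $G$ — but \cref{claim:probaColoring} provides exactly the sequential conditional bound needed to couple with independent Bernoullis, which is what drives the concentration.
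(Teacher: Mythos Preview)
Your overall strategy matches the paper's: bound each neighbor's per-round survival probability by $1/x$ via \cref{claim:probaColoring} and then apply a martingale-type Chernoff bound. The paper uses exactly two rounds (getting $\Exp[\sum_j X_j]\le \uncolored{d'}(v)/x^2 \le t/x \le t/2$ and hence $\Pr[\sum_j X_j\ge t]\le e^{-t/6}$ via \cref{lem:chernoff}); you allow up to four, which is still $O(1)$.

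The gap is in your coupling step. You order $v$'s neighbors \emph{arbitrarily} and assert $\Pr[u_j\text{ stays uncolored}\mid X_1,\ldots,X_{j-1}]\le 1/x$ because ``\cref{claim:probaColoring} holds against arbitrary configurations of the other variables.'' But \cref{claim:probaColoring} bounds $u_j$'s failure probability over $u_j$'s \emph{own} fresh random color, given any fixing of the colors its neighbors \emph{try}; it does not give a bound conditioned on whether other nodes \emph{got colored}. With an arbitrary ordering, if $u_j$ has higher priority than some earlier $u_k$ and they are adjacent, then the outcome $X_k$ depends on $\rC_{u_j}$ (since $u_k$ fails exactly when a higher-priority neighbor, possibly $u_j$, tries $u_k$'s color). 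Conditioning on such an $X_k$ therefore biases $u_j$'s color distribution, and with a skewed sampler (\cref{eq:uniform-slack-color}) this can push the conditional failure probability above $1/x$. The paper avoids this by ordering the $u_j$ by \emph{priority} (higher layer first, then higher ID) --- the same order that breaks ties in color trials --- so that each $X_k$ with $k<j$ is measurable with respect to the randomness of nodes of strictly higher priority than $u_j$, leaving $\rC_{u_j}$ unconstrained and the $1/x$ bound from \cref{claim:probaColoring} intact. With that single change to the ordering, your argument goes through.
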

\begin{proof}
    Let us order uncolored nodes first by layer (giving priority to layers of higher index), second by ID, such that nodes early in the order have priority over nodes later in the order when trying colors. Consider a node $v \in L_i$, and for all its neighbors in $L_{\geq i}$, consider the random variables $X_1,\ldots,X_{\uncolored{d'}(v)}$ indicating whether each of its neighbors stays uncolored after $2$ rounds of trying colors. For each such variable, by \cref{claim:probaColoring}, $\Exp[X_j] \leq 1/x^2 \leq 1/(2x)$, and so $\Exp[\sum_{j=1}^{\uncolored{d'}(v)} X_j] \leq t/2$. Note that whether a node gets colored only depends on its neighbors of higher priority, so revealing the random choices of nodes in their order of priority allows us to apply \cref{lem:chernoff} (Chernoff bound), which gives:
    \[
    \Pr\parens*{\sum_{j\in[\uncolored{d'}(v)]} \!\!\!\!X_j \geq t } \leq \exp(-t/6)\ .\qedhere
    \]%
\end{proof}

We can now prove \cref{lem:slack-color}. Initially, all nodes satisfy $s(v) \geq \alpha \hatd(v)$. By \cref{claim:setupRatio}, after the first loop (\owntocref{lines}{step:sliceStartFirstLoop}{step:sliceEndFirstLoop}), they all satisfy $s(v) \geq 2\cdot \kappa\hatd(v)$. We are now ready for repeated applications of \cref{claim:quadraticProgress}.

At the beginning of the first iteration of the second loop (\owncref{line}{step:sliceSecondLoop}), nodes satisfy $s(v) \geq 2^{2^0}\cdot \kappa\hatd(v)$. Suppose that for some integer $k$, all nodes in $L_{\geq i}$ satisfy $s(v) \geq 2^{2^k}\cdot \kappa\uncolored{d'}(v)$ and that $s(L_i) / (2^{2^{k+1}}\cdot \kappa ) \in \Omega(\log n)$. By \cref{claim:quadraticProgress}, after 2 rounds, all such nodes satisfy $s(v) \geq 2^{2^{k+1}}\cdot \kappa\uncolored{d'}(v)$, with high probability.

Since $s(L_i) / (2^{2^{i}}\cdot \kappa ) \in \Omega(\log n)$, for the $i \leq \ceil{\log \log \Delta}$ first iterations of the second loop, we can apply \cref{claim:quadraticProgress} and claim that, w.h.p., all nodes $v \in L_{\geq i}$ satisfy $s(v) \geq 2^{2^i}\cdot \kappa \uncolored{d'}(v)$. We cannot, however, reapply \cref{claim:quadraticProgress} in the same manner beyond this point, as $s(v) / (2^{2^{i+1}} \cdot \kappa)$ could now be below $c \log n$ for a small constant $c$. However, the $2^{2^i} \kappa$ ratio between $s(v)$ and $\uncolored{d'}(v)$ achieved by all nodes in $L_{\geq i}$ means that they all get colored w.p.\ at least $1-2^{-2^i}$ in subsequent iterations of the loop, by \cref{claim:probaColoring}. Since each node $v \in L_i$ has less than $C \log n \cdot 2^{2^{i+1}}$ neighbors to begin with, its expected number of $L_{\geq i}$-neighbors after $2$ more iterations of the loop is some $O(\log n)$. So, applying \cref{claim:quadraticProgress}, w.h.p., each node in $L_i$ has at most $O(\log n)$ neighbors in highers layers at the end of the algorithm.
\end{proof}

\section{Selecting Outliers}
\label{sec:outliers}

In this section we prove the following lemma:
\lemmaOutliers*

The crux is that most nodes in $K$ verify (stronger versions of) \cref{eq:inliers}.
Indeed, for anti-degrees we have access to underestimates $\ta_v \le a_v$.
Therefore, Markov inequality directly implies that at least $(1-1/100)|K|$ nodes have $\ta_v \le a_v \le 100\avganti_K$. Similarly, at least $(1-1/50)|K|$ nodes are such that $\te_v = e_v + \errext_v \le 100(\avgext_K + \errext_K)$.




The issue is that, although nodes know the values of $\te_v$ and $\ta_v$, they neither know $\avgext_K$ nor $\avganti_K$.
We begin by describing a more general \emph{filtering algorithm} and then use it to select outliers.

\paragraph{A Filtering Algorithm.}
We focus on an almost-clique $K$ where each node has a value $0\le x_v \le U$ for some integer upper bound $U \le \poly(\Delta)$.
For some density parameter $\delta\in(0,1)$, our idea is to find a $\tau$ such that $\set{v\in K: x_v \ge \tau}$ has density at most $\delta$ (i.e., size $\delta|K|$).
Since we cannot afford to learn all $x_v$, we approximate their distribution and learn instead the size of the sets
\[ S_i = \set*{v\in K: (1+\eta)^i \le x_v < (1+\eta)^{i+1}}\quad\text{for } i=0,1, \ldots, \ceil{\log_{1+\eta} |K|}+1 \ , \]
where $\eta=\Theta(\delta/(1-\delta))$ is small constant parameters to be defined precisely later.

Learning all $|S_i|$ is too expensive, so we settle for a constant factor approximation.
Let $v$ be the node of minimum \ID in $K$ and $T$ be a BFS tree spanning $K$ rooted at $v$.
We shall call $T_{u}$ the subtree rooted at $u$.
Let $u$ be an internal node and $u_1, \ldots, u_d$ its children in $T$.
Suppose that each $u_j$ has a value $s_i^{u_j}$ for each $i\in[\ell]$ (where $\ell\eqdef \ceil{\log_{1+\eta} |K|}+1$) which is an approximate value of $|S_i\cap T_{u_j}|$.
Node $u$ computes the following approximation of $|S_i\cap T_u|$, where $\mathbf{1}$ is the indicator function:
\[
s^u_i = \mathbf{1}(u\in S_i) + \sum_{j=1}^d (1+\eta)^{\ceil{\log_{1+\eta}s_i^{u_j}}}
\le \mathbf{1}(u\in S_i) + (1+\eta)\sum_{j=1}^d s_i^{u_j} \ .
\]

Let $s_i=s_i^v$ be the value computed by the root.
The following fact holds because $T$ has depth 4 in $G$ and each node belongs to exactly one $S_i$.
\begin{fact}
\label{fact:si-size}
$|S_i| \le s_i \le (1+\eta)^3|S_i|$
\end{fact}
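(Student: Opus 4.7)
The plan is to prove both inequalities by bottom-up induction on the subtrees of $T$, viewing $s_i^u$ as an approximation of $|S_i \cap T_u|$.

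For the lower bound, I would show by induction on the height $h(u)$ that $s_i^u \geq |S_i \cap T_u|$. The base case is immediate at leaves, where $s_i^u = \mathbf{1}(u\in S_i)$. The induction step uses the pointwise inequality $(1+\eta)^{\lceil \log_{1+\eta} x\rceil} \geq x$ (valid for all $x \geq 0$ under the convention $\lceil \log_{1+\eta} 0 \rceil = -\infty$, so empty subtrees contribute nothing), which gives $s_i^u \geq \mathbf{1}(u\in S_i) + \sum_j s_i^{u_j} \geq \mathbf{1}(u\in S_i) + \sum_j |S_i \cap T_{u_j}|$. Since the subtrees $T_{u_j}$ partition $T_u \setminus \{u\}$ and each node of $K$ lies in exactly one $S_i$, the right-hand side equals $|S_i \cap T_u|$.

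For the upper bound, I would show by induction that $s_i^u \leq (1+\eta)^{\max(h(u)-1,\,0)} \cdot |S_i \cap T_u|$, using the complementary inequality $(1+\eta)^{\lceil \log_{1+\eta} x \rceil} \leq (1+\eta)\cdot x$ which contributes a multiplicative factor of $(1+\eta)$ per aggregation level. The crucial savings of one factor comes from the bottom of the tree: leaf values lie in $\{0,1\}$, so at their parents the ceiling introduces no rounding, i.e., $(1+\eta)^{\lceil \log_{1+\eta} s_i^{u_j}\rceil} = s_i^{u_j}$ for leaf children $u_j$. Thus the height-$1$ nodes already satisfy equality, and only levels at height $\geq 2$ contribute the $(1+\eta)$ inflation.

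Applying this to the root $v$: since $K$ has diameter at most $2$ in $G^2$, any BFS tree in $G$ spanning $K$ has depth at most $4$, so $h(v)\leq 4$. This yields $s_i = s_i^v \leq (1+\eta)^{4-1}\,|S_i| = (1+\eta)^3\,|S_i|$, closing the induction. The argument is essentially a constant-depth aggregation bookkeeping; the only care points are the $s_i^{u_j}=0$ convention and recognizing that the leaf-level savings sharpen the naive bound of $(1+\eta)^4$ down to $(1+\eta)^3$. I do not anticipate any serious obstacle beyond these elementary observations.
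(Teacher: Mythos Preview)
Your proof is correct and is essentially a careful fleshing-out of the paper's one-line justification (``$T$ has depth 4 in $G$ and each node belongs to exactly one $S_i$''). The paper leaves implicit exactly the point you make explicit: the naive count of four aggregation levels would give $(1+\eta)^4$, but since leaf values lie in $\{0,1\}$ the first rounding step is exact, yielding the stated $(1+\eta)^3$.
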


The next important fact is that aggregates $s_i^u$ are easy to compute
\begin{fact}
Node $u$ computes each $s_i^u$ for $i\in[\ell]$ in $O(\log\log\Delta)$ rounds.
\end{fact}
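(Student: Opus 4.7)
The plan is to show that the aggregates $s_i^u$ for all $i\in[\ell]$ can be computed by a single bottom-up pipelined aggregation along the BFS tree $T$, by exploiting both the constant depth of $T$ and a compact encoding of the values being transmitted.

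First, I would bound the depth of $T$ in $G$. Since $K$ is an $\epsilon$-almost-clique in $G^2$ with $|K|\le (1+\epsilon)\Delta^2$, any two nodes of $K$ share at least $(1-3\epsilon)\Delta^2$ common $G^2$-neighbors inside $K$ by inclusion-exclusion, so the diameter of $K$ in $G^2$ is at most $2$ and hence its diameter in $G$ is at most $4$. The BFS tree $T$ spanning $K$ therefore has depth $O(1)$ in $G$, and can be built in $O(1)$ rounds.

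Second, I would compress each child-to-parent message. The only way a child's value $s_i^{u_j}$ enters its parent's recursion is through $(1+\eta)^{\ceil{\log_{1+\eta} s_i^{u_j}}}$, so it suffices for each child to transmit the integer exponent $\ceil{\log_{1+\eta} s_i^{u_j}}$. A straightforward induction on $T$ using $|K|\le (1+\epsilon)\Delta^2$ and the $O(1)$ depth of $T$ shows that $s_i^u = O(|K|)$ at every node, so each exponent lies in $\{0,1,\ldots,O(\log_{1+\eta}\Delta)\}$ and, since $\eta$ is constant, fits in $O(\log\log\Delta)$ bits.

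With $\ell=O(\log_{1+\eta} U)=O(\log\Delta)$ buckets (using $U\le \poly(\Delta)$), each child thus transmits $O(\log\Delta\cdot\log\log\Delta)$ bits in total to its parent. Using the \congest bandwidth $O(\log n)\ge \Omega(\log\Delta)$, this fits into $O(\log\log\Delta)$ rounds per tree level, and combined with the $O(1)$ depth of $T$ it yields the claimed bound. I do not foresee a genuine obstacle; the one point worth double-checking is the inductive bound $s_i^u=O(|K|)$, which is precisely what legitimizes the $O(\log\log\Delta)$-bit encoding of each transmitted exponent.
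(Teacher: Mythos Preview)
Your proposal is correct and follows essentially the same argument as the paper: bound the depth of $T$ by $O(1)$, observe that each child only needs to send the integer exponent $\ceil{\log_{1+\eta} s_i^{u_j}}$ (which is $O(\log\log\Delta)$ bits) for each of the $\ell=O(\log\Delta)$ buckets, and conclude that the $O(\log\Delta\cdot\log\log\Delta)$ bits per edge fit in $O(\log\log\Delta)$ rounds given $O(\log n)\ge\Omega(\log\Delta)$ bandwidth. Your explicit inductive bound $s_i^u=O(|K|)$ and the justification of the constant tree depth are spelled out more carefully than in the paper's terse proof, but the approach is the same.
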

That is because, starting from leaves all the way to the root, each child $u_j$ of $u$ in $T$ needs to send $\ceil{\log_{1+\eta} s_i^{u_j}}$ for each $i\in[\ell]$, which is a total of $O(\log\log \Delta)\times O(\log n)$ bits to $u$.
Thus, after $O(\log\log\Delta)$ rounds, the root of the BFS tree knows all values $s_i$ for $i\in[\ell]$.
We state now the key lemma:

\begin{lemma}[Filtering Lemma]
\label{lem:filter}
Fix some $M\ge 1$ and $\delta\in(0,4/5)$.
Suppose $A=\set{v\in K: x_v \le M}$ has size at least $(1-\delta)|K|$.
There is an algorithm computing a set $\tA\subseteq K$ such that each $v\in \tA$ verifies $x_v \le 2M$ and $|\tA| \ge (1-1.5\delta)|K|$.
\end{lemma}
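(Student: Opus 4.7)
The plan is to let the leader (the minimum-ID node $w_K\in K$, which roots the BFS tree $T$) choose a single threshold index $i^\star$ based on the already-aggregated approximate counts $s_0,\ldots,s_\ell$, and then set $\tA \eqdef \bigcup_{i \le i^\star} S_i$. Broadcasting $i^\star$ down $T$ takes $O(1)$ additional rounds on top of the $O(\log\log\Delta)$ rounds needed to collect the $s_i$'s at the root, which matches the intended time budget.

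Fix $\eta > 0$ small enough that $(1+\eta)^3 \le 1 + \delta/2$, let $\hat{K}\eqdef \sum_i s_i$, and note $|K| \le \hat{K} \le (1+\eta)^3|K|$ by \cref{fact:si-size}. For a parameter $\delta'\in(0,1)$ to be fixed, define $i^\star$ as the smallest index such that $\sum_{i\le i^\star} s_i \ge (1-\delta')\hat{K}$. Each node $v$ locally knows its bucket index $i(v)$ (so that $v \in S_{i(v)}$) from $x_v$, and joins $\tA$ iff $i(v) \le i^\star$.

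The size bound is immediate: by minimality of $i^\star$, $\sum_{i>i^\star} s_i \le \delta'\hat{K}$, and since $|S_i|\le s_i$ by \cref{fact:si-size}, the number of excluded nodes is $|K|-|\tA| = \sum_{i>i^\star}|S_i| \le \delta'\hat{K} \le \delta'(1+\eta)^3|K|$. For the value bound, set $i_M \eqdef \lfloor\log_{1+\eta} M\rfloor$, so that $A \subseteq \bigcup_{i\le i_M} S_i$; then
\[
\sum_{i\le i_M} s_i \;\ge\; \sum_{i\le i_M} |S_i| \;\ge\; |A| \;\ge\; (1-\delta)|K| \;\ge\; \frac{1-\delta}{(1+\eta)^3}\,\hat{K}.
\]
If we choose $\delta'$ so that $1-\delta' \le (1-\delta)/(1+\eta)^3$, the index $i_M$ itself satisfies the threshold condition, which by minimality of $i^\star$ forces $i^\star \le i_M$. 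Consequently every $v\in\tA$ satisfies $x_v < (1+\eta)^{i^\star+1} \le (1+\eta)\cdot M \le 2M$ as long as $\eta \le 1$.

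The main obstacle is the simultaneous resolution of the two constraints on $\delta'$: the size bound demands $\delta'(1+\eta)^3 \le 1.5\delta$, while the value bound demands $\delta' \ge 1-(1-\delta)/(1+\eta)^3 = ((1+\eta)^3-(1-\delta))/(1+\eta)^3$. Both hold simultaneously iff $(1+\eta)^3 - (1-\delta) \le 1.5\delta$, i.e.\ $(1+\eta)^3 \le 1 + \delta/2$, which is achievable by taking $\eta = \Theta(\delta/(1-\delta))$ sufficiently small (this is precisely the paper's parameter choice). Any valid $\delta'$ in the resulting nonempty interval then yields $|\tA|\ge(1-1.5\delta)|K|$ and $x_v \le 2M$ for every $v\in\tA$, completing the proof.
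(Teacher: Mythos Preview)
Your proof is correct and follows essentially the same approach as the paper: both compute a threshold index from the cumulative approximate bucket counts $s_i$, set $\tA$ to be the union of buckets up to that index, bound $|\tA|$ via $|S_i|\le s_i$, and bound the values in $\tA$ by showing the threshold index cannot exceed the bucket containing $M$. The only cosmetic differences are that the paper thresholds against $(1-\delta)|K|$ directly rather than against $(1-\delta')\hat K$ with an auxiliary $\delta'$, and the paper argues the value bound by looking at the maximal-$x_v$ node of $A$ rather than defining $i_M$ explicitly; your parametrization makes the feasibility constraint $(1+\eta)^3\le 1+\delta/2$ slightly more transparent, but the content is the same.
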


\begin{proof}
Define
\[\tau = \inf\set[\Bigg]{i\in[\ell]: \sum_{j\le i} s_j \ge \parens{1-\delta}|K|} \ ,\]
and let $\tA=\bigcup_{j \le \tau} S_j$.
Note that the root of $T$ knows all $s_j$'s and can therefore compute $\tau$. Once the value of $\tau$ is known, each node knows if it belongs to $\tA$. By \cref{fact:si-size} and definition of $\tau$,
\[ |\tA| = \sum_{j \le \tau} |S_j| \ge \sum_{j\le \tau} s_j/(1+\eta)^3 \ge \frac{1-\delta}{(1+\eta)^3}|K| \ge (1-1.5\delta)|K| \ , \]
where the last inequality comes from setting $\eta\eqdef \frac{\delta}{4k(1-\delta)}$ where $k=3$.
Using $e^x \le 1+x+x^2$ for $x\le1$, we have $(1+\eta)^k \le e^{\eta k} \le 1+\eta k + (\eta k)^2 \le 1+2\eta k$ because $\eta k < 1$.
Hence, using $\frac{1}{1+x} \ge 1-x$ for all $x > 0$, we have $\frac{1-\delta}{(1+\eta)^k} \ge \frac{1-\delta}{1+2\eta k} \ge (1-\delta)(1-2\eta k) \ge (1-2\delta)$, where the last inequality comes from our choice of $\eta$.

We now argue that each $u\in\tA$ has $x_u \le (1+\eta)M$. Let us define
$v \eqdef \arg\max_{u\in A} x_u$,
the node such that $x_v \le M$ with the greatest value $x_v$. Let $i\in[\ell]$ be the index such that $v\in S_i$. If $i > \tau$, then all $u\in\tA$ verify $x_u \le (1+\eta)^{\tau+1} \le (1+\eta)^i \le x_v \le M$.
On the other hand, if $i \le \tau$, we claim that we must have $i=\tau$. This is a straightforward consequence of the maximality of $x_v$: using that $A \subseteq \bigcup_{j\le i} S_j$, we get
\[ \sum_{j \le i} s_j \ge \sum_{j\le i}|S_j| \ge |A| \ge (1-\delta)|K|\ . \]
So, by minimality of $\tau$, it must be that $\tau \le i$, hence that $\tau=i$.
The lemma follows since each $u\in S_j$ with $j < \tau$ verify $x_u < (1+\eta)^\tau \le x_v \le M$ and each $u\in S_\tau$ verify $x_u \le (1+\eta)^{\tau+1} \le (1+\eta)x_v \le (1+\eta)M$.
\end{proof}

\begin{proof}[Proof of \cref{lem:compute-outliers}]
Applying \cref{lem:filter} to the set $\set{v\in K: \ta_v \le \avganti_K/100}$ (with $\delta=1/100$ and $M=100\avganti_K$), we compute in $O(\log\log n)$ rounds a set $A_1\subseteq K$ of size $(1-3/200)|K|$ such that all $v\in A_1$ verify $\ta_v \le 200\avganti_K$.
By Markov, at most $|K|/100$ nodes verify $e_v > 100\avgext_K$, and at most $|K|/100$ other nodes verify $\errext_v > 100\errext_K$. So, at least $(1-1/50)|K|$ nodes have $\te_v = e_v + \errext_v \le 100(\avgext_K+\errext_K)$.
Applying \cref{lem:filter} to the set $\set{v\in K: \te_v \le 100(\avgext_K + \errext_K)}$ (with $\delta=1/50$ and $M=100(\avganti_K+\errext_K)$), we compute a set $A_2\subseteq K$ of size $(1-3/100)|K|$ such that all $v\in A_2$ verify $e_v \le 200(\avgext_K+\errext_K)$.
Letting $I_K\eqdef A_1\cap A_2$, we have a set for which all nodes verify \cref{eq:inliers} and of size $(1-5/100)|K|$.
\end{proof}

\section{Overcoming Congestion}
\label{sec:congestion}

In this section, we explain how we reduce the bandwidth to $O(\log n)$.
The parts of the algorithm that need to be modified are the sampling of colors when reducing the degree (\cref{lem:sample-log2n}) and the learning of the palettes once the uncolored degree is $O(\log n)$ (\cref{lem:learn-moderatly-dense-log2n,lem:learn-palette-very-dense-log2n}).

\subsection{Color Sampling}
\label{sec:slack-color-logn}

We explain first how nodes sample uniform colors from their palette with $O(\log n)$ bandwidth.
The intuition remains the same as in \cref{lem:sample-log2n}: sampling $\Theta(\log n)$ colors in $\pal{K}$ and dropping the ones used by external neighbors.
The lower bandwidth cost is achieved by using pseudorandom objects (representative hash functions) to sample the colors.
This technique was previously used by \cite{HNT22} to implement the $\deg+1$-list-coloring algorithm of \cite{HKNT22} in \congest.
We extend \cite{HNT22} to comply with our uniformity needs (\cref{eq:uniform-slack-color}).
Intuitively, one can think of a representative hash function as a random function in the sense that for \emph{any sets} $T$ and $P$, the number of collisions under a random function is close to what would be expected from a truly random function.
\Cref{lem:rep-hash-func} gives a formal description. Since the proof is a standard probabilistic argument, we postpone it to \cref{sec:exists-rep-hash-function}.

\begin{restatable}[Representative Hash Functions, Extension of \cite{HNT22}]{lemma}{lemmaRepHashFunc}
\label{lem:rep-hash-func}
Let $\alpha, \beta\in(0,1/8)$ and $\lambda \ge 1$ an integer be such that $\alpha \le \beta$ and $\lambda \ge \Omega(\alpha^{-1}\beta^{-2}\log n)$. Let $\calU$ be a finite set.
There exists a family $\hashrep$ of $F=\Theta(\beta\lambda^2\log|\calU|\poly(n))$ functions $\calU\to[\lambda]$ and a $\sigma \le \lambda$ verifying $\sigma \ge \Theta(\beta^{-2}\alpha^{-1}\log n)$ such that
\begin{enumerate}
    \item\label{part:collisions} For any pair $(x,y)\in \calU\times[\lambda]$ and \emph{disjoint} sets $T,P\subseteq \mathcal{U}$ with sizes $|T|,|P|\le \beta\lambda$ and $|T|\ge \alpha\lambda$,
    when sampling a $h\in\hashrep$ such that $h(x)=y$, with high probability, \cref{eq:hash-func} holds:
    \begin{equation}
    \label{eq:hash-func}
        \card{(h(T)\setminus h(P))\cap [\sigma]} \ge \frac{\sigma|T|}{\lambda}(1-8\beta)\ .
    \end{equation}
    \item For all pairs $(x,y)\in \calU\times[\lambda]$ there are $\frac{1\pm 1/2}{\lambda}F$ functions in $h\in \calH$ such that $h(x)=y$.
\end{enumerate}
\end{restatable}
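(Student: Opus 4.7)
The plan is to establish existence by the probabilistic method: sample $F$ functions $h_1, \ldots, h_F$ independently and uniformly from $[\lambda]^{\mathcal{U}}$, and verify that both conditions simultaneously hold with positive probability. Condition~2 is a direct Chernoff argument: for each fixed $(x,y)$, the size of $\{i : h_i(x) = y\}$ is $\mathrm{Bin}(F, 1/\lambda)$ with mean $F/\lambda$ and concentrates within a $(1\pm 1/2)$ factor, and a union bound over the $|\mathcal{U}|\lambda$ pairs is easily absorbed as long as $F/\lambda$ dominates $\log(|\mathcal{U}|\lambda)$.

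The heart of the argument is the single-function concentration underlying condition~1. Fix $(x, y, T, P)$ with $T, P$ disjoint, $|T| \ge \alpha\lambda$ and $|T|, |P| \le \beta\lambda$. For a uniformly random $h$ conditioned on $h(x) = y$, I further condition on the restriction $h|_P$; the set of ``free'' slots $S \eqdef [\sigma] \setminus h(P)$ then has size at least $\sigma - \beta\lambda$, and the goal is to lower-bound $N \eqdef \sum_{c\in S}\mathbf{1}[c \in h(T)]$. The indicators $Z_c = \mathbf{1}[c \in h(T)]$ arise from throwing $|T|$ independent balls into $\lambda$ bins, hence $\{Z_c\}_{c \in [\lambda]}$ are negatively associated, and Chernoff--Hoeffding applies to $\sum_{c \in S} Z_c$. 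A routine computation gives $\mathbb{E}[N \mid h|_P] \ge (1 - 2\beta) \sigma |T|/\lambda$, so a deviation of relative size $5\beta$ in the lower tail has probability at most $\exp(-\Omega(\beta^2 \sigma |T|/\lambda)) \le \exp(-\Omega(\beta^2 \alpha \sigma))$, which is $n^{-\Omega(1)}$ by the choice $\sigma = \Theta(\beta^{-2}\alpha^{-1}\log n)$. Edge cases where $x \in T \cup P$ change $N$ by at most $1$ and are absorbed in the $8\beta$ slack (since $\alpha\lambda \gg 1$ under our hypotheses).

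To lift this per-function guarantee to a family guarantee, let $g_{x,y,T,P} \eqdef |\{i : h_i(x) = y \text{ and } h_i \text{ violates the inequality}\}|$. This count is $\mathrm{Bin}(F, p/\lambda)$ with $p \le n^{-\Omega(1)}$, so Chernoff combined with condition~2 yields $g_{x,y,T,P}/f_{x,y} \le 1/\poly(n)$ except with tiny probability. A union bound over the $|\mathcal{U}|\lambda \cdot \binom{|\mathcal{U}|}{\le \beta\lambda}^2 \le |\mathcal{U}|^{O(\beta\lambda)}$ tuples $(x,y,T,P)$ is what dictates $F = \Theta(\beta\lambda^2\log|\mathcal{U}|\poly(n))$. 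The main obstacle is to obtain the tight $\sigma \ge \Theta(\beta^{-2}\alpha^{-1}\log n)$ dependence: a naive McDiarmid bound on $N$, viewed as a function of $h|_{T \cup P}$, yields only a $\sqrt{\lambda}$ dependence since each of the $\Theta(\beta\lambda)$ coordinates has bounded influence, and thus exploiting the negative association of the $Z_c$'s (rather than generic bounded differences) is essential.
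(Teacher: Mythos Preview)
Your high-level structure matches the paper's exactly: sample $F$ uniform functions, get condition~2 by Chernoff on $|\{i:h_i(x)=y\}|$, and for condition~1 combine a single-function tail bound with Chernoff over the functions in $\calH(x,y)$ and a union bound over all $(x,y,T,P)$. The paper simply invokes the single-function bound from \cite{HNT22} as a black box (their Claim~1), whereas you reprove it via negative association of occupancy indicators, which is a perfectly reasonable and more self-contained route.

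There is, however, a genuine gap in your single-function step. After conditioning on $h|_P$, you assert that $|S|=|[\sigma]\setminus h(P)|\ge \sigma-\beta\lambda$ and then that $\mathbb{E}[N\mid h|_P]\ge(1-2\beta)\sigma|T|/\lambda$. The first bound is correct but typically vacuous: in the intended applications $\lambda$ is of order $\Delta^2$ while $\sigma=\Theta(\log n)$, so $\sigma-\beta\lambda$ is hugely negative. Consequently the conditional-expectation bound does not follow; for an adversarial realization of $h|_P$ one can have $h(P)\supseteq[\sigma]$ and $N=0$. What you actually need is $|h(P)\cap[\sigma]|\le O(\beta)\sigma$, which only holds with high probability over $h|_P$ (its mean is $\le\beta\sigma$), not deterministically.

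Two clean fixes. Either (i) first apply Chernoff to $|h(P)\cap[\sigma]|$ to get $|S|\ge(1-2\beta)\sigma$ with the same $\exp(-\Omega(\beta^2\alpha\sigma))$ failure probability, and only then condition on $h|_P$ and run your NA argument on $\sum_{c\in S}Z_c$; or (ii) avoid conditioning altogether and work with $Y_c\eqdef\mathbf{1}[c\in h(T)]\cdot\mathbf{1}[c\notin h(P)]$ directly. The families $(\mathbf{1}[c\in h(T)])_c$ and $(\mathbf{1}[c\notin h(P)])_c$ are each NA, they are independent of one another since $T\cap P=\emptyset$, and $Y_c$ is monotone in both coordinates, so $(Y_c)_{c\in[\sigma]}$ is NA and Chernoff applies to $N=\sum_c Y_c$ with the unconditional mean $\mathbb{E}[N]\ge(1-2\beta)\sigma|T|/\lambda$. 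Either route recovers your claimed $\exp(-\Omega(\beta^2\alpha\sigma))$ tail and the rest of your argument goes through unchanged.
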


\begin{lemma}
\label{lem:uniform-hash-color}
\label{lem:learn-moderately-logn}
For all nodes $v\in K \in\Kmod$,
there is a randomized algorithm (using $O(\log n)$ bandwidth) sampling a random color $\rC_v\in\pal{v}\cup\set{\bot}$ (where $\bot$ represents failure) such that $\Pr(\rC_v = \bot)\le 1/\poly(n)$ and for all $c\in \pal{v}$ we have $\Pr(\rC_v = c)\le \frac{\kappa}{\hatd(v) + \Omega(\avganti_K+\avgext_K+\te_v)}$ for some universal constant $\kappa > 0$.
\end{lemma}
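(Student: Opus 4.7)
The plan is to simulate the algorithm of \cref{lem:sample-log2n} under the $O(\log n)$ bandwidth restriction using representative hash functions (\cref{lem:rep-hash-func}) to compress the clique palette. Instead of broadcasting $\Theta(\log n)$ explicit colors per node, the clique will collectively compute a single \emph{representative set} $R\subseteq\pal{K}$ of only $O(\log n)$ colors---those hashed into $[\sigma]$ by a shared hash function $h$---and distribute $R$ once to every member of $K$.

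First I fix constants $\alpha=\Theta(\beta)\le\beta\in(0,1/8)$ and set $\lambda\eqdef\Theta(|\pal{K}|/\beta)$, $\sigma\eqdef\Theta(\beta^{-2}\alpha^{-1}\log n)=\Theta(\log n)$; the leader of $K$ learns $|\pal{K}|$ via \cref{lem:prefix-sum,lem:free-color} and samples $h\in\hashrep$ uniformly. Both $|\pal{K}|$ and $h$ fit in $O(\log n)$ bits (since $|\hashrep|=\poly(n)$), so they are broadcast to all of $K$ in $O(1)$ rounds and then forwarded one hop further so that every external neighbor $u\in N(K)\setminus K$ knows the $h$ of each adjacent almost-clique. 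Each group $T_i$ from \cref{lem:free-color} already knows $\pal{K}\cap R_i$ and locally computes its contribution $R\cap R_i$ to $R=\pal{K}\cap h^{-1}([\sigma])$; by item~2 of \cref{lem:rep-hash-func} and a Chernoff bound, $|R|=\Theta(\sigma)=\Theta(\log n)$ and $|R\cap R_i|=O(1)$ w.h.p., so $R$ is disseminated to every node of $K$ by \cref{lem:many-to-all} in $O(1)$ rounds (the hypotheses $\Delta\gg|R|\log n$ and $\Delta^2\gg|R|^3\log n$ follow from $\Delta\ge\Omega(\log^{3.5} n)$).

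Next, each external node $u$, using the $h$ it received, sends to each $K$-neighbor the $\sigma$-bit bitmap $h(\col(N(u)\setminus K))\cap[\sigma]$, costing one $O(\log n)$-bit message per edge. Each $v\in K$ ORs these into $B^v\eqdef h(\col(N^2(v)\setminus K))\cap[\sigma]$---only external d2-neighbors matter, since $R\subseteq\pal{K}$ is disjoint from $\col(K)$---computes $V^v\eqdef h(R)\setminus B^v$, outputs $\bot$ if $V^v=\emptyset$, and otherwise returns $c(y_v)\eqdef\min\{c\in R:h(c)=y_v\}$ for $y_v$ drawn uniformly from $V^v$. Every step after the hash dissemination is local, so the total cost is $O(1)$ rounds of $O(\log n)$ bandwidth.

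For the analysis, set $T\eqdef\pal{K}\cap\pal{v}$ and $P\eqdef\col(N^2(v)\setminus K)$; the two sets are disjoint, $|T|\ge\Omega(|\pal{K}|)\ge\alpha\lambda$ by the constant-density consequence of \cref{lem:slack}, and $|P|\le e_v\le O(|\pal{K}|)\le\beta\lambda$ by our choice of $\lambda$. Because $\pal{K}\setminus T\subseteq P$ one gets $V^v=h(T)\cap[\sigma]\setminus h(P)$, so item~1 of \cref{lem:rep-hash-func} yields $|V^v|\ge \sigma|T|/\lambda\cdot(1-8\beta)=\Omega(\log n)$ w.h.p., bounding $\Pr[\rC_v=\bot]$ by $1/\poly(n)$. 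For uniformity, any $c\in T$ is output only when $h(c)\in V^v$ and $c$ is the minimum preimage of $h(c)$ in $\pal{K}$; dropping the second condition and applying item~2 of \cref{lem:rep-hash-func} gives $\Pr[\rC_v=c]\le\Exp_h[\mathbf{1}[h(c)\in[\sigma]]/|V^v|]\le O(\sigma/\lambda)\cdot O(1/\log n)=O(1/\lambda)=O(1/|\pal{K}|)\le \kappa/(\hatd(v)+\Omega(\avganti_K+\avgext_K+\te_v))$ for a universal $\kappa$, where the last inequality uses \cref{lem:slack}. The main obstacle will be juggling the four parameter constraints---$|R|$ small enough to many-to-all broadcast, $|V^v|$ large enough to avoid failure, $|T|/\lambda$ and $|P|/\lambda$ in the window required by \cref{lem:rep-hash-func}, and the marginal probability collapsing to $O(1/|\pal{K}|)$---while ensuring the external neighbors $u\notin K$ are correctly wired into the protocol and hold $h$ when required to build their bitmaps.
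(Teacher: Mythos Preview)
Your approach diverges from the paper's in one crucial design choice: you have the leader sample a \emph{single} hash function $h$ shared by the whole almost-clique, whereas the paper has each node $v$ independently sample its own $h_v\in\hashrep$. While your version can be made to satisfy the marginal bound written in the lemma, the shared-$h$ design creates a fatal correlation between the samplers of different nodes: every uncolored node of $K$ is confined to the same $O(\log n)$-sized set $R=\pal{K}\cap h^{-1}([\sigma])$, so two uncolored $G^2$-neighbors $v,v'\in K$ output the same color with probability $\Omega(1/\log n)$ rather than $O(1/|\pal{K}|)$. After \sct a moderately dense $K$ still has $\Theta(\avganti_K+\avgext_K)\gg\log n$ uncolored nodes, so in each round of \slicecolor a typical node would conflict with $\Omega(1)$ of them and never get colored. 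The analysis of \slicecolor (in particular \cref{claim:probaColoring}) needs the uniformity bound to hold \emph{conditioned on the random choices of the neighbors}; the paper gets this for free precisely because $h_v$ is fresh and independent for each $v$, so that marginals and conditionals coincide. With a clique-wide $h$ they do not, and the lemma, though literally true, becomes useless for its intended purpose. The paper's per-node scheme also avoids having to broadcast $R$ at all: each $v$ aggregates $A_v=h_v(\pal{K})\cap[\sigma]$ purely via $\sigma$-bit bitmaps passed along its own edges.

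Two smaller gaps: (i) your claim that $|R|=\Theta(\sigma)$ ``by item~2 and a Chernoff bound'' is not justified---item~2 of \cref{lem:rep-hash-func} only gives the pointwise bound $\Pr[h(x)=y]\in[\tfrac{1}{2\lambda},\tfrac{3}{2\lambda}]$, with no independence across $x\in\pal{K}$, so Chernoff does not apply and the upper bound on $|R|$ needed for your many-to-all step is unestablished; (ii) to build $B^v$ you let only \emph{external} relays $u\in N(K)\setminus K$ send bitmaps, but a relay $u\in K\cap N_G(v)$ can also have neighbors outside $K$, so your $B^v$ may miss hashes of some external $G^2$-neighbors of $v$.
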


\begin{proof}
We begin by explaining the algorithm and implementation, and we later explain why it samples colors with the right probability.
Let $\hashrep$ be a family of representative hash functions for parameters $\lambda,\alpha, \beta$ to be defined later.
Node $v$ samples a random $h\in\hashrep$ and broadcast it to its neighbors, which can be done in $O(1)$ rounds because $\lambda, |\hashrep|\le \poly(n)$.
Similarly to \cref{lem:free-color}, we can split $K$ randomly into groups $T_1, \ldots, T_k$ with $k = \Theta(|K|/\log n)$. Nodes of $T_i$ learn about a range of $\Theta(\log n)$ colors from the color space.
Since $v$ has a distance-2 neighbor in each $T_i$ (\cref{lem:random-groups}), each distance-1 neighbors $u$ knows about a set $S_u\subseteq \pal{K}$ such that $\bigcup_{u\in N(v)} S_u=\pal{K}$.
Each $u\in N(v)$ send $\set{i\in[\sigma]: i\in h(S_u)}$ in $O(1)$ rounds using a bitmap (because $\sigma=\Theta(\log n)$.
Using a bitwise OR, node $v$ computes $A_v = \set{i\in[\sigma]: i\in h(\pal{K})}$.
Similarly, node $v$ can learn $B_v = \set{i\in[\sigma]: i\in h(\col(\Ntwo(v)\setminus K))}$ the hash values of external neighbors.
Then $v$ samples an index $i_v\in A_v \setminus B_v$ uniformly at random.
At least one of its neighbors can tell $v$ which color in $\pal{K}$ hashes to $i_v$ (breaking ties arbitrarily).
Note that this color must be in their palette.
Let the random color $\rC_v$ be that color, and $\rC_v=\bot$ if $A_v\setminus B_v=\emptyset$.
\

Letting $\lambda\eqdef (100/\gamma_1)|\pal{K}|$, $\beta \eqdef 1/100$ and $\alpha \eqdef \frac{\gamma_1}{100}\cdot\gamma_2$ where $\gamma_1, \gamma_2\in(0,1)$ are the constants from \cref{lem:slack} such that 1) $|\pal{K}\cap\pal{v}| \ge \gamma_1e_v$ and 2) $|\pal{K}\cap\pal{v}| \ge \gamma_2|\pal{K}|$. (Recall nodes can compute $|\pal{K}|$ in $O(1)$ rounds, \cref{lem:free-color}.)
Let $T=\pal{K}\cap\pal{v}$ and $P=\col(\Ntwo(v)\setminus K)$.
Clearly, $T\cap P=\emptyset$ since $T\subseteq \pal{v}$ and $P\cap\pal{v}=\emptyset$, both by definition.
By our choice of parameters, we have $|T| \le |\pal{K}| \le \beta\lambda$, and $|T| \ge \alpha\lambda = \gamma_2|\pal{K}|$.
On the other hand, $|P| \le e_v \le \beta\lambda$.
Hence, by definition of $\hashrep$, with high probability,
\begin{equation}
\label{eq:size-set}
 |A_v\setminus B_v| = |(h(T)\setminus h(P))\cap[\sigma]|\ge \frac{\sigma|T|}{\lambda}(1-8\beta) \ge \alpha(1-8\beta)\sigma = \Theta(\log n)\ .
\end{equation}
\cref{eq:size-set} implies $\rC_v\neq\bot$ with high probability.
We turn to the uniformity. Fix some $c\in\pal{v}\cap\pal{K}$ (if $c\notin\pal{K}$ then $\Pr(\rC_v=c)=0$), then
\begin{align*}
\Pr(\rC_v = c) &= \sum_{i\in [\sigma]}\Pr(\text{pick index $i$ in $A_v\setminus B_v$}~|~h_v(c)=i)\times\Pr(h_v(c)=i) \\
&\le \frac{1}{\alpha(1-8\beta)\sigma} \times \frac{2\sigma}{\lambda} \\
&= \frac{1}{4\alpha(1-8\beta)|\pal{K}|} \tag{by definition of $\lambda$} \\
&\le \frac{\kappa}{\hatd(v)+\Omega(\avganti_K+\avgext_K+\te_v)}\qquad\qquad \text{for}\quad \kappa=\frac{1}{4\alpha(1-8\beta)} \tag{by \cref{lem:slack}}\ ,
\end{align*}
where the first inequality holds because \cref{eq:size-set} is true for any conditioning on one pair $h(c)=i$, and there are at most $2F/\lambda$ functions $h\in \hashrep$ such that $h(c)=i$ for any pair.
\end{proof}

\subsection{Learning Palette}
\label{sec:learn-palette-logn}

In this section, we reduce the bandwidth of the last step of our algorithm (\cref{sec:learn-palette-log2n}).
Our algorithm follows the same overall structure: nodes need to learn $\poly\log n$ colors from the clique palette, colors used by their external neighbors and (in very dense almost-cliques) the colors of their anti-neighbors.

\paragraph{Moderately Dense Almost-Cliques.}
When \cref{lem:slack} applies, we use the same approach as \cref{lem:learn-moderatly-dense-log2n} but with representative hash functions. The main difference with \cref{lem:uniform-hash-color} is that nodes need to infer $\Theta(\log n)$ colors from hashes, instead of one.

\begin{lemma}
\label{lem:moderately-dense-logn}
Let $v\in K\in \Kmod$, and $\hatd(v)< C'\log n$ for some large enough constant $C' > 0$.
There is a $O(1)$-round algorithm (using $O(\log n)$ bandwidth) for all such $v$ to learn at list $L(v)\subseteq \pal{v}$ containing $\hatd(v)+1$ colors.
\end{lemma}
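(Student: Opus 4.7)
The plan is to extend \cref{lem:learn-moderately-logn} from extracting one color to extracting $\hatd(v)+1=O(\log n)$ colors from $\pal{v}$ in $O(1)$ rounds of \congest, using representative hash functions (\cref{lem:rep-hash-func}) to compress communication and the many-to-all broadcast primitive (\cref{lem:many-to-all}) to disseminate the $O(\log n)$ chosen colors in parallel.

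The setup mirrors \cref{lem:learn-moderately-logn}: $v$ samples $h\in\hashrep$ with parameters $\lambda=\Theta(|\pal{K}|)$, $\beta=1/100$, $\alpha$ set to the constant $\gamma_2$ from \cref{lem:slack} ensuring $|\pal{K}\cap\pal{v}|\geq\alpha\lambda$, and $\sigma=\Theta(C'\log n)$. The description of $h$ is relayed to distance-$2$ in $O(1)$ rounds, after which the random groups of \cref{lem:random-groups} together with their $2$-hop relays let $v$ assemble the $\sigma$-bit bitmaps $A_v=h(\pal{K})\cap[\sigma]$ and $B_v=h(\col(N_{G^2}(v)\setminus K))\cap[\sigma]$ via bitwise OR, all within $O(\log n)$ bits per edge. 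Invoking the collision guarantee of \cref{lem:rep-hash-func} with $T=\pal{K}\cap\pal{v}$ and $P=\col(N_{G^2}(v)\setminus K)$ (disjoint, with $|T|\geq\alpha\lambda$ and $|T|,|P|\leq\beta\lambda$ by \cref{lem:slack} and our choice of $\lambda$), w.h.p.\ $|A_v\setminus B_v|\geq\alpha(1-8\beta)\sigma\geq\hatd(v)+1$. Node $v$ fixes any $I\subseteq A_v\setminus B_v$ with $|I|=\hatd(v)+1$; for every $i\in I$, any $c\in\pal{K}$ with $h(c)=i$ lies in $\pal{v}$, since $h(c)\notin B_v$ rules out $c$ being used by an external neighbor.

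For the retrieval, each group $T_j$ locally designates, for every $i\in[\sigma]$, a witness color $c(j,i)\in R_j\cap\pal{K}$ with $h(c(j,i))=i$ when one exists. A prefix-OR over $\sigma$-bit bitmaps along the intra-clique BFS tree (in the spirit of \cref{lem:prefix-sum}, but carrying bitmaps rather than integers) elects, for each $i\in I$, a unique source group $T_{j(i)}$ whose witness $c(j(i),i)$ is to be broadcast. Since only $k=O(\log n)$ messages need be disseminated and $\Delta\geq\Omega(\log^{3.5}n)$ comfortably satisfies the preconditions $\Delta\geq k\log n$ and $\Delta^2\geq k^3\log n$ of \cref{lem:many-to-all}, a single application of the many-to-all primitive delivers all $k$ witness colors to every node of $K$---and in particular to $v$---in $O(1)$ rounds. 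The output is $L(v)=\set{c(j(i),i):i\in I}\subseteq\pal{v}$, of size $\hatd(v)+1$.

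The main obstacle is that the total volume of color information $v$ must receive is $\Theta(\log^2 n)$ bits, far exceeding the $O(\log n)$ per-edge budget of a single round. Keeping every intra-clique aggregation on $\sigma$-bit bitmaps (rather than on explicit colors) is what keeps each communication step within budget, and offloading the final dissemination to the many-to-all primitive---whose correctness relies on the slack $\Delta\geq\Omega(\log^{3.5}n)$ afforded by the high-degree regime---is what makes the overall $O(1)$-round complexity possible.
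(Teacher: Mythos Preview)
Your approach is essentially the paper's: reuse the representative-hash machinery of \cref{lem:learn-moderately-logn} to identify $\Theta(\log n)$ good hash indices in $A_v\setminus B_v$, then recover the actual colors via many-to-all broadcast (\cref{lem:many-to-all}). However, there is a genuine gap in the parallelization.

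You write the retrieval step as if only a single uncolored node $v$ is involved, claiming ``only $k=O(\log n)$ messages need be disseminated.'' But the lemma must serve \emph{all} uncolored $v\in K$ in the same $O(1)$ rounds, and each such $v$ samples its own hash function $h_v$ and its own index set $I_v$. Your prefix-OR aggregation is with respect to a single $h$; running it for each of the $|\uncolored{K}|=O(\log n)$ hash functions in parallel pushes the per-edge bitmap traffic to $\Theta(\sigma\cdot|\uncolored{K}|)=\Theta(\log^2 n)$ bits, breaking the $O(\log n)$ bandwidth budget. Likewise, the many-to-all broadcast must carry $O(\log^2 n)$ messages total (one per pair $(v,i)$), not $O(\log n)$---this still fits under $\Delta\ge\Omega(\log^{3.5}n)$, but your count is wrong and your appeal to the $k\log n$ and $k^3\log n$ preconditions is made with the wrong $k$. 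The paper sidesteps the per-$v$ aggregation entirely: it first uses many-to-all to broadcast every $h_v$ to all of $K$ ($O(\log n)$ messages), then assigns a distinct node $u_{v,i}\in K$ to each of the $O(\log^2 n)$ pairs $(v,i)$; each $u_{v,i}$ looks up a color $c\in\pal{K}$ with $h_v(c)=i$ via the random-group mechanism of \cref{lem:uniform-hash-color}, and a second many-to-all broadcast disseminates all $(v,i,c)$ triples. Your prefix-OR election is a reasonable alternative to the explicit node assignment, but as written it does not parallelize across the $O(\log n)$ distinct hash functions within the stated bandwidth.
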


\begin{proof}
Notice that we are considering the sames nodes as in \cref{lem:uniform-hash-color}.
Therefore, we implement the same algorithm to sample indices in $[\sigma]$.
In particular, \cref{eq:size-set} still holds, and for a large enough constant in $\sigma$, we have $|A_v\setminus B_v| \ge \alpha(1-8\beta)\sigma \ge \Theta(\log n) \ge \hatd(v)+1$.
To learn which color corresponds to which selected hash, we pick nodes $u_{v,i}$ in $K$, one of each pair $(v,i)$ with $v\in \uncolored{K}$ and $i\in A_v\setminus B_v$ selected by $v$. We use many-to-all broadcast (\cref{lem:many-to-all}) for 1) each $v\in\uncolored{K}$ to broadcast $h_v$, and 2) for each $u_{v,i}$ to broadcast message $(v,i,c)$ where $c\in\pal{K}$ such that $h_v(c)=i$. Since $|\uncolored{K}|=O(\log n)$ and each $v \in \uncolored{K}$ selects $O(\log n)$ indices, there are $O(\log^2 n)$ such pairs, which allows many-to-all broadcast to work when $\Delta \ge \Omega(\log^{3.5} n)$. Also note that $u_{v,i}$ learns such a $c\in\pal{K}$ the same way nodes did in \cref{lem:uniform-hash-color}.
\end{proof}

\paragraph{Very Dense Almost-Cliques.}
For very dense almost-clique, the $O(\log^2 n)$ bandwidth was only needed when learning which colors used by external neighbors were in the clique palette.
We reduce the bandwidth by reducing exponentially the number of bits needed to describe colors in $\pal{K}$.
We use pairwise independent hash functions.

\begin{definition}[Almost Pairwise Independent Hash-Functions {\cite[Problem 3.4]{Vadhan12}}]
A family of functions $\hashpwi$ mapping elements from $[N]$ to $[M]$ is $\delta$-almost pairwise independent if for every $x_1\neq x_2\in [N]$ and $y_1,y_2\in [M]$, we have
\[ \Pr(h(x_1)=y_1 \wedge h(x_2)=y_2) \le \frac{1+\delta}{M^2} \ . \]
There exists a family $\hashpwi$ of $\delta$-almost pairwise independent from $[N]$ to $[M]$ such that choosing a random function in $\hashpwi$ requires $O(\log\delta^{-1} + \log\log N + \log M)$ random bits.
\end{definition}

\begin{lemma}
\label{lem:learn-palette-very-dense-logn}
When $v\in K\in\Kvery$, there is a $O(\log\log n)$-round randomized algorithm (using $O(\log n)$ bandwidth) for all uncolored nodes to learn $\hatd(v)+1$ colors from their palettes.
\end{lemma}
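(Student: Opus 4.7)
The plan is to follow the structure of \cref{lem:learn-palette-very-dense-log2n}, replacing the single step that exceeded our bandwidth budget: learning which colors of the known subset of $\pal{K}$ are used by $v$'s external neighbors. The other subroutines invoked there---the call to \cref{lem:learn-colors-clique-palette}, the $O(\log n)$ parallel BFSes used to identify anti-neighbors of each uncolored node, and the many-to-all broadcast of anti-neighbor colors via \cref{lem:many-to-all}---already run in $O(\log\log n)$ rounds with $O(\log n)$ bandwidth and carry over unchanged.

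To compress the external-color information, each uncolored $v$ samples an almost pairwise-independent hash $h_v \colon [\Delta^2+1] \to [M]$ from $\hashpwi$ with $M = \Theta(\log^3 n)$ and $\delta = 1/2$. Such a function is describable in $O(\log\log n)$ bits, so $v$ broadcasts $h_v$ to its distance-$2$ neighbors in $O(1)$ rounds. Each external neighbor $w \in \Ntwo(v)\setminus K$ computes $h_v(\col(w))$ and routes this $O(\log\log n)$-bit value back to $v$ via a $2$-hop path, using random intermediate relays as in \cref{lem:random-groups} to balance load across $v$'s direct neighbors. Since $v$ is an inlier in a very dense clique, $e_v = O(\log n)$ by \cref{eq:inliers} and \cref{def:classification-acd}, so the total information destined for $v$ is $O(\log n \cdot \log\log n)$ bits and can be aggregated in $O(1)$ rounds within the $O(\log n)$ bandwidth. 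Node $v$ assembles the received set $H_v$ of hashes and removes from its candidate list any color $c$ with $h_v(c) \in H_v$.

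This filter has no false negatives, since a color used by some external $w$ is matched by $h_v(\col(w))\in H_v$. For false positives, any color $c$ not used by an external neighbor has its hash collide with a fixed element of $H_v$ with probability at most $(1+\delta)/M$, so the per-color false-positive probability is $O(e_v/M) = O(1/\log^2 n)$ by a union bound. With a candidate list of size $O(\log^2 n)$, the expected number of false positives is $O(1)$ and a Chernoff-type bound yields $O(\log n)$ with high probability. The size bookkeeping of \cref{lem:learn-palette-very-dense-log2n} then goes through: when $\pal{K}$ was learned in full, re-adding the anti-neighbor colors meets the lower bound $\hatd(v)+1$ with slack $\Theta(\log n)$ that absorbs the false positives; when only $D$ of size $\Theta(\log^2 n)$ was learned, $|D\cap\pal{v}|$ minus $O(\log n)$ false positives is still $\Omega(\log^2 n) \geq \hatd(v)+1$, since $\hatd(v) = O(\log n)$ at this point.

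The main obstacle will be organizing the back-routing of hashes so the $O(\log n)$ values destined for $v$ do not saturate any single intermediate edge; partitioning $v$'s distance-$1$ neighbors into random relay groups puts only $O(1)$ expected hashes on each edge, each of $O(\log\log n)$ bits, which fits comfortably. The remaining details---the exact parameter choice for $\hashpwi$, the aggregation pattern for the parallel BFSes and many-to-all broadcast of anti-neighbor colors, and the final bookkeeping on $|L(v)|$---are essentially identical to \cref{lem:learn-palette-very-dense-log2n}.
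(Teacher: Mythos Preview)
Your high-level plan matches the paper's: keep the skeleton of \cref{lem:learn-palette-very-dense-log2n} and replace only the external-color filtering step by a small hash. The difference is in how collisions are handled, and that is where two genuine gaps appear.

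First, the ``Chernoff-type bound'' on the number of false positives is unjustified. All the indicators $X_c = \mathbf{1}[h_v(c)\in H_v]$ depend on the \emph{single} random hash $h_v$; almost pairwise independence gives you neither mutual independence nor the higher moments needed for exponential tails. A second-moment argument is also not immediate, since bounding $\Exp[X_c X_{c'}]$ involves events on three or four inputs to $h_v$ simultaneously, which pairwise independence does not control.

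Second, and more fatal, in the case where the full $\pal{K}$ is learned there is \emph{no} $\Theta(\log n)$ slack to absorb false positives. The count in \cref{lem:learn-palette-very-dense-log2n} is tight:
\[
\bigl|(\pal{K}\cup\col(K\setminus\Ntwo(v)))\cap\pal{v}\bigr| \;\ge\; |\pal{K}|+\colored{a}_v-\colored{e}_v \;=\; \hatd(v)+1\ .
\]
In a very dense clique there is no colorful matching to appeal to (since $\avganti_K < C\log n$) and no extra degree slack you have established, so even a single false positive can push the list below $\hatd(v)+1$. Your second case ($D$ of size $\Theta(\log^2 n)$) does have room, but the first case does not.

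The paper sidesteps both issues by exploiting that $v$ already \emph{knows} the entire candidate set $L'(v)=\pal{K}\cup\col(K\setminus\Ntwo(v))$ at this point. Node $v$ therefore searches $\hashpwi$ locally for an $h_v$ that is \emph{injective on $L'(v)$}; such a function exists once $M$ is a large enough polynomial in $|L'(v)|$, by a union bound over the $O(|L'(v)|^2)$ pairs. Injectivity on $L'(v)$ guarantees that each colored external neighbor's hash matches at most one element of $L'(v)$, so the filtering step removes at most $\colored{e}_v$ colors and the tight count $|L(v)|\ge \hatd(v)+1$ holds exactly, with no probabilistic loss to manage.
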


\begin{proof}
We run first the algorithm of \cref{lem:learn-colors-clique-palette} to learn the clique-palette, which uses only $O(\log n)$ bandwidth. We describe the algorithm when nodes learn $\pal{K}$. The case where nodes learn a set $D \subset \pal{K}$ of $\Theta(\log^2 n)$ colors works the same with $L'(v)=D$ for all $v\in\uncolored{K}$ (ignoring colors of anti-neighbors).

Following the first steps of \cref{lem:learn-palette-very-dense-log2n}, an uncolored node knows the set $L'(v)\eqdef\pal{K}\cup\col(K\setminus \Ntwo(v))$ of $O(\log^2 n)$ colors constituted of $\pal{K}$ as well as of the $O(\log n)$ colors of its anti-neighbors.
Let $\hashpwi$ be a family of $\delta$-almost pairwise independent hash functions from $[\Delta^2+1]$ to $[M]$, with $\delta$ an arbitrary small constant and $M$ to be defined later.
By union bound, the probability that a random function in $\hashpwi$ has a collision on one pair $c,c'\in L'(v)$ with $c\neq c'$ is at most $\frac{(1+\delta)|L'(v)|^2}{M^2} < 1$ where the last inequality holds if $M=\Theta(|L'(v)|)=\Theta(\log^2 n)$.
Hence, for these choice of parameters, there exists a $h_v\in \hashpwi$ without collisions on the colors of $L'(v)$.
We update lists to $L(v)\eqdef \set{c\in L'(v): h_v(c)\notin h_v(\col(\Ntwo(v)\setminus K))}$ by removing colors hashing to the same value as one of the external neighbors.
If an external neighbor adopted a color $c\in L'(v)$, clearly $c\notin L(v)$, i.e., $L(v)\subseteq \pal{v}$.
Since each external neighbor removes at most one color from $L'(v)$ (because $h$ is collision free on $L(v)$), the set $L(v)$ has size at least $|\pal{K}| + \colored{a}_v - \colored{e}_v \ge \hatd(v) + 1$ (for the same reason as in \cref{lem:learn-palette-very-dense-log2n}).
This algorithm uses only $O(\log n)$ bandwidth because describing the hash function $h_v$ requires $O(\log\log n)$ bits and sending $e_v=O(\log n)$ hash values on an edge requires $O(\log n)\times O(\log \log n)$ bits, hence, $O(\log\log n)$ rounds.
\end{proof}

\bibliographystyle{alpha}
\bibliography{references}

\appendix

\section{Concentration Inequality}

We use the following variants of Chernoff bounds for dependent random variables. The first one is obtained, e.g., as a corollary of Lemma 1.8.7 and Theorems 1.10.1 and 1.10.5 in \cite{Doerr2020}.

\begin{lemma}[Martingales]\label{lem:chernoff}
Let $\{X_i\}_{i=1}^r$ be binary random variables, and $X=\sum_i X_i$.
Suppose that for all $i\in [r]$ and $(x_1,\ldots,x_{i-1})\in \{0,1\}^{i-1}$ with $\Pr\parens{X_1=x_1,\dots,X_r=x_{i-1}}>0$, $\Pr\parens{X_i=1\mid X_1=x_1,\dots,X_{i-1}=x_{i-1}}\le q_i\le 1$, then for any $\delta>0$,
\begin{equation}\label{eq:chernoffless}
\Pr\parens*{X\ge(1+\delta)\sum_{i=1}^r q_i}
\le \exp\parens*{-\frac{\min(\delta,\delta^2)}{3}\sum_{i=1}^r q_i}\ .
\end{equation}
Suppose instead that $\Pr\parens*{X_i=1\mid X_1=x_1,\dots,X_{i-1}=x_{i-1}}\ge q_i$, $q_i\in (0,1)$ holds for $i, x_1, \ldots, x_{i-1}$ over the same ranges, then for any $\delta\in [0,1]$,
\begin{equation}\label{eq:chernoffmore}
    \Pr\parens*{X\le(1-\delta)\sum_{i=1}^r q_i}\le \exp\left(-\frac{\delta^2}{2}\sum_{i=1}^r q_i\right)\ .
\end{equation}
\end{lemma}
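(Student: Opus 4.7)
The plan is to adapt the standard Chernoff--Hoeffding argument to the dependent (martingale) setting through the moment generating function (MGF) and an exponential Markov inequality. The key point is that although the hypotheses only control the \emph{conditional} distributions of $X_i$ given the history $X_1,\ldots,X_{i-1}$, this suffices to bound the unconditional MGF of $X=\sum_i X_i$ by peeling one factor at a time with the tower property.

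For the upper tail, I would fix $t>0$ and use the identity $e^{tX_i}=1+X_i(e^t-1)$ (valid since $X_i\in\{0,1\}$). Taking conditional expectation and using the hypothesis $\Pr(X_i=1\mid X_1,\ldots,X_{i-1})\le q_i$ gives
\[
\Exp\parens*{e^{tX_i}\mid X_1,\ldots,X_{i-1}}
\;\le\; 1+q_i(e^t-1)
\;\le\; \exp\parens*{q_i(e^t-1)}\ ,
\]
where the last step uses $1+x\le e^x$. Iterating this bound via the tower property from $i=r$ down to $i=1$ yields $\Exp[e^{tX}]\le \exp\parens*{\mu(e^t-1)}$ with $\mu=\sum_i q_i$. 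Markov's inequality then gives $\Pr(X\ge(1+\delta)\mu)\le e^{-t(1+\delta)\mu}\exp\parens*{\mu(e^t-1)}$, and choosing $t=\ln(1+\delta)$ produces the classical bound $\exp\parens*{-\mu\parens*{(1+\delta)\ln(1+\delta)-\delta}}$. One then invokes the standard real-variable inequality $(1+\delta)\ln(1+\delta)-\delta\ge \min(\delta,\delta^2)/3$ to recover \eqref{eq:chernoffless}.

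For the lower tail, I would run the symmetric argument with $t<0$. The identity $e^{tX_i}=1+X_i(e^t-1)$ now has $e^t-1<0$, so the opposite assumption $\Pr(X_i=1\mid X_1,\ldots,X_{i-1})\ge q_i$ again yields $\Exp\parens*{e^{tX_i}\mid X_1,\ldots,X_{i-1}}\le 1+q_i(e^t-1)\le \exp\parens*{q_i(e^t-1)}$. Iterating gives the same MGF bound, and Markov on $e^{-tX}$ with $t=-\ln(1-\delta)\ge 0$ gives $\Pr(X\le(1-\delta)\mu)\le \exp\parens*{-\mu\parens*{(1-\delta)\ln(1-\delta)+\delta}}$. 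The inequality $(1-\delta)\ln(1-\delta)+\delta\ge \delta^2/2$ for $\delta\in[0,1]$ yields \eqref{eq:chernoffmore}.

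The only real obstacle is bookkeeping the conditional expectations (to ensure the tower property is applied correctly, one reveals the $X_i$'s in order and treats the conditional Bernoulli parameter as a random quantity bounded by $q_i$) together with the two elementary analytic bounds on $(1\pm\delta)\ln(1\pm\delta)\mp\delta$; the upper-tail bound in particular requires splitting into $\delta\le 1$ and $\delta\ge 1$ to recover the $\min(\delta,\delta^2)/3$ form. Since this lemma is entirely standard, the paper simply cites Doerr's textbook, and the proof above serves only as a sketch making the statement self-contained.
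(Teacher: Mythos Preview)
Your proof is correct and follows the standard MGF route to Chernoff-type bounds in the martingale setting. The paper itself does not prove this lemma at all: it simply states that the result is obtained ``as a corollary of Lemma 1.8.7 and Theorems 1.10.1 and 1.10.5 in \cite{Doerr2020}'' and moves on, exactly as you anticipated in your final paragraph.
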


\section{Pseudo-Code}
\label{sec:pseudo-code}

In this section, we give pseudo-code for some of the basic routines of distributed coloring.

\begin{algorithm}[H]\label{alg:trycolor}
\caption{$\trycolor$(a node $v$, a color $c_v$)}
\DontPrintSemicolon
Send $c_v$ to $N(v)$.

If there is a $u\in \Ntwo(v)$ which has $\ID(u) < \ID(v)$ and $c_u=c_v$ then there is a relay $r\in N(v)\cap N(v)$ which knows about it. It informs $v$ and $u$ about their failure.

\textbf{if} $v$ did not fail, then it adopts the color $c_v$, i.e. define $\col(v)=c_v$.
\end{algorithm}

\begin{algorithm}\label{alg:slackgeneration}
\caption{\slackgeneration}
\DontPrintSemicolon
Each node chooses to be active independently with probability $\pg=1/20$\;
Active nodes sample $c_v\in[\Delta^2+1]$ uniformly at random and call $\trycolor(c_v)$ \;
\end{algorithm}

\section{Colorful Matching}
\label{sec:colorful-matching}

\begin{algorithm}[ht]
\caption{\matching (for a fixed clique $K$).}
\label{alg:matching}
\Input{a clique $K$ and a constant $0 < \beta < 1/(18\epsilon)$.}
\Output{a colorful matching $M$ of size $\beta\cdot\avganti_K$.}

\Repeat{$O(\beta)$ times}{
Sample each color with probability $p=1/(4\Delta)$. If a node sampled more than one  color, it drops all its colors. If the sampled color $c$ is used by a neighbor or by an edge of $M$, the node drops the color.

For each color we did not drop at the previous step, nodes keep their color if there is at least one anti-neighbor with this color. If the same color is used by two (or more) anti-edges, we keep the one with the smallest \ID.
}
\end{algorithm}

\cref{alg:matching} computes a colorful matching of size $\beta\avganti_K$ with probability $1-e^{\Theta(\avganti_K)}$.
We refer the reader to \cite{FGHKN22} for a proof of correctness.
A node can easily check if the color is already used by a node in the colorful matching (\cref{lem:free-color}).
We argue nodes can check if an anti-neighbor sampled the same color as them in $O(1)$ rounds.

We split the color space in ranges $R_1, \ldots, R_k$ with $k=\Theta(\Delta^2/\log n)$.
Observe that the set $T_i$ of nodes sampling a color in $R_i$ is a random group; hence, has size $\Theta(\log n)$ and 2-hop connects $K$.
Compute a BFS tree in each $T_i$.
Using bitmaps, each $u\in T_i$ learns which colors in $R_i$ are sampled in each tree of $T_i\setminus{u}$.
This is because relays of $u\in T_i$ know which color was sampled by $u$, so they can filter out its contribution to the bitmap (if needed).
Since anti-neighbors that sample the same color are in the same group, all nodes learn if there exists an anti-neighbor with the same color as theirs.
If more than one anti-edge has color $c$, using aggregation on the BFS tree in $T_i$, we can select the one with the smallest pair of \IDs.

Using the prefix-sum algorithm and random groups (\cref{lem:prefix-sum,lem:free-color}), an edge in $M$ can learn how many edges $M$ contains with a smaller color. In particular, they can adjust the size of $M$ to be exactly $\beta\avganti_K$ (in a case it contains more edges than needed).

\section{Almost-Clique Decomposition}
\label{sec:acd}

Almost-clique decomposition are commonly computed by classifying edges as \emph{friendly} (or \emph{un}friendly) if neighbors share a $(1-\epsilon)$ fraction of their neighborhoods.
Computing $\epsilon$-friendly edges exactly is too expensive; \cite{ACK19} showed it was enough to compute a weaker predicate distinguishing $\epsilon$-friends from non-$(c\cdot\epsilon)$-friends where $c > 1$ is some constant.
This predicate is called $\epsilon$-buddy and is the one implemented by \cite{HNT22} in \congest. A node with many buddies is called \emph{popular}, and the popular predicate is what classifies nodes as either sparse or dense.
We refer the reader to \cite[Appendix B in full version]{HKMN20} for more details on $\epsilon$-buddies.
\newcommand{\estimatesim}{\alg{EstimateSimilarity}}
One of the main technical contributions of \cite{HNT22} is a distributed algorithm \estimatesim:

\begin{lemma}[{Estimate Similarity, \cite[Lemma 2]{HNT22}}]
\label{lem:estimate-similarity}
Fix any $\epsilon > 0$. Suppose nodes $u$ and $v$ know respectively sets $S_u$ and $S_v$ from a universe $\calU$.
The randomized algorithm \estimatesim uses $O(1)$ rounds with messages of size $O(\log n/\epsilon^4 + \log\log|\calU| + \log\max(|S_u|,|S_v|))$ such that w.h.p.\ $u$ and $v$ know $|S_v\cap S_u|$ up to an error $\epsilon\max(|S_u|,|S_v|)$.
\end{lemma}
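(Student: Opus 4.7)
I would estimate $\card{S_u\cap S_v}$ via hash-based subsampling: have $u$ and $v$ agree on a shared random hash function $h:\calU\to[M]$ and a threshold $T\le M$, each restrict its set to the elements with hash value in $[T]$, and exchange the resulting compressed samples. The intersection size on the samples, rescaled by $M/T$, serves as the estimator of $\card{S_u\cap S_v}$.

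\paragraph{Algorithm.} First, $u$ and $v$ exchange $\card{S_u}$ and $\card{S_v}$ in a single $O(\log\max(\card{S_u},\card{S_v}))$-bit message and define $N\eqdef\max(\card{S_u},\card{S_v})$. Using public coins, or a short shared seed of $O(\log\log\card{\calU})$ bits sent in one round, they agree on a hash $h:\calU\to[M]$ with $M=\Theta(N/\eps^2)$, drawn from a family that is both $O(1)$-almost universal and almost $k$-wise independent for $k=\Theta(\log n)$. They fix $T\eqdef\Theta(\log n/\eps^4)$ and compute the bitmaps $b^u,b^v\in\set{0,1}^T$ given by $b^u_i\eqdef\mathbf{1}\event{\exists x\in S_u: h(x)=i}$ and similarly for $b^v$. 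They exchange $b^u$ and $b^v$ in a single round at a cost of $O(T)=O(\log n/\eps^4)$ bits, and each outputs $\widehat s\eqdef(M/T)\sum_{i\in[T]}b^u_i b^v_i$.

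\paragraph{Analysis sketch.} Absent hash collisions, $\sum_i b^u_i b^v_i$ equals the number of elements of $S_u\cap S_v$ whose hash lies in $[T]$, which has mean $(T/M)\card{S_u\cap S_v}$ and, with enough independence of $h$, is concentrated around its mean. The choice $T/M=\Theta(\log n/(\eps^2 N))$ makes the effective sample size $\Omega(\log n/\eps^2)$, which via a Chernoff-type bound gives additive deviation $\eps N\cdot (T/M)$, i.e.\ $\eps N$ after rescaling by $M/T$, with high probability.

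\paragraph{Main obstacle.} The hard part will be to reconcile three constraints simultaneously: (i) using a hash family whose seed fits in $O(\log\log\card{\calU})$ bits, (ii) extracting enough independence from that family to get Chernoff concentration at the prescribed sample size, and (iii) ensuring that the spurious bitwise-AND hits caused by hash collisions contribute at most $O(\eps N)$ to $\widehat s$. Items (i) and (ii) can be addressed by standard constructions of $\delta$-almost $k$-wise independent hash families, which reduce the seed length from $\Theta(k\log\card{\calU})$ to roughly $O(k+\log\log\card{\calU})$ at the cost of a small statistical distance absorbed into the Chernoff bound. For (iii), observe that under an $O(1)$-universal family with $M=\Theta(N/\eps^2)$, the expected number of colliding pairs $x\ne y$ in $S_u\cup S_v$ with $h(x)=h(y)\in[T]$ is $O(TN^2/M^2)=O(\log n)$, which inflates $\widehat s$ by $O(\log n)\cdot (M/T)=O(\eps^2 N)$, well within the $\eps N$ budget. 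Combining the Chernoff concentration with the collision bound then yields the claimed additive $\eps\max(\card{S_u},\card{S_v})$ guarantee.
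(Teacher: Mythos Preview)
The paper does not prove this lemma; it is cited from \cite{HNT22} and only the algorithm is briefly described in \cref{sec:acd}. That description matches your high-level idea: hash into a range of size $\Theta(\max(|S_u|,|S_v|)/\eps^2)$, keep the elements hashing below a threshold $\sigma=\Theta(\log n/\eps^4)$, and exchange the resulting $O(\sigma)$-bit fingerprint. So the overall architecture of your sketch is the intended one.

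There are two differences worth noting between your sketch and what HNT22 actually does (as reflected in this paper). First, HNT22 handles collisions \emph{structurally} rather than probabilistically: each node sends only the hashes of elements that are collision-free within its own set (the set $A\neg_h^{\le\sigma} B$ in the notation of \cref{sec:exists-rep-hash-function}), so the estimator is by construction a count of distinct elements and one only needs to bound how many elements are discarded. Second, HNT22 does not use almost $k$-wise independent families; it uses the \emph{representative hash functions} of \cref{lem:rep-hash-func}, a $\poly(n)\cdot\log|\calU|$-sized family shown to exist by the probabilistic method, where every single function in the family already satisfies the needed collision and sampling bounds for all relevant set pairs. This sidesteps the issue you flagged as the ``main obstacle'': no concentration argument is needed at use time because it was done once in the existence proof. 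It also explains the $\log\log|\calU|$ term cleanly (indexing into the family).

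Your route via $\delta$-almost $k$-wise independence is viable in principle, but the sketch has a genuine gap: you bound the number of colliding pairs only in expectation ($O(\log n)$), whereas the lemma requires a high-probability guarantee. With $k=\Theta(\log n)$-wise independence you can recover Chernoff-type tails for the \emph{sampling} part, but the collision count is a degree-2 polynomial in the hash outputs and needs its own concentration argument (e.g., higher moments or a hypercontractivity-style bound). Also, your initial claim that the shared seed is $O(\log\log|\calU|)$ bits is inconsistent with needing $\Theta(\log n)$-wise independence; your later $O(k+\log\log|\calU|)$ is the right order, and fits in the stated message budget only because of the separate $O(\log n/\eps^4)$ term.
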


At distance-1, by letting $S_u=N(u)$ and $S_v=N(v)$, this algorithm computes the $\epsilon$-buddy predicate, thereby the almost-clique decomposition.
To compute the estimates, each node $v$ samples a hash function $h_v$ which can be described with $O(\log n)$ bits.
It then compute $T_u$, the set of values $x\in S_u$ that 1) hash to a value $h(x) \le \Theta(\log n)$ and 2) collide with no other values in $S_u\setminus{x}$.
It then sends $h(T_u)$ to its neighbors.

For computing almost-clique decompositions, it was then remarked in \cite{FGHKN23} that this algorithm could be simplified by letting each node pick a random value in $[O(\Delta/\eps^4)]$ and using this value as ``hash'' instead of each node picking a hash function. As this process of letting each node pick a random value $h(v)$ is formally equivalent to picking a global random hash function $h:V \to [O(\Delta/\eps^4)]$, the analysis of \cref{lem:estimate-similarity} works immediately using this function instead of individually picked functions $h_u,h_v$. In \cite{FGHKN23}, this allowed for the computations of almost-clique decompositions with only broadcast communication.

The same observation essentially also allows for computing almost-clique decompositions at distance-2.
The only subtlety at distance-2 is that, since $v$ does not know it neighborhood, it can mistake two nodes hashing to the same value with having two paths to the same node.
This issue is solved by being very conservative and ignoring hashes received multiple times.
Note that, by doing so, we underestimate $|S_v| = d(v)$ by $\td(v)-d(v)$ because each time some node $u\in \Ntwo(v)$ with two distinct 2-paths to $v$ hashes to a value below $O(\log n)$, two relays send this value to $v$, making $v$ remove it from $T_v$.
A node with $\td(v)-d(v) \ge \epsilon\Delta^2$ will underestimate its number of friends, note however that such a node is sparse. Underestimating the number of friends of a sparse node does not cause any harm as even without such errors, it does not have enough friends to belong to an almost-clique.
On the other hand, a node with $\td(v) - d(v) \le \epsilon\Delta^2$ will lose only $\epsilon\Delta^2$ when estimating its number of friends.
Fortunately, this much error can be tolerated when we implement the $\epsilon$-buddy predicate.
The algorithm is summarized in \cref{alg:estimate-sim}.

\begin{algorithm}[ht]
\caption{\alg{Buddy} and \alg{Popular}}
\label{alg:estimate-sim}
Let $\calH$ be a family of representative hash functions $\calU\to[\lambda]$ with parameters
\[ \lambda \eqdef  8\Delta^2/\epsilon, \quad \beta\eqdef \epsilon/4, \quad \alpha \eqdef \epsilon^2/8 \ , \quad \sigma = \Theta(\eps^{-4}\log n) \ . \]

Each $v$ samples a random value $h(v) \in [\lambda]$ and sends it to its neighbors.

Each relay $r\in N(v)$ of $v$ computes $h(N_G(r))\cap[\sigma]$

Let $T_v = [\sigma] \cap \bigcup_{r\in N_G(u)} \parens*{ h(N_G(r)) \setminus h(\bigcup_{r'\neq r} N_G(r')) }$

Each $v$ broadcasts $T_v$.

A shared relay $r\in N(v)\cap N(u)$ computes $|T_u \cap T_v|\lambda/\sigma$ and declare $v$ and $u$ friends if it is at least $(1-\Theta(\epsilon))\Delta^2$.

Each $r\in N_G(v)$ sends to $v$ how many friends it detected in $N_G(r)$.

If $v$ received notice of at least $(1-\Theta(\epsilon))\Delta^2$ friends from its relays, it declares itself popular.
\end{algorithm}

\section{Proof of
\texorpdfstring{\cref{lem:rep-hash-func}}
{Lemma~\ref{lem:rep-hash-func}}}
\label{sec:exists-rep-hash-function}

\lemmaRepHashFunc*

\paragraph{Quick Review of HNT.}
We begin by reviewing the work of \cite{HNT22} (henceforth referred to as HNT) on which this part is heavily based.
For some function $h:\calU\to [\lambda]$ and sets $A,B\subseteq \calU$, they introduce sets
\begin{itemize}
\item $A|_h^{\le\sigma}\eqdef h^{-1}([\sigma])\cap A$ the values to hash $\le\sigma$ through $h$;
\item $A\wedge_h^{\le\sigma} B \eqdef \set{x\in A: h(x)\in [\sigma]\cap h(B\setminus\set{x})}$ the values $x\in A$ which collide with some $B\setminus\set{x}$ through $h$; and
\item $A\neg_h^{\le\sigma} B\eqdef (A|_h^{\le\sigma})\setminus (A\wedge_h^{\le\sigma} B)$ the values $x\in A$ to hash $\le\sigma$ without colliding with some element in $B\setminus\set{x}$.
\end{itemize}
Henceforth, we will only consider sets $A\subseteq B$, such that $|h(A\neg_h^{\le \sigma} B)|=|A\neg_h^{\sigma} B|$ (because each index in $[\sigma]$ has a unique pre-image in $A$).
Note that $|A\neg_h^{\le\sigma} B| \ge |A_h^{\le\sigma}| - |A\wedge_h^{\le\sigma} B|$.
They key lemma from HNT is the following: (we will use a $\beta'\neq\beta$ from \cref{lem:rep-hash-func} to be defined precisely later)

\begin{lemma}[Claim 1 from HNT]
\label{lem:hnt}
Let $\alpha,\beta',\nu\in (0,1)$ and $\lambda$ such that $
\alpha \le \beta'$ and $\lambda \ge \Omega(\alpha^{-1}\beta'^{-2}\log(1/\nu))$.
There exists a $\sigma = \Theta(\beta'^{-2}\alpha^{-1}\log(1/\nu)) \le \lambda$ such that
if $A,B\subseteq \calU$ verify $|A|,|B|\le\beta\lambda$ and $|A|\ge\alpha\lambda$, then
a random function $h:\calU\to[\lambda]$ verify that $|A\neg_h^{\le\sigma} B| \ge \frac{\sigma|A|}{\lambda}(1-3\beta')$ with probability $1-\nu/2$.
\end{lemma}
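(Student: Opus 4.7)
The proof rests on the decomposition $\card{A\neg_h^{\le\sigma} B} = \card{A|_h^{\le\sigma}} - \card{A\wedge_h^{\le\sigma} B}$, from which it suffices to establish two one-sided concentration bounds, each with failure probability at most $\nu/4$: (i) a lower tail $\card{A|_h^{\le\sigma}} \ge (1-\beta')|A|\sigma/\lambda$, and (ii) an upper tail $\card{A\wedge_h^{\le\sigma} B} \le 2\beta'|A|\sigma/\lambda$. Subtracting yields $\card{A\neg_h^{\le\sigma} B} \ge (1-3\beta')|A|\sigma/\lambda$ with probability at least $1-\nu/2$, which is the claim. Both bounds will rely on the hypothesis $\sigma \ge \Omega(\beta'^{-2}\alpha^{-1}\log(1/\nu))$ which is calibrated so that Chernoff deviations of size $\beta'$ around means of order $\alpha\sigma$ and $\beta'\alpha\sigma$ fail with probability $\poly(\nu)$.

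\textbf{The easy bound (i).} Since $h$ is a uniformly random function on $\calU$, the events $\mathbf{1}[h(x)\in[\sigma]]$ for $x\in A$ are mutually independent Bernoullis of mean $\sigma/\lambda$; their sum $\card{A|_h^{\le\sigma}}$ has expectation $\mu \eqdef |A|\sigma/\lambda \ge \alpha\sigma = \Omega(\beta'^{-2}\log(1/\nu))$. A multiplicative Chernoff bound (\cref{lem:chernoff}) with relative deviation $\beta'$ gives failure probability $\exp(-\Omega((\beta')^2\mu)) = \exp(-\Omega(\log(1/\nu))) \le \nu/4$.

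\textbf{The hard bound (ii), and main obstacle.} The indicators $\mathbf{1}[x\in A\wedge_h^{\le\sigma} B]$ for $x\in A$ are positively correlated through shared hash values of $B$: they are not amenable to a direct Chernoff, and McDiarmid's inequality is useless because a single change of $h(z)$ for $z\in B$ can flip the collision status of every element of $A$ sharing its bucket, shifting the sum by $\Omega(|A|)$. I decouple the correlation by conditioning on the restriction of $h$ to $B$: the ``occupied'' set $C\eqdef[\sigma]\cap h(B)$ then becomes deterministic, and since $x\in A\wedge_h^{\le\sigma} B$ implies $h(x)\in C$, it suffices to upper-bound $\card{\set{x\in A: h(x)\in C}}$. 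I now chain two Chernoff bounds. First, $\card{C}$ is dominated by a sum of $|B|$ independent Bernoullis of mean $\sigma/\lambda$, of expectation $\le \beta'\sigma$, so $\card{C}\le (1+\beta')\beta'\sigma$ with failure probability $\le \nu/8$. Second, conditionally on $h|_B$, the hash values $h(x)$ for $x\in A\setminus B$ remain i.i.d.\ uniform in $[\lambda]$, so the count of those landing in $C$ is a sum of independent Bernoullis of mean $\card{C}/\lambda$ and expectation $\le (1+\beta')\beta'|A|\sigma/\lambda$; another Chernoff (or its upper-tail version when the expectation is small, using that the target $2\beta'|A|\sigma/\lambda \ge \Omega(\beta'^{-1}\log(1/\nu))$ absorbs any constant slack) gives the bound $\le 2\beta'|A|\sigma/\lambda$ with failure probability $\le \nu/8$. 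The residual contribution from $x\in A\cap B$ is fully determined by $h|_B$; under the convention $A\subseteq B$ recorded just before the lemma, it equals the number of $y\in B$ in multiply-occupied $[\sigma]$-buckets, which is controlled by the same $\card{C}$-Chernoff (a bucket must already be at least singly occupied in $C$ to be multiply occupied). Union-bounding over all four failure events completes the proof.
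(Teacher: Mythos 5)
First, for calibration: the paper itself gives no proof of this statement — it is imported verbatim as Claim~1 of \cite{HNT22} and used as a black box in the proof of \cref{lem:rep-hash-func} — so your attempt has to stand on its own. Its skeleton is reasonable: the split $\card{A\neg_h^{\le\sigma}B}=\card{A|_h^{\le\sigma}}-\card{A\wedge_h^{\le\sigma}B}$, the Chernoff lower tail (i), and the treatment of $x\in A\setminus B$ by conditioning on $h|_B$ (including the padding remark that $2\beta'\card{A}\sigma/\lambda\ge\Omega(\beta'^{-1}\log(1/\nu))$) are all correct.

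The genuine gap is the contribution of $x\in A\cap B$, which you dispatch in one sentence. Two problems. (a) Quantitatively, you replace it by the number of elements of $B$ lying in multiply-occupied $[\sigma]$-buckets. Even granting concentration, that quantity can have expectation of order $\card{B}^2\sigma/\lambda^2\le\beta'^2\sigma$, while your budget for (ii) is $2\beta'\card{A}\sigma/\lambda$, which can be as small as $2\beta'\alpha\sigma$; when $\card{A}\ll\card{B}$ (i.e., $\alpha\ll\beta'$) the surrogate overshoots the target by a factor up to about $\beta'/\alpha$, so the bound fails already in expectation — the count must be kept proportional to $\card{A}$, not $\card{B}$. (b) More fundamentally, ``controlled by the same $\card{C}$-Chernoff'' is not an argument: what you proved bounds the number of \emph{occupied buckets} $\card{C}$, not the number of \emph{elements} sitting in buckets hit at least twice (one bucket can hold many elements), and after conditioning on $h|_B$ these collision indicators are deterministic, i.e., exactly the dependent quantities your decoupling was supposed to neutralize — you have said nothing about their distribution. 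This is not a corner case: in the paper's only application (\cref{lem:rep-hash-func}, used in \cref{lem:uniform-hash-color}) one takes $A=T\subseteq B=T\cup P$, so $A\cap B=A$, your second Chernoff is vacuous, and the whole burden falls on the step you hand-waved.

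A repair needs one more honest concentration step, for instance: bound $\card{A\wedge_h^{\le\sigma}B}$ by the collisions of $A$ with $B\setminus A$ (your conditioning argument, now on $h|_{B\setminus A}$, under which the hashes of all of $A$ are still i.i.d.) plus the collisions internal to $A\cap B$; for the latter, expose the hashes of $A\cap B$ sequentially, observe that the count is at most twice the number of elements landing in an already-occupied $[\sigma]$-bucket, bound each conditional probability by roughly $(1+\beta')\beta'\sigma/\lambda$ on the high-probability event that at most $(1+\beta')\beta'\sigma$ of these elements hash into $[\sigma]$, and apply the martingale bound of \cref{lem:chernoff}; then re-tune the constants (deviations $\beta'/2$, say) so the total stays below $3\beta'\card{A}\sigma/\lambda$. (Minor: the statement's condition $\card{A},\card{B}\le\beta\lambda$ should read $\beta'\lambda$, as you implicitly assumed.)
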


\paragraph{Proof of \cref{lem:rep-hash-func}.}
\cref{lem:hnt} states that a random function essentially verify our requirements.
We now give the existential argument for the existence of the family $\hashrep$ described in \cref{lem:rep-hash-func}.

Sample $F$ functions $h_1, h_2, \ldots, h_F:[\lambda]\to \calU$ uniformly at random.
Let $\calH=\set{h_1, \ldots, h_F}$.
For each pair $(x,y)\in \calU\times [\lambda]$, we define the set $\calH(x,y)\eqdef\set{h\in\calH\text{ such that } h(x)=y}$.
The following fact is a straightforward consequence of the uniformity of functions and Chernoff bound.

\begin{fact}
\label{fact:prop1}
For any pair $(x,y)\in [\lambda]\times \calU$, w.p. $1-e^{-\Theta(F/\lambda)}$, we have $|\calH(x,y)| \in \frac{1\pm 1/2}{\lambda}F$.
\end{fact}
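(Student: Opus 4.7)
The plan is to apply a standard Chernoff bound directly to the random variable $|\calH(x,y)|$. Fix any pair $(x,y)\in \calU \times [\lambda]$. For each $i\in [F]$, let $X_i$ be the indicator random variable of the event $\set{h_i(x)=y}$. Since $h_i$ is sampled uniformly at random among functions $\calU \to [\lambda]$, the value $h_i(x)$ is uniform on $[\lambda]$, and so $\Pr(X_i=1) = 1/\lambda$. Furthermore, because the $F$ functions are sampled independently, the variables $X_1, \ldots, X_F$ are mutually independent. By definition of $\calH(x,y)$, we have $|\calH(x,y)| = \sum_{i=1}^F X_i$, which is a sum of i.i.d.\ Bernoulli variables with mean $\mu \eqdef F/\lambda$.

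Next I apply \cref{lem:chernoff} with $\delta = 1/2$ and $q_i = 1/\lambda$ (these conditional probability hypotheses are trivially satisfied here by independence). The upper-tail bound of \cref{eq:chernoffless} gives
\[
\Pr\parens*{\,|\calH(x,y)| \ge (3/2)\cdot F/\lambda\,} \le \exp\parens*{-\tfrac{1}{12}\cdot F/\lambda}\ ,
\]
while the lower-tail bound \cref{eq:chernoffmore} gives
\[
\Pr\parens*{\,|\calH(x,y)| \le (1/2)\cdot F/\lambda\,} \le \exp\parens*{-\tfrac{1}{8}\cdot F/\lambda}\ .
\]
A union bound over the two tails yields $\Pr\parens*{|\calH(x,y)| \notin \tfrac{1\pm 1/2}{\lambda} F} \le 2\exp(-\Theta(F/\lambda))$, which is $e^{-\Theta(F/\lambda)}$, as claimed.

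There is essentially no obstacle: the only point worth double-checking is that the independence hypothesis required to conclude concentration is met. This is immediate from the construction, since $h_1,\ldots, h_F$ are drawn independently from the uniform distribution over functions $\calU \to [\lambda]$, and the event $X_i=1$ depends only on $h_i$. The parameter choice $\delta=1/2$ is what produces the multiplicative $1\pm 1/2$ window in the statement; any constant $\delta \in (0,1)$ would give an analogous bound, with the constant hidden in the $\Theta(\cdot)$ absorbing the change.
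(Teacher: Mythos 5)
Your proof is correct and follows exactly the argument the paper intends: the paper dismisses this fact as "a straightforward consequence of the uniformity of functions and Chernoff bound," which is precisely your decomposition of $|\calH(x,y)|$ into i.i.d.\ Bernoulli indicators with mean $1/\lambda$ followed by a two-sided Chernoff bound with $\delta=1/2$. The constants you extract from \cref{lem:chernoff} are right, and the independence justification is all that was needed.
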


Fix a pair $(x,y)\in [\lambda]\times\calU$ and condition on an arbitrary set $\calH(x,y)$ of size at least $F/(2\lambda)$.
Let $T, P\subseteq \calU$ be any set such that $|T| \ge \alpha\lambda$ and $|T|,|P|\le \beta\lambda$ (such as required by \cref{lem:rep-hash-func}).
We define $A\eqdef T$ and $B\eqdef T\cup P$, such that $|A|\ge \alpha\lambda$ and $|B| \le 2\beta\lambda \le \beta'\lambda'$ where $\beta'\eqdef2\beta$.
The restriction $h_{|\calU\setminus\set{x}}$ of $h$ is a uniform function $\calU\setminus\set{x}\to[\lambda]$; hence, \cref{lem:hnt} applies.
When adding the fixed value $h(x)=y$, we remove at most one hash value. Hence, we have the bound $|h(T)\setminus h(P)\cap[\sigma]|\ge|A\neg_h^{\sigma} B| \ge \frac{\sigma|A|}{\lambda}(1-3\beta')-1\ge \frac{\sigma|A|}{\lambda}(1-4\beta')=\frac{\sigma|A|}{\lambda}(1-8\beta)$ for $\sigma=\Theta(\beta'^{-2}\alpha^{-1}\log 1/\nu)$ large enough.
Since a function is bad for sets $A,B$ with probability at most $\nu/2$ and $|\calH(x,y)| \ge F/(2\lambda)$, the Chernoff bound gives the following:

\begin{fact}
\label{fact:prop2}
The number of functions in $\calH(x,y)$ to verify \cref{eq:hash-func} is $(1-\nu)|\calH(x,y)|$ with probability $1-e^{-\Theta(\nu F/\lambda)}$.
\end{fact}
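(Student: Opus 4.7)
The plan is to argue existence by the probabilistic method, piggy-backing on Claim~1 of HNT (restated here as \cref{lem:hnt}), which already controls the behaviour of a \emph{single} uniformly random $h : \calU \to [\lambda]$ for one fixed pair of sets. Concretely, I would sample $F$ functions $h_1, \ldots, h_F$ independently and uniformly from $[\lambda]^{\calU}$, set $\calH = \{h_1, \ldots, h_F\}$, and show that with positive (in fact overwhelming) probability, this family meets both requirements simultaneously.

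For Property~2, since each $h_i(x)$ is uniform in $[\lambda]$, the indicator that $h_i(x) = y$ has expectation $1/\lambda$, so $\Exp[|\calH(x,y)|] = F/\lambda$. A standard Chernoff bound gives $|\calH(x,y)| \in \tfrac{1\pm 1/2}{\lambda} F$ with probability $1 - e^{-\Omega(F/\lambda)}$; taking $F/\lambda = \Omega(\lambda \log|\calU|)$ (which is subsumed by the stated $F = \Theta(\beta \lambda^2 \log|\calU| \poly(n))$) lets us union bound over all $\lambda|\calU|$ pairs $(x,y)$.

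For Property~1, fix $(x,y)$ and any admissible $(T,P)$, and condition on an arbitrary realisation of $\calH(x,y)$ with $|\calH(x,y)| \geq F/(2\lambda)$. For each $h \in \calH(x,y)$, the restriction $h|_{\calU\setminus\{x\}}$ is uniform, so I would invoke \cref{lem:hnt} with $A = T\setminus\{x\}$, $B = (T\cup P)\setminus\{x\}$, $\beta' := 2\beta$ (to absorb $|B|\le 2\beta\lambda$), and $\nu = 1/\poly(n)$ (chosen so that the inflated $\sigma = \Theta(\beta'^{-2}\alpha^{-1}\log(1/\nu)) = \Theta(\beta^{-2}\alpha^{-1}\log n)$ matches the statement). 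This yields $|A \neg_h^{\le\sigma} B| \ge \tfrac{\sigma|A|}{\lambda}(1-3\beta')$ with probability $1-\nu/2$. The key combinatorial observation — and the main bridge between HNT's language and ours — is that disjointness of $T$ and $P$ implies the injection $h(A \neg_h^{\le\sigma} B) \hookrightarrow (h(T)\setminus h(P))\cap[\sigma]$, so the same lower bound transfers up to the single element $x$ being added back, costing at most one from the count. With $\beta'=2\beta$ and a conservative $-1$ absorbed, this gives the $(1-8\beta)$ factor claimed.

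Finally, I would apply Chernoff a second time over $h \in \calH(x,y)$: the fraction of ``good'' functions in this conditioned pool is $\ge 1-\nu/2$ in expectation, so at least $(1-\nu)|\calH(x,y)|$ of them are good with probability $1 - e^{-\Omega(\nu F/\lambda)}$. Taking $F$ polynomial in $n$, $\lambda$, and $|\calU|$, a union bound over all pairs $(x,y)\in\calU\times[\lambda]$ and all admissible $(T,P)$ (bounded by $|\calU|^{O(\beta\lambda)}$, hence absorbed into the $\poly(n)$ factor in $F$ together with the choice $\nu=1/\poly(n)$) completes the existence argument. The main obstacle I anticipate is exactly the translation step in the third paragraph: HNT's statement counts collision-free preimages in $[\sigma]$ of a single set, whereas we need to separate the contributions of $T$ and $P$; the disjointness hypothesis $T\cap P=\emptyset$ together with setting $B = T\cup P$ is what makes the reduction clean, and the small loss from re-introducing the conditioned point $x$ must be carefully folded into the $3\beta' \leadsto 8\beta$ slack.
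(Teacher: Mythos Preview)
Your proposal is correct and follows essentially the same route as the paper: sample $F$ uniform functions, condition on $\calH(x,y)$ having size at least $F/(2\lambda)$, observe that each $h\in\calH(x,y)$ restricted to $\calU\setminus\{x\}$ is still uniform so \cref{lem:hnt} makes it bad with probability at most $\nu/2$, and then apply Chernoff over the $\Omega(F/\lambda)$ independent functions in $\calH(x,y)$. Your treatment of the translation step (setting $A=T$, $B=T\cup P$, $\beta'=2\beta$, and absorbing the single fixed point $x$ into the slack $3\beta'\leadsto 8\beta$) and the subsequent union bound over $(x,y)$ and $(T,P)$ also mirror the paper's argument for the surrounding lemma.
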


Define $\evt_1(x,y)$ the event that \cref{fact:prop1} fails and $\evt_2(x,y)$ the event that \cref{fact:prop2} fails.
We show that with probability strictly less than 1 that no $\evt_1(x,y)$ nor $\evt_2(x,y)$ occurs.
A simple union bound gives
\begin{align*}
    \Pr(\exists x,y:~\evt_1(x,y)~\vee~\evt_2(x,y))
    &\le \sum_{x,y} 2\Pr(\evt_1(x,y)) + \Pr(\evt_2(x,y)~|~\overline{\evt_1}(x,y)) \\
    &\le \lambda|\calU| \parens*{2e^{-\Theta(F/\lambda)} + |\calU|^{\Theta(\beta\lambda)} e^{-\Theta(\nu F/\lambda)}} < 1
\end{align*}
for some large enough $F=\Theta(\lambda^2\nu^{-1}\log|\calU|)$.
By the probabilistic method, there exists a family $\hashrep$ such as described in \cref{lem:rep-hash-func}.

\section{Low Degree Algorithm}
\label{sec:appendix-small-degree}

We explain in this section how results in
\cite{HKMN20,MPU23,GK21} are combined to obtain an $O(\log^5 \log n)$ algorithm for $\poly(\log n)$-sized instances of deg+1-list-coloring on square graphs, as stated in \cref{lem:small-degree}.

\lowDegColoring*

The algorithm of \cite{HKMN20} first reduces to two instances of maximum degree $O(\log n)$, similar to \cite{BEPS}, using $O(\log \log n)$ color tries. As the instances to solve are connected in $G^2$ rather than $G$, they then add some relay nodes (which they call Steiner nodes) to each instance to make them connected in $G$. These relays are few enough that they do not increase much the size of the instances. A distance-2 network decomposition of this graph is then computed, which is where recent results on network decompositions first improve the runtime of \cite{HKMN20}.

The distance-2 network decomposition can be computed in $\widetilde{O}(\log^3 \log n)$ \congest rounds, using a recent result of \cite{MPU23} adapting the network decomposition of \cite{GGHIR23} to power graphs with a mild dependency on the ID space. At this point, we can iterate through the colors of the network decomposition, and color each cluster of it following the same method as in \cite{HKMN20}.
In each cluster, nodes reduce their color space by choosing a hash function from $[\Delta^2+1]$ to $\poly(\log n)$ using a standard derandomization technique (method of conditional expectation). Each cluster then colors itself using a deterministic algorithm for degree+1-list-coloring. Recent improvements for this problem allow this step to run in $O(\log^4 \log n)$ within each cluster, where one $O(\log \log n)$ factor comes from simulating the algorithm of~\cite{GK21} in the distance-2 setting. More precisely, the algorithm only needs $O(\log^3 \log n)$ communication rounds on $G^2$ with $O(\log \log n)$ bandwidth, which can be implemented in $O(\log^4 \log n)$ with $G$ as communication graph, with $\Theta(\log n)$ bandwidth and using that the degree is at most $O(\log n)$.

Putting everything together, the total cost of the algorithm is $O(\log^5 \log n)$. The complexity from computing the network decomposition~\cite{GGHIR23,MPU23} is dominated by that of coloring the decomposition's clusters afterwards, decomposed as follows: one $O(\log \log n)$ from iterating through the colors of the network decomposition, $O(\log^3 \log n)$ to run \cite{GK21} in each cluster, with an $O(\log \log n)$ overhead to run the algorithm with communication graph $G$ instead of~$G^2$.


\end{document}